\newtheorem{lemma}{Lemma}
\newtheorem{remark}{Remark}
\crefname{section}{Section}{Sections}
\crefname{equation}{Equation}{Equations}
\crefname{figure}{Figure}{Figures}
\crefname{table}{Table}{Tables}
\crefname{appendix}{Supplementary section}{Supplementary section}
\crefname{theorem}{Theorem}{Theorems}
\crefname{thm}{Theorem}{Theorems}
\crefname{cor}{Corollary}{Corollaries}
\crefname{lemma}{Lemma}{Lemmas}
\crefname{proposition}{Proposition}{Propositions}
\crefname{definition}{Definition}{Definitions}
\crefname{algorithm}{Algorithm}{Algorithms}
\crefname{remark}{Remark}{Remarks}
\let\autoref\cref
\newcommand{\RR}{\mathbb{R}}
\newcommand{\EE}{\mathbb{E}}
\newcommand{\VV}{\mathbb{V}}
\newcommand{\PP}{\mathbb{P}}
\newcommand{\Sum}{\displaystyle\sum}
\newcommand{\Frac}{\displaystyle\frac}
\newcommand{\by}{\boldsymbol{y}}
\newcommand{\bPhi}{\boldsymbol{\Phi}}
\newcommand{\bmnls}{\beta_{\mathrm{MNLS}}}
\newcommand{\fmnls}{f_{\mathrm{MNLS}}}
\begin{document}

\preprint{APS/123-QED}

\title{When Quantum and Classical Models Disagree: Learning Beyond Minimum Norm Least Square}

\author{Slimane Thabet }
\email{slimane.thabet@gmail.com}

\affiliation{LIP6, CNRS, Sorbonne Université, Paris, France}
\affiliation{PASQAL SAS, Massy, France}
\author{Léo Monbroussou}
\affiliation{LIP6, CNRS, Sorbonne Université}
\affiliation{Naval Group, 83190 Ollioules, France}
\author{Eliott Z. Mamon}
\affiliation{LIP6, CNRS, Sorbonne Université, Paris, France}
\author{Jonas Landman}
\affiliation{School of Informatics, University of Edinburgh, Edinburgh, United Kingdom}
\affiliation{DIENS, École Normale Supérieure, PSL University, CNRS, INRIA, Paris, France}

\date{\today}
\begin{abstract}
    Quantum Machine Learning algorithms based on Variational Quantum Circuits (VQCs) are important candidates for useful application of quantum computing. It is known that a VQC is a linear model in a feature space determined by its architecture. Such models can be compared to classical ones using various sets of tools, and surrogate models designed to classically approximate their results were proposed. At the same time, quantum advantages for learning tasks have been proven in the case of discrete data distributions and cryptography primitives. In this work, we propose a  framework to avoid Random Feature approximation techniques. Using previous results, we establish conditions on the weight vectors of the quantum models that are necessary to avoid these dequantization methods. We show that this theory is compatible with previously proven quantum advantages on discrete inputs, and provides examples of advantages for continuous inputs. This separation is connected to large weight vector norm, and we suggest that this can only happen with a high dimensional feature map. Our results demonstrate that it is possible to design quantum models that cannot be classically approximated with good generalization. In addition, we provide a method to verify that the necessary condition is respected for a quantum model. Finally, we discuss how concentration issues must be considered to design such instances. We expect that our work will be a starting point to design near-term quantum models that avoid dequantization methods by ensuring non-classical convergence properties, and to identify existing quantum models that can be classically approximated.
\end{abstract}

\maketitle

\section{Introduction}

Machine learning is a heavily explored area in the search of applications for quantum computers \cite{schuld2018supervised, cerezo2022challenges, biamonte2017quantum}. Some work has focused on using quantum computers as hardware accelerators of classical machine learning routines \cite{monbroussou2024QCNN, kerenidis2019q, kerenidis2016quantum}, mainly leveraging quantum linear algebra protocols \cite{harrow2009quantum, gilyen2019quantum}. Another heavily explored area is the use of variational quantum circuits (VQCs) \cite{cerezo2021variational} to learn some functions of the data \cite{schuld2020circuit}. The initial ideas of variational quantum machine learning (QML) research \cite{schuld2020circuit, havlivcek2019supervised} were that the advantage of quantum computing for machine learning would be to look for models in high dimensional feature spaces, exponentially larger than the initial dimension of the data, and the size of the dataset. %

\begin{figure*}
\begin{center}
\includegraphics[width=0.95\linewidth]{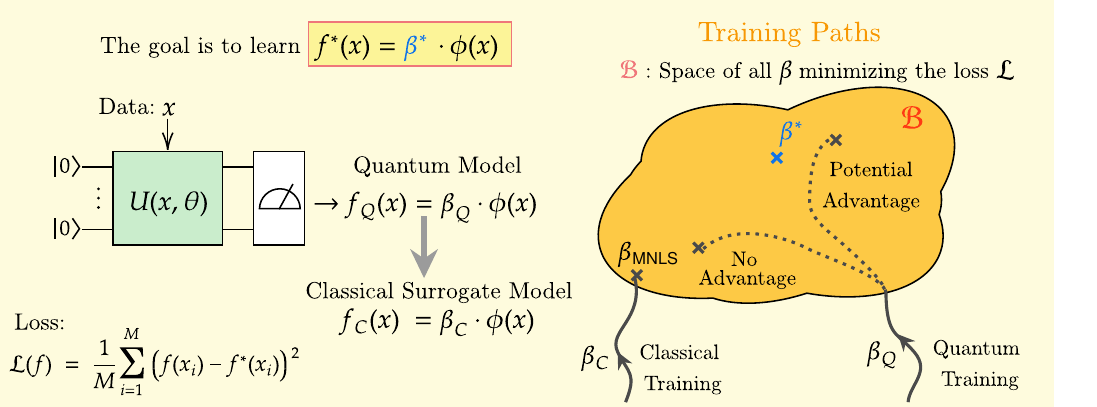}
\caption{A quantum model $f_Q(x) = \beta_Q(\theta)\cdot\phi(x)$ is trained by optimizing its weight vector $\beta_Q(\theta)$. If one can train a surrogate model on a classical computer, using the same (or approximated) feature map $\phi(x)$, it would constitute an obstacle to quantum advantage. It has been shown that during classical linear regression, the weight vector converges towards a specific point $\beta_{\mathrm{MNLS}}$ called the \textit{minimum norm least squares} estimator. Ensuring that the quantum weight vector $\beta_Q$ converges far from $\beta_{\mathrm{MNLS}}$ is therefore a necessary condition to avoid such dequantization.}
\label{fig:Introduction}
\end{center}
\vspace{-2em}
\end{figure*}

It is known that VQCs are linear models in some \textit{feature space} where each coordinate can be computed classically \cite{schuld2021effect}.
At first sight, if this feature map can be explicitly computed \cite{schuld2021supervised} classically, one may wonder what is the interest of searching the best parameters of the quantum circuit instead of performing classically a linear regression on the same feature map, using a so called \textit{classical surrogate} model \cite{schreiber2023classical}. Even when the feature space is growing exponentially with the number of qubits, and is thus too large to be computed classically, methods exist to reduce its dimension by random sampling, or Random Feature (RF) regression, realizing approximated classical models \cite{landman2022classically, sweke2023potential, sahebi2025dequantization}. 

Knowing when these classical models can mimic the quantum ones is crucial to understand potential quantum advantage in such learning tasks. 
As related work, in \cite{you2023analyzing}, the authors analyse the optimization dynamics of quantum neural networks and conclude that they are different from the neural tangent kernel. They study in detail the convergence rate of the respective methods, but do not study the actual solutions reached. Using feature maps with parity functions on bitstrings, the authors in \cite{jerbi2023quantum} study the fact that variational circuits can converge to a different solution than the kernel ridge regression. They point out that there exists functions that are learnable with VQCs but that require exponentially more resources to learn with the kernel associated with the feature map. 
In this work, we present necessary conditions for a quantum model to avoid  Random Feature (RF) based dequantization, and we highlight that this could be only satisfied for high dimensional feature maps. We present an extension of the RF approximation techniques that could be applied to any variational circuits with Hamiltonian encoding, with our without classical pre-processing of the classical input. Our study can be applied to any quantum circuits with continuous or discrete inputs, and propose conditions that guarantee a quantum model to remain \textit{far} from its equivalent classical model. For that, we use the important fact that classical linear regression tends to make the optimized weight vector converge towards a specific solution called \textit{minimum norm least square} (MNLS) estimator. Our study focuses on showing when the quantum weight vector doesn't possess the same bias. \autoref{fig:Introduction} summarizes our methodology. 

Those conditions are then analyzed for several usual frameworks and architectures, showing that our methodology can be seen as a new tool for one to rule out some quantum circuits of their choice. 
Furthermore, we propose a method that allow in some cases for the experimental verification of these conditions.
Using Weingarten calculus, we show that some proposed quantum models can be far from the MNLS. We also show that cryptographic examples with discrete input that were shown far from their classical counterparts \cite{jerbi2023quantum, jerbi2023shadows, liu2021rigorous, gyurik2023exponential} respect our condition on the weight vector norm. In addition, we study the link between these dequantization schemes and concentration, another crucial issue of quantum circuits. We prove that there should exist a family of models with continuous inputs that avoids both of these problems.

\section{Results}

\subsection{Setup and notations}
We consider general forms of quantum machine learning models, or \textit{quantum models} that can be expressed as
\begin{equation}
    f_Q(x) = \text{Tr}(U(x;\: \theta) ^\dagger O U(x;\: \theta) |0^n\rangle\langle 0^n|)
\end{equation}
where $U(x;\: \theta)$ is a unitary dependent on the input data $x \in \RR^d$ and trainable parameters $\theta$.
We consider that $U$ can be expressed as 
\begin{equation}
    U(x;\: \theta) = \prod_{\ell=1}^L \exp(-i\:g_\ell(x) \: H_l) \: V_\ell(\theta_\ell)
\end{equation} where $H_\ell$s, $g_\ell$s and $\theta_\ell$s are respectively encoding hamiltonians, preprocessing functions, and trainable parameters.
It is known that the quantum models of this family can be expressed as linear models in a given feature space. That is, there exists a \textit{feature space} $\RR^p$ (for some $p\geq1$) and a \textit{feature map} $\phi: \RR^d \longrightarrow \RR^p$ such that the quantum model can be written as
\begin{equation}
    f_Q(x) = \beta_Q^{\top} \, \phi(x) \,,
\end{equation}
with a $\theta$-dependent \textit{weight vector} $\beta_Q^{\top}$.

We are given a \textit{training dataset} of size $M$, consisting of $M$ input points $(x_1, \dots, x_M)$ assumed to have been sampled from some distribution $\mu$ on $\RR^d$, and $M$ scalar targets $(y_1, \dots , y_M)$.
This training inputs in feature space form the \textit{data matrix} $\Phi \in \RR^{M\times p}$ (as $[\Phi]_{ij}=\phi(x_i)_j$), while the outputs $y_i$ form the vector of targets $\by \in \RR^{M\times 1}$.

We assume that the goal is to learn a target function linear in the same feature space
\begin{equation}
    f^*(x) = \beta^{*\top} \, \phi(x) \, 
\end{equation} for some $\beta^* \in \RR^p$. If we relate it to a real world task, there is of course no particular reason for the target function to be expressed in this way. This particular case is however very useful to understand quantum advantage.

During training, the parameters $\theta$ of the quantum circuit are chosen iteratively so as to optimize the \textit{empirical risk} loss
\begin{align}\label{eq:empirical_risk_loss}
    \mathcal{L}(f_Q, f^*) &= \frac{1}{M} \sum_{i=1}^M (f_Q(x_i;\:\theta) - y_i)^2\\
    &= \frac{1}{M}\norm{\Phi\beta_Q(\theta) - \by}^2 \,.
\end{align}

On a theoretical level, how well a particular model $f$ (quantum or classical) \textit{generalizes} to the true solution $f^*$, is captured either by the square of their $L_2$ distance (with respect to distribution $\mu$) 

\begin{equation}\label{eq:true_risk}
    \norm{f - f^*}_\mu^2 := \int_{\mathcal{X}} (f(x) - f^*(x))^2 d\mu(x) \, ,
\end{equation}
or by their $\infty$-distance (assuming $f$ and $f^*$ are bounded)
 \begin{equation}
    \norm{f -f^*}_{\infty} := \sup_{x \in \mathcal{X}}\!\big|f(x) - f^*(x)\big|\,.
\end{equation}

In this work, we will consider that the quantum model has a quantum advantage, if one cannot produce a classical model with similar generalization performance at a cost polynomial in the number of qubits and gates.

A summary of the notations used in the paper can be found in \cref{sec:notations}.

\subsection{Bias of classical linear regression}
\label{subsec:mnls}

In this subsection, we detail the known results \cite{bishop2006pattern, hastie2022surprises} about the solution of the linear regression problem. To minimize the empirical risk loss defined in \autoref{eq:empirical_risk_loss}, one can train the weight vector $\beta$ using a Gradient Descent (GD) method or solve the equivalent Kernel Ridge Regression (KRR) \cite{bishop2006pattern, hofmann2008kernel}. We distinguish two regimes:

\begin{itemize}
    \item The \textbf{underparameterized} regime where the feature space dimension is lower or equal to the number of datapoints: $p \leq M$. In this regime, there is a unique solution, that can be expressed as
\begin{equation}
    \hat{\beta} = (\Phi^\top\Phi)^{-1}\Phi^\top\by \,.
\end{equation}
Furthermore, if the data is sampled such that $\Phi^\top\Phi$ is full rank (which is almost always the case) and there is no noise in the observed targets, the estimator $\hat{\beta}$ is equal to the ground truth $\beta^*$.
    \item The \textbf{overparameterized} regime where the feature space dimension is greater than the number of datapoints: $p > M$. In this case, an infinite number of weight vectors can set the empirical risk to zero. However, the algorithms of GD and KRR will converge towards a specific vector $\beta_{\mathrm{MNLS}}$ called \textbf{minimum norm least square} estimator (MNLS). $\beta_{\mathrm{MNLS}}$ is the vector of minimal norm among the minimizers of the empirical loss, and it is provably unique:
\begin{equation}
    \beta_{\mathrm{MNLS}} = \arg \min \norm{\beta}_2 \; \text{with} \;  \mathcal{L}(\beta^\top \phi, f^{\ast}) = 0 \,
\end{equation}
which can also be written
\begin{equation}%
    \beta_{\text{MNLS}} = \Phi^\top(\Phi\Phi^\top)^{-1}\by
\end{equation}
This behavior is due to the fact that GD and KRR only search a solution in the space spanned by the training datapoints, called the \emph{row space}.
More details can be found in \cref{app:mnls}. This property remains true for gradient descent algorithms with momentum \cite{nakerst2020gradientdescentmomentum}. It is unknown whether it is true for one of the most used optimization algorithm, Adam \cite{kingma2017adammethodstochasticoptimization}.
\end{itemize}

\subsection{Classical models for dequantization}\label{subsec:rff}

Given a quantum model, one can design a \textit{surrogate} model by considering a classical model with the same feature map $\phi$, defining a new linear model:
\begin{equation}
    f_C(x) = \beta_C^{\top} \, \phi(x) \, ,
\end{equation}
with $\beta_C$ a weight vector that is obtained with a classical computer. %
An obvious obstacle for doing so is the fact that the dimension of the feature map $\phi(x)$ is too large to be stored in memory. However, techniques exist to mitigate this problem.

Authors in \cite{rahimi2007random} have introduced the \textit{Random Fourier Features} (RFF) technique to lower computational costs for kernel methods and error bounds were refined in \cite{sutherland2015error, li2019towards}. This method can be generalized for many other cases in classical linear regression \cite{rahimi2008uniform, rahimi2008weighted}, and can be applied to the arbitrary basis quantum models defined with a Hamiltonian encoding with any preprocessing function. It has been shown that approximating the target function can be done by learning a function of the form $\hat{f}(x) = \sum_{k=1}^D \beta_i\phi_k(x)$ with $D << p$ where the functions $\phi(\:\cdot\: ;\omega_k)$ are sampled from $\llbracket 1,\: p \rrbracket$. In this case, one only has to learn a vector of dimension $D$.

Studies such as \cite{landman2022classically, sweke2023potential, sahebi2025dequantization} have shown that Random Features Regression can be used to dequantize
 quantum Fourier models, although limitations exist, particularly for resource-constrained circuits. These works are based on the Fourier feature maps that arise from Hamiltonian encoding \cite{schuld_effect_2021} where the classical data $x$ is carried by an Hamiltonian $H$ through a unitary $e^{ixH}$.

In the following, we state  a generalization to the Random Features Regression technique for any Hamiltonian encoding with any pre-processing $g$ of the classical data $e^{ig(x)H}$ that will be used in the rest of the work, and refer to \cref{app:rff} for more details. 

\begin{restatable}[]{thm}{TheoremRFF}
\label{thm:rff}
    Let  $\phi(x) = [\sqrt{q_1}\:\phi_1(x) \dots \sqrt{q_p}\:\phi_p(x)]^\top$ where $\phi_i(x)$ are basis functions such that $\forall x,\: |\phi_i(x)| \leq 1$ and $q = (q_1, \dots q_p)$ represents a discrete probability distribution, and let $f(x) = \beta^\top \phi(x)$. Let $S$ be a subset of $\llbracket 1, p \rrbracket$ sampled independently with the probability density $q$, with $D = |S|$. Then there exists coefficients $c_1, \dots c_D$ such that  $\hat{f}(x) = \Sum_{k\in S} c_k\phi_k(x)$ satisfies 
    \begin{equation}
    \label{eqn:approx}
        \Vert \hat{f} - f\Vert _\mu = \mathcal{O}\left( \frac{\max_i |\beta_i|\:\norm{\phi_i}_{\mu}\:/\sqrt{q_i}|}{\sqrt{D}} \right) \, .
    \end{equation}
    Applying the above to $\bmnls$ obtained from a kernel matrix $K$ with minimal eigenvalue $\lambda_{\min}(K)$ and target vector $\by$ with $\norm{\by}_\infty \leq 1$ yields coefficients $c_1, \dots c_D$ such that
    \begin{equation}
    \label{eqn:approx_mnls}
        \Vert \hat{f} - \fmnls\Vert _\mu = \mathcal{O}\left(\frac{M\:\max_i\norm{\phi_i}_\mu}{\sqrt{D}\:\lambda_{\min}(K)} \right) \, .
    \end{equation}
\end{restatable}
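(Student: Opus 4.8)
The plan is to read this as a Monte--Carlo / random-features statement and exploit the fact that $q$ is a genuine probability distribution to rewrite $f$ as an expectation over the sampled index. Writing $f(x) = \beta^\top\phi(x) = \Sum_{i=1}^p \beta_i\sqrt{q_i}\,\phi_i(x)$, I would introduce $h_i(x) := \beta_i\phi_i(x)/\sqrt{q_i}$, so that $f(x) = \Sum_{i=1}^p q_i\,h_i(x) = \EE_{i\sim q}[h_i(x)]$. Drawing the index (multi)set $S$ of size $D$ i.i.d.\ from $q$ and forming the empirical mean $\hat f(x) = \Frac{1}{D}\Sum_{k\in S} h_k(x)$ then produces an unbiased estimator of $f(x)$, and this $\hat f$ has exactly the required form $\Sum_{k\in S} c_k\,\phi_k(x)$ with $c_k = \beta_k/(D\sqrt{q_k})$ (coefficients of repeated indices combining naturally).

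For the error bound \eqref{eqn:approx}, I would compute the expected squared $L_2$ error and exchange the $\mu$-integral with the expectation over $S$ by Fubini. Since $\hat f(x)$ is an average of $D$ i.i.d.\ terms with mean $f(x)$, the expected pointwise squared error is $\Frac{1}{D}\,\VV_{i\sim q}[h_i(x)] \le \Frac{1}{D}\,\EE_{i\sim q}[h_i(x)^2]$. Integrating over $\mu$ and using $\Int h_i(x)^2\,d\mu = \beta_i^2\norm{\phi_i}_\mu^2/q_i$ gives
\begin{equation}
\EE_S \norm{\hat f - f}_\mu^2 \;\le\; \Frac{1}{D}\,\EE_{i\sim q}\!\left[\Frac{\beta_i^2\,\norm{\phi_i}_\mu^2}{q_i}\right] \;\le\; \Frac{1}{D}\,\max_i \Frac{\beta_i^2\,\norm{\phi_i}_\mu^2}{q_i}\,,
\end{equation}
where the last step bounds a $q$-weighted average (which sums to one) by its largest entry. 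Taking square roots, applying Jensen to move the expectation inside the norm, and noting that an in-expectation bound is attained by at least one realization of $S$ yields the stated rate together with the existence of the $c_k$.

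For \eqref{eqn:approx_mnls}, the task reduces to bounding $\max_i |\beta_i|\norm{\phi_i}_\mu/\sqrt{q_i}$ for the choice $\beta = \bmnls = \Phi^\top K^{-1}\by$ with $K = \Phi\Phi^\top$. Reading off the $i$-th coordinate and using $[\Phi]_{ai} = \sqrt{q_i}\,\phi_i(x_a)$ gives $(\bmnls)_i = \sqrt{q_i}\Sum_{a=1}^M \phi_i(x_a)\,(K^{-1}\by)_a$, so the pointwise bound $|\phi_i(x_a)|\le 1$ yields $|(\bmnls)_i|/\sqrt{q_i} \le \norm{K^{-1}\by}_1$ uniformly in $i$. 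I would then chain the standard inequalities $\norm{K^{-1}\by}_1 \le \sqrt{M}\,\norm{K^{-1}\by}_2 \le \sqrt{M}\,\norm{K^{-1}}_2\,\norm{\by}_2$, use $\norm{K^{-1}}_2 = 1/\lambda_{\min}(K)$ (as $K$ is symmetric positive definite) and $\norm{\by}_2 \le \sqrt{M}\norm{\by}_\infty \le \sqrt{M}$, to obtain $|(\bmnls)_i|/\sqrt{q_i} \le M/\lambda_{\min}(K)$. Substituting into \eqref{eqn:approx} gives the claim.

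The variance computation and the norm inequalities are routine; the step that deserves care is the passage from the expectation to the maximum in the display, which is precisely what converts a sum-type second moment into the clean $\max_i$ form, together with the observation that the hypothesis $|\phi_i|\le 1$ is exactly what makes each coordinate of $K^{-1}\by$ controllable in the second part. The main bookkeeping hazard is tracking the $\sqrt{q_i}$ weights consistently between the feature map $\phi$ and the reweighted functions $\phi_i$, so that the sampling distribution $q$ cancels correctly in both the estimator and the final bound.
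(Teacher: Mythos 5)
Your proof is correct, and its core construction coincides with the paper's: both take the importance-sampled Monte-Carlo estimator $\hat f = \frac{1}{D}\sum_{k\in S}\frac{\beta_k}{\sqrt{q_k}}\phi_k$, note its unbiasedness, and both prove the MNLS bound by the same coordinate estimate $|\beta_i|/\sqrt{q_i} = |\boldsymbol{k}^\top K^{-1}\by| \le \sqrt{M}\,\norm{K^{-1}\by}_2 \le M/\lambda_{\min}(K)$ (your H\"older-then-$\ell_1\!\to\!\ell_2$ chain is the same algebra as the paper's Cauchy--Schwarz). Where you genuinely diverge is the finishing step of the first inequality: the paper invokes a Hilbert-space concentration lemma (Lemma 4 of Rahimi--Recht, a McDiarmid-type bound), which needs the uniform radius bound $\norm{\beta_k\phi_k/\sqrt{q_k}}_\mu \le \max_i |\beta_i|\norm{\phi_i}_\mu/\sqrt{q_i}$ and yields the rate with probability $1-\delta$ at the cost of a factor $\big(1+\sqrt{2\log(1/\delta)}\big)$; you instead compute the second moment directly, getting $\EE_S\norm{\hat f - f}_\mu^2 \le \frac{1}{D}\sum_i\beta_i^2\norm{\phi_i}_\mu^2 \le \frac{1}{D}\max_i\beta_i^2\norm{\phi_i}_\mu^2/q_i$, and conclude by Jensen plus an existence argument. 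Your route is more elementary (no concentration inequality at all), and your intermediate quantity $\sum_i\beta_i^2\norm{\phi_i}_\mu^2$ is never larger than the worst-case radius the paper must use; the trade-off is that you obtain the guarantee only in expectation over $S$ --- hence only with constant probability via Markov, or for \emph{some} realization of $S$ --- whereas the paper's version holds for a typical draw, which is what one wants algorithmically since $\norm{\hat f - f}_\mu$ cannot be evaluated to post-select a good $S$. Given that the theorem is stated with an $\mathcal{O}(\cdot)$ and an existential quantifier on the coefficients, either form suffices to establish it as written.
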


\begin{figure*}
\begin{center}
\includegraphics[width=0.95\textwidth]{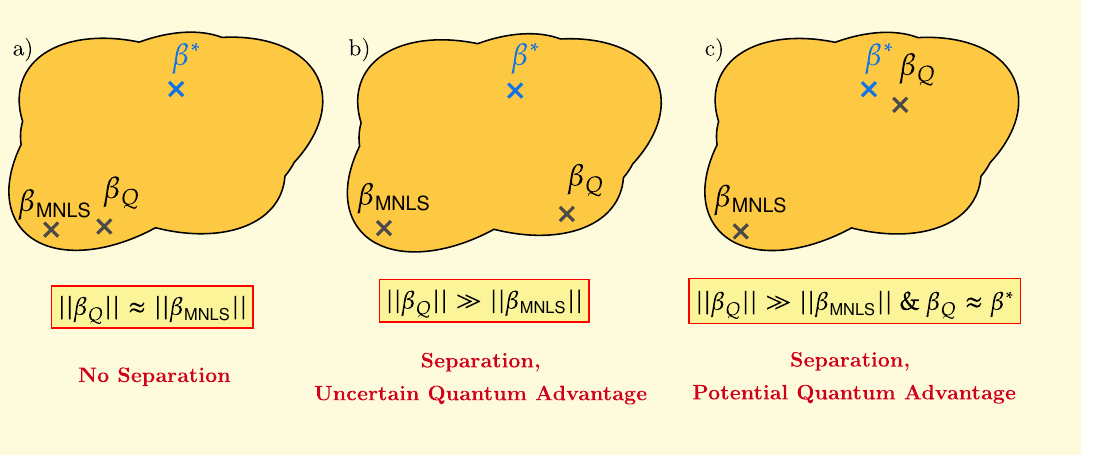}
    \caption{Illustration of the potential quantum advantage. If $\beta_Q$ is close to $\bmnls$ there is no separation between the quantum estimator and the classical one. If $\beta_Q$ and $\bmnls$ are far from each other and far from the ground truth, there is a separation but uncertain quantum advantage. If $\beta_Q$ is closer to the ground truth than $\bmnls$, there is a suggestion of quantum advantage.}
    \label{fig:Quantum_Advantage}
\end{center}
\vspace{-2em}
\end{figure*}

    \section{Bias of quantum models and potential advantage}\label{sec:Bias_Quantum_Advantage}

In our work, we propose a general study of learning through VQCs, that can be applied to the case where the input variable is continuous or discrete. Most proven quantum advantage results in quantum machine learning come from problems where the input data take discrete values \cite{gyurik2023exponential, molteni2024exponential, jerbi2023quantum, liu2021rigorous, jerbi2024shadows}, typically $\{0, 1\}^n$. It is convenient because the problems can be linked to cryptography problems which are known or strongly supposed to be hard to solve classically. However many real world use cases utilize continuous vectors, so it is important to have a better understanding in that domain.

\subsection{Underparameterized regime has few advantages}

First of all, we note that in the underparameterized regime, there is little possibility of solving the linear regression problem in a better way with a quantum computer. The optimal solution to the least square problem has indeed a closed form, and if there is no noise in the data, it is equal to the true weight vector. It means that any other optimization technique will converge towards that optimal solution. Moreover, since we assume that the number of data points is small enough to be handled with a classical computer, the total number of operations in the procedure is still polynomial in the size of the dataset. We do not exclude however an advantage of using a quantum computer to invert the covariance matrix \cite{harrow2009quantum}, or other more modest polynomial advantages \cite{cerezo2022challenges}.%

\subsection{Quantum models can differ from Minimum Norm Least Square}

We consider the overparametrized regime, and we described in \autoref{subsec:mnls} the implicit bias of classical learning algorithms. A classical linear regression trained with gradient descent, or a kernel ridge regression will output a model $f_{\text{MNLS}}(x) = \beta_{\text{MNLS}}^\top\:\phi(x)$, that could be classically approximated in lots of cases if one provides a sampling access to the entries of $\phi(x)$.
Contrary to the classical case, one does not have access directly to the coefficients $\beta_Q$ while tuning a quantum model. One instead optimizes a vector of parameters $\theta$ such that $\beta_Q = \beta_Q(\theta)$ and optimizes the loss function $\mathcal{L}(\theta) = \norm{y - X\beta_Q(\theta)}^2$.

In this case $\beta_Q$ does not remain in the row space (the space spanned by the training datapoints) during the training and does not converge to $\beta_{\text{MNLS}}$. This constitutes a crucial distinction between quantum and classical models. 
If the quantum model would converge to $\fmnls$, it could be approximated with random feature regression techniques. Therefore we suggest that the best usage of quantum computers would not be to reproduce classical linear regressions. The quantum circuit should be used to provide a model $\beta_Q$ such that $\beta_Q \neq \bmnls$. 
It remains to be seen when $\beta_Q$ can converge far from $\beta_{\text{MNLS}}$ or from an approximation of MNLS via random features. 

In practice, we can consider $\norm{\beta_Q} \geq \text{poly}(N)$ in order to have a clear separation. Such examples are developped in \autoref{subsec:Example_quantum_Fourier}. Having a weight vector of large norm will provide a difference with classical models, but a true advantage will be reached if in addition the quantum models is closer to the ground truth than the MNLS. We illustrate these views in \autoref{fig:Quantum_Advantage}.
Since $\fmnls$ is the interpolating model of minimum norm, any quantum interpolating model $f_Q$ must verify $\norm{\beta_Q} \geq \norm{\bmnls}$. A sufficient condition for separation between $\norm{\beta_Q}$ and $\norm{\bmnls}$ would be that $\norm{\beta_Q} \gg \norm{\bmnls}$. We state it in the informal theorem

\begin{restatable}[Informal]{thm}{RelationBetaNormFunctionDist}\label{thm:largeNormBeta}
Let $f_Q$ be an interpolating quantum model, ie $\mathcal{L}(f_Q) = 0$. Then, $f_Q$ has a potential quantum advantage if $\norm{\beta_Q} \gg \norm{\bmnls}$.
\end{restatable}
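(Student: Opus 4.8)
The plan is to treat this informal statement as the conjunction of two purely structural facts: that a large weight norm forces $\beta_Q$ to live far from the row space in which classical regression operates, and that \cref{thm:rff} then rules out an efficient random-feature surrogate of $f_Q$ itself. First I would record the Pythagorean structure of interpolating solutions. Since $\bmnls = \Phi^\top(\Phi\Phi^\top)^{-1}\by$ lies in the row space of $\Phi$, and any interpolating weight vector satisfies $\Phi\beta_Q = \by = \Phi\bmnls$, the difference $v := \beta_Q - \bmnls$ lies in $\ker\Phi$, the orthogonal complement of the row space. Orthogonality then gives $\norm{\beta_Q}^2 = \norm{\bmnls}^2 + \norm{v}^2$, so the hypothesis $\norm{\beta_Q}\gg\norm{\bmnls}$ is equivalent to $\norm{v}\gg\norm{\bmnls}$: the quantum model carries a large component in precisely the subspace that gradient descent and kernel ridge regression never explore (\autoref{subsec:mnls}). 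This already establishes that $f_Q$ is not the model that classical linear regression on the same feature map would return.

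Next I would close off the random-feature route to dequantizing $f_Q$ directly. Applying the first bound of \cref{thm:rff} to $\beta_Q$, the sampling error of any $D$-feature approximation scales as $\max_i|\beta_{Q,i}|\,\norm{\phi_i}_\mu/(\sqrt{q_i}\sqrt{D})$. Since $\max_i|\beta_{Q,i}| \geq \norm{\beta_Q}/\sqrt{p}$, a weight norm growing faster than any polynomial forces the number of features $D$ required for a fixed accuracy to grow at the same super-polynomial rate, so no efficient classical surrogate obtained by subsampling the feature map can reproduce $f_Q$. Combining this with the row-space argument, a separation $\norm{\beta_Q}\gg\norm{\bmnls}$ blocks both standard dequantization channels, which is exactly the potential-advantage regime depicted in \autoref{fig:Quantum_Advantage}.

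Finally I would be explicit about why the conclusion is only a \emph{potential} advantage, which is also where I expect the main difficulty in making the statement precise. A weight-space separation $\norm{v}\gg\norm{\bmnls}$ need not translate into a function-space separation $\norm{f_Q-\fmnls}_\mu = \norm{v^\top\phi}_\mu$, because the population feature covariance could heavily damp the null-space component $v$; controlling this requires a lower bound on the covariance restricted to $\ker\Phi$, which depends on $\mu$ and on the architecture. Moreover, being far from $\bmnls$ is necessary but not sufficient for a genuine advantage: one additionally needs $f_Q$ to generalize better, i.e. $\norm{f_Q-f^*}_\mu < \norm{\fmnls-f^*}_\mu$, which cannot follow from the norm hypothesis alone. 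I therefore expect the hard part to be isolating the extra assumptions—on the covariance and on the placement of $\beta_Q$ relative to $\beta^*$—under which the potential advantage upgrades to a provable one; the norm separation itself is the easy, purely linear-algebraic step carried out above.
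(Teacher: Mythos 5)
Your first step (the orthogonal decomposition $\beta_Q = \bmnls + v$ with $v \in \ker\Phi$ and the resulting Pythagorean identity) coincides with the paper's starting point in the proof of \cref{thm:RelationBetaNormFunctionDist}, but after that your argument diverges and leaves a genuine gap. The paper's formalization of ``potential quantum advantage'' (\cref{thm:looking-at-norm-betaq-to-assess-potential-quantum-advantage}) is a statement in \emph{function space}: $\norm{f_{\beta_Q} - \fmnls}_\infty$ is large. The entire content of its proof is the reverse-Lipschitz inequality of \cref{lem:reverse-lipschitz}: if $\operatorname{span}(\phi(\mathcal{X})) = \mathcal{V}$ --- which holds for the Fourier feature maps by \cref{lem:entry-wise-linindep-featuremap,lem:Multidim-Fourier-functs-are-lin-indep} --- then
\begin{equation*}
\norm{\beta_Q}_2 - \norm{\bmnls}_2 \ \le\ \sqrt{p}\ \lambda_{\mathrm{min}}\!\left(K_{x_1,\dots,x_p}\right)^{-1/2}\,\norm{f_{\beta_Q}-\fmnls}_\infty\,,
\end{equation*}
so a weight-norm gap forces a sup-norm function gap whenever the constant $c=\sqrt{p}\,\lambda_{\mathrm{min}}^{-1/2}$ is controlled. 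This is precisely the step you declare out of reach in your final paragraph: your objection (that the population covariance could damp the null-space component $v$) is valid for the $L_2(\mu)$ distance, but the paper sidesteps it by working with $\norm{\cdot}_\infty$ and evaluating the kernel at $p$ linearly independent points rather than at the $M$ training points. Without this injectivity-with-quantitative-inverse argument, your proof establishes only that $\beta_Q \neq \bmnls$ as vectors; it never reaches the function-space separation that the theorem, as the paper formalizes it, actually asserts.

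Second, your random-features paragraph contains a logical error: \cref{thm:rff} is an \emph{upper} bound on the approximation error, and the fact that this bound becomes vacuous when $\norm{\beta_Q}$ is large does not imply that $f_Q$ cannot be approximated with few random features --- failure of a sufficient condition for dequantization is not an impossibility proof. The paper is careful on exactly this point: for the discrete-logarithm model it only states that it ``cannot be shown'' to be learnable by random feature regression, and derives actual hardness from the cryptographic assumption, not from the blow-up of the bound. So that paragraph cannot be used to ``block the dequantization channel''; it should be replaced by the reverse-Lipschitz argument above, which is what carries the paper's proof.
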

We refer to \cref{thm:looking-at-norm-betaq-to-assess-potential-quantum-advantage} in \cref{sec:Proof_Error_Model} for more details.

Other works have outlined the differences between quantum and classical linear regression, but none of them mentions the criteria about the norm of the weight vector. The authors in \cite{jerbi2023quantum} study the fact that variational circuits express a different solution than the kernel ridge regression (therefore the MNLS). They point out that there exists functions that are learnable with variational quantum circuits but that require exponentially more resources to learn with quantum kernels. In \cite{you2023analyzing}, the authors analyse the optimization dynamics of QNNs and conclude that they are different from the neural tangent kernel. They study in detail the convergence rate of the respective methods, but do not study the actual solution reached.

In the following we will study the weight vector norms for usual VQC framework with continuous input, but also for the cryptographic examples with discrete inputs described in \cite{liu2021rigorous, gyurik2023exponential}. Our result will emphasize the importance of this criterion to find a quantum advantage. Complements to this discussion can be found in \cref{sec:Proof_Error_Model}.

\subsection{Generalization performances}

In the previous subsections, we only discussed the separation between the classical and quantum models. The separation is a necessary condition for quantum advantage, but the classical model could generalize better than the quantum model.
Without any other information than the norm of the weight vector, we can show better generalization bounds for the MNLS estimator than the quantum model if  $\norm{\beta_Q} \gg \norm{\bmnls}$ (see \cref{app:generalization}, and \cite{mohri2018foundations}).
The advantage of using a quantum model will come therefore from an inductive bias that makes the quantum model more suitable for the data. This would heavily depend on the use case, and it is very difficult to give general statements about such inductive biases.

\section{Examples of separation from classical to quantum models}
\label{sec:examples}

\subsection{Fourier model
}

In this section, we consider Fourier models defined by the Fourier feature map:%
 
\begin{equation}\label{eq:feat-map}
    \phi(x) = \frac{1}{\sqrt{p}}\begin{bmatrix} \cos(\omega^{\top}x) \\ \sin(\omega^{\top}x) \\ \vdots \end{bmatrix}_{\omega \in \Omega}
\end{equation}
with $\Omega \subset \mathbb{Z}^d$, and $p=|\Omega|$. We assume that the spectrum is only composed of vectors of integers, and we can consider that $\forall\: \omega \in \Omega, -\omega \notin \Omega$. We also consider the input vector $x$ to be uniformly distributed in $[0, 2\pi]^d$.
This case is very important in the quantum machine learning literature \cite{schuld2021effect, schreiber2022classical, sweke2023potential, peters2022generalization}, and could help us understand what quantum circuit design needs to be done in order to do variational circuit learning. 

It has been explained in \autoref{subsec:rff} that the approximability of the MNLS estimator depends on the eigenvalues of the empirical kernel matrix. We prove in \cref{app:Concentration_Eigen_K_Fourier_integer} that the smallest eigenvalue of the kernel matrix is with high probability a constant. It enables us to state that the MNLS associated to a Fourier model can be easily approximable, and that the norm of $\norm{\bmnls}$ is bounded by the number of datapoints. This is stated in the following theorem.

\begin{restatable}[]{thm}{WeightVectoretwodesign}\label{thm:WeightVectoretwodesignInformal} (Informal)
Let us consider a Fourier model with a spectrum $\Omega \subset \mathbb{Z}^d$. %
Then, the associated MNLS estimator has a norm scaling like $\norm{\bmnls} = \mathcal{O}(M)$.
\end{restatable}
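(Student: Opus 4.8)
The plan is to work directly from the closed form of the minimum norm least squares estimator and reduce $\norm{\bmnls}$ to a quadratic form in the empirical kernel matrix. Writing $K = \Phi\Phi^\top \in \RR^{M\times M}$ for the kernel (Gram) matrix, the overparametrized formula from \autoref{subsec:mnls} gives $\bmnls = \Phi^\top K^{-1}\by$, so that
\[
\norm{\bmnls}^2 = \by^\top K^{-1}\,\Phi\Phi^\top\, K^{-1}\by = \by^\top K^{-1} K K^{-1}\by = \by^\top K^{-1}\by .
\]
Thus the entire estimate collapses to controlling the smallest eigenvalue of $K$ together with the size of the target vector, and no reference to the (high dimensional) feature coordinates is needed.

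Next I would bound the quadratic form by $\by^\top K^{-1}\by \le \norm{\by}_2^2 / \lambda_{\min}(K)$. For bounded targets $|y_i|\le 1$ one has $\norm{\by}_2^2 \le M$, and \cref{app:Concentration_Eigen_K_Fourier_integer} supplies, with high probability, a lower bound $\lambda_{\min}(K) \ge c$ for some constant $c>0$ independent of $M$. Combining these gives $\norm{\bmnls}^2 \le M/c$, i.e. $\norm{\bmnls} = \mathcal{O}(\sqrt{M})$, which in particular yields the stated $\mathcal{O}(M)$ scaling. The same conclusion also follows from the cruder operator-norm route $\norm{\bmnls}\le \norm{\Phi}_{\mathrm{op}}\,\norm{K^{-1}}_{\mathrm{op}}\,\norm{\by}_2$, using $\norm{\Phi}_{\mathrm{op}}^2 = \lambda_{\max}(\Phi^\top\Phi) = \lambda_{\max}(K)\le \operatorname{Tr}(K)=M$ (since $K\succeq 0$) and $\norm{K^{-1}}_{\mathrm{op}} = 1/\lambda_{\min}(K)$, which directly produces $\norm{\bmnls}\le \sqrt{M}\cdot\sqrt{M}/c = \mathcal{O}(M)$.

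The only nontrivial ingredient — and the step I expect to be the main obstacle — is the constant lower bound on $\lambda_{\min}(K)$. The strategy is to show that $K$ concentrates around its mean. Since $x$ is uniform on $[0,2\pi]^d$ and the spectrum consists of nonzero integer vectors with $-\omega\notin\Omega$, one has $K_{ii}=1$ and $\EE[K_{ij}] = \tfrac{1}{p}\sum_{\omega\in\Omega}\EE[\cos(\omega^\top(x_i-x_j))] = 0$ for $i\neq j$, so $\EE[K]=I_M$; the integrality of the frequencies is exactly what annihilates these off-diagonal means. It then remains to control the fluctuation $\norm{K-I_M}_{\mathrm{op}}$ by a matrix concentration inequality (matrix Bernstein or a matrix Chernoff bound applied to the independent contributions of each datapoint), which keeps it below a small constant in the appropriate regime and hence forces $\lambda_{\min}(K)\ge 1-\norm{K-I_M}_{\mathrm{op}}$ to remain bounded away from zero with high probability. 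This concentration argument is the genuine technical crux and is carried out in \cref{app:Concentration_Eigen_K_Fourier_integer}; once it is granted, the norm bound above is immediate.
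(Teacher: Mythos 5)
Your reduction of the statement to the kernel spectrum is correct, and it is in fact sharper than the bound the paper itself establishes. The identity $\norm{\bmnls}^2 = \by^\top K^{-1}\by$, combined with $\norm{\by}_2^2 \leq M$ and the appendix bound $\lambda_{\min}(K) \geq 1/2$, gives $\norm{\bmnls} = \mathcal{O}(\sqrt{M})$, whereas the bound the paper actually has on record comes from the per-component estimate in the proof of \cref{thm:rff}, namely $|\beta_i/\sqrt{q_i}| \leq \norm{\boldsymbol{k}}\,\norm{K^{-1}\by} \leq M/\lambda_{\min}(K)$ by Cauchy--Schwarz, which after summing over components yields only $\norm{\bmnls} \leq M/\lambda_{\min}(K) = \mathcal{O}(M)$ (this is what is behind the companion informal claim $\norm{\bmnls}^2 \leq M^2$). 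At the level of architecture you and the paper agree: everything hinges on \cref{thm:AnalysisKernelMatrix}, i.e.\ on $\lambda_{\min}(K)$ being a constant with high probability.

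The gap is in how you claim that key ingredient is, or could be, proven. You propose to control $\norm{K - I_M}_{\mathrm{op}}$ by matrix Bernstein or matrix Chernoff ``applied to the independent contributions of each datapoint'', and you assert that this is what \cref{app:Concentration_Eigen_K_Fourier_integer} carries out; neither holds. The Gram matrix $K = \Phi\Phi^\top$ admits no decomposition into independent summands indexed by datapoints: its $(i,j)$ entry couples $x_i$ and $x_j$, and every column of $\Phi$ depends on the whole sample. The datapoint-indexed independent sum is instead $\Phi^\top\Phi = \sum_{i=1}^M \phi(x_i)\phi(x_i)^\top$, but in the overparameterized regime $p \gg M$ this matrix has rank at most $M$ and therefore cannot concentrate around its mean $\Theta(M/p)\,I$ in operator norm --- its nonzero eigenvalues sit near $1$, far above $\Theta(M/p)$ --- so matrix concentration on that sum yields nothing useful here. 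What the appendix actually does is scalar moment control: the Wolkowicz--Styan bound (\cref{lemma:eigenval_trace}) gives $\lambda_{\min}(K) \geq 1 - s\sqrt{M-1}$ with $s^2 = \frac{1}{M}\sum_{i\neq j} k(x_i,x_j)^2$, and Chebyshev is applied to $s^2$ using $\EE[k(x_i,x_j)^2] = \frac{1}{2p}$ and $\EE[k(x_i,x_j)^4] \leq C'/p$ (\cref{lemma:kernel_elements}), the fourth-moment bound being the real combinatorial work, since it requires counting the integer solutions of $\pm\omega_1 \pm \omega_2 \pm \omega_3 \pm \omega_4 = 0$. Your first-moment observation that integrality forces $\EE[K] = I_M$ is correct but far from sufficient. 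The cheapest valid repair of your sketch is to bound $\norm{K - I_M}_{\mathrm{op}} \leq \norm{K - I_M}_F$ and apply Markov to $\EE\big[\norm{K - I_M}_F^2\big] = \frac{M(M-1)}{2p}$, which needs only the second-moment computation and already gives $\lambda_{\min}(K) \geq 1/2$ with probability $1 - \mathcal{O}(M^2/p)$, the same rate as the paper.
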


In the case of Fourier models, \autoref{eqn:approx_mnls} from \autoref{thm:rff} can be rewritten, such as: 

\begin{equation}\label{eqn:approx_mnls}
    \Vert \hat{f} - \fmnls\Vert _\mu = \mathcal{O}\left( \frac{M}{\sqrt{D}} \right) \, .
\end{equation}

It is the most favorable case to apply random feature regression, it is then enough to have a number of random features polynomial in $M$ to approximate the MNLS estimator.

\subsection{Simple quantum Fourier model}\label{subsec:Example_quantum_Fourier}
\vspace{-.5em}

In this section, we detail an example of quantum Fourier model that exhibit the separation mentionned in the previous section, ie $\norm{\beta_Q} \gg \norm{\bmnls}$. More details can be found in \cref{app:weingarten}

We consider a circuit with a diagonal encoding layer $S(x)$ applied to the $|+\rangle^n$ state followed by a trainable unitary $V$ and an observable $O$ such that $\text{Tr}(O) = 0$. We illustrate this example in \autoref{fig:Potential_Advantage_Framework}. 
The quantum model can then be written
\begin{equation}
    f_Q(x) = \text{Tr}(O\: VS(x)(|+\rangle\langle +|^n)S(x)^\dagger V^\dagger)
\end{equation}

\begin{figure}[h!]
    \centering
    \includegraphics[width=0.5\textwidth]{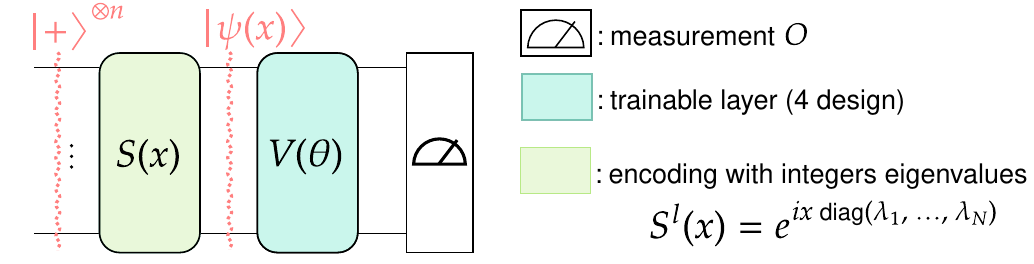}
    \caption{Parameterized quantum models considered: hamiltonian encoding with no integer eigenvalues.}
    \label{fig:Potential_Advantage_Framework}
\end{figure}

The spectrum only depends on the encoding unitary $S(x)$. We consider two types of encodings
among many possibilities that are detailed in \cite{peters2022generalization}:

\begin{itemize}
    \item \textbf{The ternary encoding}
    
    $S(x) = \bigotimes_{k=0}^{n-1} RZ_{k}(x\:3^k/2)$
    where $RZ_{k}$ denotes a $Z$ rotation applied to the qubit $k$. The spectrum produced by this encoding is the interval $\llbracket 0, 3^n-1\rrbracket.$ It is the spectrum with the largest size one can produce with one layer of single qubit gates \cite{shin2023exponential}.
    
    \item \textbf{The Golomb encoding}
    $S(x) = \exp(-i \displaystyle\frac{x}{2} R_G)$ where $R_G$ is a Golomb ruler \cite{piccard1939ensembles}. The resulting spectrum are all the integers in the set $\llbracket 0, N(N-1)/2\rrbracket.$ Such an encoding is not known to be realizable in polynomial time on a quantum computer, so is of little practical use, but it is an interesting edge case of our results.
\end{itemize}

We do not detail here the multiplicity of the spectrum, it can be found in \cref{app:weingarten}.
In this setting, we have the following result:

\begin{restatable}[]{thm}{WeightVectoretwodesign}\label{thm:WeightVectoretwodesignInformal} (Informal)

Let us consider a Fourier model as defined above, with a unitary $V$ sampled on a 4-design.
With high probability over the distribution of the dataset,
$\norm{\bmnls}^2 \leq M^2$.
For the Golomb encoding, $\norm{\beta_Q}^2 \sim 2^n$ with high probability over the distribution of $V$. For the ternary encoding, $\norm{\beta_Q}^2 \sim (3/2)^n$ with high probability over the distribution of $V$.

\end{restatable}

We obtain this result by integrating order 2 and 4 moments of the Haar measure \cite{collins2006integration, weingarten4}. 
This result shows that we can have a separation between quantum and classical models in the special cases mentioned. Indeed, we have that $\norm{\beta_Q}\gg \norm{\bmnls}$ if we assume that $M = O(\text{poly}(n))$. This result is based on a strong randomness hypothesis, but can be used as a guideline to design quantum models that can present large weight vectors and avoid RF techniques. \autoref{thm:WeightVectoretwodesignInformal} can be extended to any exponentially large encoding spectrum. In \autoref{subsec:Verifying_Weight_Vector}, we offer a method to derive the weight vector norm, allowing one to verify that a quantum model can avoid such dequantization techniques.

However, these models are not suitable to be used in a practical case, because
 considering that the trainable unitary is drawn from a 2-design implies the model concentration and vanishing gradient phenomenon called Barren Plateau  \cite{Holmes2022, mcclean2018barren}. This point is discussed in \cref{subsec:LinkConcentration}

\subsection{Re-uploading Fourier models }\label{subsec:ReUploadingModel}

\vspace{-.5em}

We now consider the case of re-uploading model, where the quantum model $f_Q(x,\theta) = \bra{0} U(x,\theta)^\dagger O U(x,\theta) \ket{0}$, is such that the circuit unitary is composed of an \textit{encoding} layer surrounded by two \textit{trainable} layers of the form: 
\begin{equation}\label{eq:circuit_ansatz}
    U(x,\theta) = V^{2}(\theta) S(x) V^{1}(\theta) \, \textrm{,}
\end{equation}
with $V^1(\theta)$ and $V^2(\theta)$ formed by trainable gates depending on the parameter vector $\theta$, which is optimized during training whereas $S(x)$ only depends on input data values. We consider the Hamiltonian encoding strategy where the classical input components are encoded as the time evolution of some Hamiltonians $S(x)= \prod_{k=1}^D e^{-ix_kH^{(k)}}$. As explained earlier, the quantum model can be written as a Fourier Series where its spectrum $\Omega$ which depends on the eigenvalues of the encoding Hamiltonians. %

In \cite{mhiri2024constrained}, the authors showed that the variance of the Fourier coefficients $c_\omega(\Theta)$ depends on the \textit{redundancy} $|R(\omega)|$ of their corresponding frequencies $\omega$. We use the results from this work to give the following results. In this work, we are investigating the difference of learning behaviour between the quantum models and the minimum norm least square (MNLS) estimator. Because the norm of the MNLS estimator is bounded, we show how close it is to $\beta_Q$ by considering an upperbound on its norm. First, we consider the case where the trainable layers, $V^1(\theta)$ and $V^2(\theta)$, described 2-design over the special unitary group. 

\begin{restatable}[]{thm}{ExpBetaQReuploadingtwodesign}\label{thm:Exp_BetaQ_Reuploading_2design}
  (Informal) Consider a single layered quantum re-uploading model with an observable $O$ such that $\text{Tr}(O) = 0$, and $\norm{O}^2_2 = N$. Then, $\norm{\beta_Q }_2 \sim \frac{p}{N}$ with $p$ the number of features, and $N$ the number of distinct eigenvalues in the encoding layer.
\end{restatable}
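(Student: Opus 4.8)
The plan is to turn the weight-vector norm into a sum of squared Fourier coefficients and then average it over the two \emph{independent} trainable layers using only second-moment (2-design) data. First I would diagonalize the encoding as $S(x)=\sum_j e^{-ix\lambda_j}\ket{j}\bra{j}$ and introduce $\rho_0 = V^{1}\ket{0}\bra{0}V^{1\dagger}$ and $\tilde O = V^{2\dagger} O V^{2}$, so that
\begin{equation}
f_Q(x) = \text{Tr}\!\big(\tilde O\, S(x)\rho_0 S(x)^\dagger\big) = \sum_{j,k} e^{-ix(\lambda_j-\lambda_k)}\,(\rho_0)_{jk}\,\tilde O_{kj}\,.
\end{equation}
Collecting terms with a common frequency gives $c_\omega = \sum_{(j,k):\,\lambda_j-\lambda_k=\omega}(\rho_0)_{jk}\,\tilde O_{kj}$, and with the $1/\sqrt p$ normalization of \autoref{eq:feat-map} one has $\norm{\beta_Q}_2^2 = p\,\sum_\omega |c_\omega|^2$ up to an $O(1)$ constant. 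Since $\EE[(\rho_0)_{jk}]=\tfrac1d\delta_{jk}$, every nonzero-frequency coefficient has zero mean, so these second moments are genuine variances and the task reduces to evaluating $\EE\big[\sum_\omega |c_\omega|^2\big]$.

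Next, because $V^1$ and $V^2$ are drawn independently, I would factorize the expectation and apply the second moment of the Haar measure (equivalently a 2-design) to each layer separately \cite{collins2006integration}. The encoding layer contributes the standard tensor
\begin{equation}
\EE\big[(\rho_0)_{jk}\overline{(\rho_0)_{j'k'}}\big] = \frac{\delta_{jk}\delta_{j'k'}+\delta_{jj'}\delta_{kk'}}{d(d+1)}\,,
\end{equation}
while the observable layer, using the two hypotheses $\text{Tr}(O)=0$ and $\norm{O}_2^2=\text{Tr}(O^2)=N$, gives
\begin{equation}
\EE\big[\tilde O_{kj}\overline{\tilde O_{k'j'}}\big] = \frac{N}{d^2-1}\Big(\delta_{jj'}\delta_{kk'} - \tfrac1d\,\delta_{jk}\delta_{j'k'}\Big)\,,
\end{equation}
where tracelessness is precisely what kills the two terms proportional to $|\text{Tr}(O)|^2$. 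An equivalent route, closer to the cited reference, is to invoke the redundancy-dependent coefficient variance of \cite{mhiri2024constrained}, for which $\EE|c_\omega|^2 \propto |R(\omega)|$.

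I would then contract the two tensors and sum over all index quadruples respecting the frequency constraint $\lambda_j-\lambda_k=\lambda_{j'}-\lambda_{k'}$. The key simplification is that every Kronecker pattern surviving the Weingarten step satisfies this constraint automatically, so the contraction splits into four pieces: the two ``diagonal'' ($\omega=0$) contributions cancel, $d-d=0$, while the remaining two contribute $d^2-1$ index configurations, giving
\begin{equation}
\EE\Big[\sum_\omega |c_\omega|^2\Big] = \frac{N}{d(d+1)(d^2-1)}\,(d^2-1) = \frac{N}{d(d+1)} \sim \frac{N}{d^2}\,.
\end{equation}
Multiplying by $p$ and using that a non-degenerate encoding has $N$ distinct eigenvalues with $N=d=2^n$ (so that $\norm{O}_2^2=N$ holds for a traceless Pauli-type observable) yields $\EE\,\norm{\beta_Q}_2^2 \sim pN/d^2 \sim p/N$, the claimed scaling.

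The hard part will be the combinatorial bookkeeping of the double contraction once eigenvalue \emph{degeneracies} (nontrivial redundancies $|R(\omega)|$) are allowed: one must verify that no delta pattern produced by Weingarten violates the frequency constraint and that the $\omega=0$ cancellation is exact rather than merely leading order. A second, more conceptual obstacle is that a 2-design pins down only second moments, so the argument delivers the \emph{expectation} $\EE\,\norm{\beta_Q}_2^2$; upgrading ``$\sim$'' to a high-probability statement would require controlling $\mathrm{Var}(\norm{\beta_Q}_2^2)$, a fourth-moment quantity inaccessible from the 2-design hypothesis alone. Finally, I would need to justify the identification $d=N$ and track the $O(1)$ constants hidden in the feature-map normalization, so that the $p/N$ scaling is correctly stated for the squared norm.
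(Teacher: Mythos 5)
Your proposal is correct and establishes the claimed scaling $\EE[\norm{\beta_Q}_2^2]\sim p/N$, but it takes a genuinely different route from the paper. The paper's proof (\cref{thm:Exp_BetaQ_Reuploading_2design_Formal} in \cref{app:Reuploading_Proof}) performs no Weingarten contraction of its own: it imports the per-coefficient statistics of \cite{mhiri2024constrained} (restated as \cref{thm:single_layer_gloabl_2design}), namely $\EE_\theta[c_\omega]=\frac{\text{Tr}(O)}{N}\delta_\omega^0$ and a variance proportional to the redundancy $|R(\omega)|$, then writes $\EE[\norm{\beta_Q}_2^2]=p\sum_\omega(\text{Var}[c_\omega]+\EE[c_\omega]^2)$ and sums using $\sum_\omega|R(\omega)|=N^2$. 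You instead derive the second moments from scratch: factorizing the expectation over the two independent trainable layers, inserting the exact second-moment tensors of a Haar-random pure state and of a conjugated traceless observable, and contracting. Your key observation --- that every Kronecker pattern surviving the average automatically satisfies the frequency constraint, the four patterns contributing $d$, $-d$, $d^2$, $-1$ --- is exactly what makes the sum collapse to $\frac{N}{d(d+1)}$ \emph{independently of how the eigenvalue degeneracies distribute}; this is the same fact the paper encodes in $\sum_\omega|R(\omega)|=N^2$. It also disposes of the ``hard part'' you flag at the end: no extra bookkeeping for degenerate spectra is needed, since the pattern counts never reference the individual redundancies. As for what each approach buys: the paper's route is shorter and yields the redundancy-resolved per-frequency variance, which it reuses in \cref{thm:born_betaq_hamiltonian_encoding} for $\varepsilon$-approximate 2-designs; your route is self-contained and arguably safer --- plugging $|R(0)|=N$, $\text{Tr}(O)=0$, $\norm{O}_2^2=N$ into the variance formula as transcribed in \cref{thm:single_layer_gloabl_2design} actually returns a negative value for $\text{Var}[c_0]$, a transcription artifact at $\omega=0$ which your direct contraction (giving $\EE[|c_0|^2]=N/(d(d+1)^2)>0$) sidesteps; the discrepancy is subleading and does not affect the $p/N$ scaling. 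Finally, your two caveats are shared by the paper itself: the 2-design hypothesis pins down only the expectation (the paper remarks that an 8-design would be needed to control $\VV[\norm{\beta_Q}_2^2]$), and the statement's ``$\norm{\beta_Q}_2\sim p/N$'' is indeed, as you note, a statement about the \emph{squared} norm, which is what the formal version computes.
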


Therefore, we have that the norm of $\beta_Q$ can be  growing inverse exponential with respect to the number of qubits for  any $p$ polynomial or smaller with respect to the number of qubits, while the case $p \sim N$ may offer a potential advantage. As in the previous example, considering $p \sim N^2$ leads to a clear separation where $\norm{\beta_Q} \gg \norm{\bmnls}$. In \cite{mhiri2024constrained}, the authors offer a bound on the variance of Fourier coefficients according to the monomial distance $\varepsilon$ of each trainable layer unitary matrix to a 2 design. Similarly, we provide a bound on the variance of the weight vector norm in \autoref{app:Reuploading_Proof}, along with more detailed theorems and their corresponding proofs.

Under the hypothesis that the quantum circuit solution minimizes the empirical risk, the $\ell_2$-norm of the quantum circuit weight vector is lower bounded by $\norm{\beta_{\text{MNLS}}}_2$. Those two results can be seen as contradictory, but it simply means that if the trainable layers are close to a 2-design, it could be hard for the quantum circuit to reach a solution that minimizes the empirical risk. %

\subsection{Verifying the weight vector condition in practice}\label{subsec:Verifying_Weight_Vector}
\vspace{-.5em}

Previously, we offered guidelines to design circuits with continuous inputs and weight vector norm highe than the norm of the MNLS estimator, in order to avoid RF approximation techniques. However, \autoref{thm:WeightVectoretwodesignInformal} and \autoref{thm:Exp_BetaQ_Reuploading_2design} are not sufficient to design models that avoid such surrogate techniques, as they are average results based on hypothesis than lead to model concentration and trainability issues. We propose a simple method to verify that a non concentrated quantum model avoids RF approximation techniques. The method relies on the following proposition.

\begin{restatable}[Estimating weight vector norm of quantum models]{prop}{VerifyWeightVector}\label{prop:Verify_weight_vector}
  For a quantum Fourier model $f_Q(x,\theta)$ that has integer frequencies $\Omega \subset \mathbb{Z}$, the following formula holds:
  \begin{equation}\label{eq:beta-norm-estimator}
      \norm{\beta}_2 = \sqrt{p} \Big(2\,\EE_x\big[f_Q(x,\theta)^{\,2}\big] - \EE_x\big[f_Q(x,\theta)\big]^{\,2}\Big)^{1/2},
  \end{equation}
  where the expectation values on $x\in \mathbb{R}^d$ are taken with respect to the uniform probability measure over $[0,2\pi]^d$.
\end{restatable}

 For fixed trainable parameters $\theta$, one can estimate the weight vector norm $\norm{\beta}_2$ by evaluating the quantum model $f_Q(x,\theta)$ at many uniformly-random data points $x \in [0,2\pi]^d$, and then post-processing via \cref{eq:beta-norm-estimator} (with empirical expectation values in place of $\EE_x[\cdot]$'s).

Assuming that the considered model does not suffer from concentration (see \cref{subsec:LinkConcentration}), this method can be used in practice to verify the necessary condition of weight vector norm separation offered in this work.
For concentrated models, such separations (such as those exhibited in \cref{subsec:Example_quantum_Fourier} and \cref{subsec:ReUploadingModel}) cannot be measured precisely using this method (it would require exponentially many shots).
The proof of \Cref{prop:Verify_weight_vector} follows directly from the orthogonality of sine and cosine functions with different integer frequencies (see the beginning of the proof of \autoref{thm:Example_Perfect_Fct},  in \autoref{app:non-concnetration} for details).
Note that this assumption of integer frequencies required in \Cref{prop:Verify_weight_vector} is satisfied by most common encoding strategies, including Pauli gates and the other examples of encodings mentioned earlier.

\subsection{Discrete Inputs VQCs}
\vspace{-.5em}

Learning tasks involving discrete inputs and cryptography primitives have little real world applicability, but they remain interesting for theoretical purposes.
They are indeed the only ones where a quantum advantage has been shown rigorously. \cite{gyurik2023exponential, molteni2024exponential, jerbi2023quantum, liu2021rigorous, jerbi2024shadows}. In this section, we detail an example of a quantum model that provably cannot be dequantized, and explain how it fits into our general theory. More details can be found in \cref{app:discreteLog}.

Let the discrete logarithm unitary be defined as
\begin{equation}
    U_{\text{DLP}}: |i\rangle \longmapsto |\log_g i + 1 \rangle
\end{equation}
where $g$ is a prime number in $\llbracket 0, N-1\rrbracket$.

Let $|\psi (x)\rangle = \bigotimes_{i=1}^n RY(x_i) |0^n\rangle$ and 
\begin{equation}
    f_{\text{DLP}}(x) = \text{Tr}(U_{\text{DLP}}^{\dagger}Z_n U_{\text{DLP}}|\psi(x)\rangle\langle\psi(x)|)
\end{equation}

$U_{\text{DLP}}^{\dagger}Z_n U_{\text{DLP}}$ is a hermitian diagonal matrix and the coefficients can be written as $(U_{\text{DLP}}^{\dagger}Z_n U_{\text{DLP}})_{ii} = (-1)^{b_n(\log i + 1)}$. $b_n(j)$ is the n-th bit of the binary description of $j$. $f_{\text{DLP}}$ can be rewritten as 

\begin{align}
f_{\text{DLP}}(x) %
&= \sum_{y \in \{0, 1\}^n} \beta_y \phi_y(x)
\end{align}
where
\begin{equation}
    \phi_y(x) = \frac{1}{2^d} \prod_{i=0}^{d-1} (1 + (-1)^{y_i}\cos (x_i)), \quad y \in \{0, \:1\}^d
\end{equation}

and $\beta_y = (-1)^{b_n(\log i + 1)}$.

The bounds on the efficiency of Random Feature Regression become exponential in the number of qubits, since for all $y$

\begin{align}
    |b_y \: 2^n\norm{|\phi_y(x)}_{\mu} = \sqrt{2}^n\,.
\end{align}

Therefore it cannot be shown that $f_{\text{DLP}}$ can be learned with naive RF regression. It is consistent with the fact that it cannot be efficiently approximated because of the hardness of the discrete logarithm.
This hardness is not a consequence of what we have shown here, but is a separate statement \cite{liu2021rigorous}.

\vspace{-.5em}

\section{Discussion}\label{sec:Discussion_Limitations}

\vspace{-.5em}

\subsection{Avoiding concentration issues}\label{subsec:LinkConcentration}
\vspace{-.5em}

Concentration phenomenon of parameterized quantum circuits have been studied a lot in the literature. 
We say that a function is concentrated if the variance $\VV_x[f(x)]$  is inverse exponential with respect to the number of qubits. Typically, a quantum model  $f$ is considered concentrated if $\VV[f]\leq 1/\text{poly}(N)$. It is equivalent to the Barren Plateau phenomenon \cite{mcclean2018barren}, where the gradient of the loss function is exponentially close to 0.

The quantum model $f_Q$ can only be estimated by taking an average of $N_{\text{shots}}$ measurements with a precision of $ 1/\sqrt{N_{\text{shots}}}$. Thus if $f$ is concentrated, it would take an exponential amount of shots to evaluate it reliably. Therefore it would not be useful as a model. For the Fourier model, the variance of the function is given by the norm of $\beta$, $\VV[f] = \norm{\beta}^2/p$.
For the weight vectors of the proposed random quantum circuits in \autoref{sec:examples}, we have that $\VV[f]$ is of the order of $1/2^n$ which is concentrated.
A Fourier model with a weight vector norm $\norm{\beta}^2/p \geq \frac{1}{\text{poly}(d)}$ would then be non concentrated and non dequantizable with random features regression.

Constructing such non concentrated quantum models has proven to be a challenge. Recently, the commmunity investigated links between concentration and classical simulability \cite{cerezo2023does}. It has been conjectured that quantum models that do not suffer from concentration can be simulated efficiently with classical computers.

Concentration can be avoided in the case of discrete data input where $x \in \{0, 1\}^n$. \cite{cerezo2023does} gives some examples, and the discrete log function $f_{\text{DLP}}$ is not concentrated because $\EE_{x \in \{0, 1\}^n}[f_{\text{DLP}}(x)^2] - \EE_x[f_{\text{DLP}}(x)]^2 = 1/2 - 1/4 = 1/4$ since for half the inputs, $f_{\text{DLP}}(x)=1$ and $0$ for the other half.

We are interested to construct similar examples, but using data from a continuous distribution, ie $x \sim \mathcal{U}([0, 2\pi]^d)$.
We would like to find functions such that $\norm{\beta_Q}^2\geq p$ , and we wonder if it can be compatible with the fact that $f$ should be bounded independently of $p$, ie $|f(x)|\leq 1$ for all $x$. The fact that $f$ should be bounded comes from the fact that it is the expectation value of an observable. 

In the following, we propose a special family of Fourier model such that the norm of the weight vector is large (thus far from MNLS), and that is not concentrated. In addition, we show that this function is bounded. If one could find a quantum circuit architecture that realizes a function from this family, the conditions for a potential quantum advantage presented previously would be satisfied. An example is found in \autoref{thm:Example_Perfect_Fct}, and the proof can be found in \autoref{app:non-concnetration}.

\begin{restatable}[]{thm}{ThmFunctionFarMNLSNoConcentration}
\label{thm:Example_Perfect_Fct}
    Let $\Omega$ a subset of $\llbracket -L, L \rrbracket ^d$, where $L$ is an integer. We consider the following function $f : \mathbb{R}^d \longrightarrow \mathbb{R}$
    \begin{equation}
        f(x) = \frac{1}{\sqrt{p}}\sum_{\omega \in \Omega}  (\beta_{\omega, \cos}\cos(\omega^\top x) + \beta_{\omega, \sin}\sin(\omega^\top x))
    \end{equation}
    with $p = |\Omega|$, and $\beta_{\omega,\cos}$, $\beta_{\omega,\sin}$ are all iid uniform random variables in the interval $[-\sigma, \sigma]$ with $\sigma = \Theta(1/(d\:(\log d + \log L)))$
    We have the following properties:
    \begin{enumerate}
        \item $\big|\Vert \beta \Vert^2 - \frac{2}{3}p\sigma^2\big| \leq \sigma^2 \sqrt{p\log (2/\delta)}$ with probability at least $1-\delta$ over the choice of $\beta$.
     \item $\VV_x[f(x)] \geq \frac{2}{3}\sigma^2 - \frac{\sigma^2}{\sqrt{p}} \sqrt{\log (2/\delta)}$ with probability at least $1-\delta$ over the choice of $\beta$.
     \item $\forall x \in \RR^d, \:|f(x)| \leq 1$ with high probability over the choice of $\beta$.
    \end{enumerate}
\end{restatable}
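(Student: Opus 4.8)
\section*{Proof proposal}

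The three assertions split into two routine concentration facts (items~1 and~2) and one genuinely uniform estimate (item~3), which is the only place the precise scaling of $\sigma$ is forced. The plan is to dispatch items~1 and~2 quickly and then spend the effort on the supremum bound.

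For item~1, observe that $\norm{\beta}^2 = \sum_{\omega \in \Omega}(\beta_{\omega,\cos}^2 + \beta_{\omega,\sin}^2)$ is a sum of $2p$ independent random variables, each of the form $U^2$ with $U \sim \mathcal{U}([-\sigma,\sigma])$. Every summand lies in $[0,\sigma^2]$ and has mean $\EE[U^2] = \sigma^2/3$, so $\EE[\norm{\beta}^2] = \tfrac{2}{3}p\sigma^2$, exactly the claimed center. I would then apply Hoeffding's inequality to these $2p$ bounded summands to get $\PP(\abs{\norm{\beta}^2 - \tfrac{2}{3}p\sigma^2} \geq t) \leq 2\exp(-t^2/(p\sigma^4))$; solving $t = \sigma^2\sqrt{p\log(2/\delta)}$ yields item~1 verbatim. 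For item~2 I would invoke the Fourier-orthogonality identity already recorded for these models, $\VV_x[f(x)] = \norm{\beta}^2/p$ (valid because distinct frequencies of $\Omega$, with $-\omega \notin \Omega$ and no DC mode, give orthogonal features under the uniform law on $[0,2\pi]^d$, so $\EE_x[f]=0$). Combining this with the lower tail from item~1 and dividing by $p$ gives item~2 immediately; note that $\sigma$ is merely a free parameter here.

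The substance is item~3, the bound $\sup_{x}\abs{f(x)} \leq 1$, for which I would run a covering (epsilon-net) argument on the torus. Since all frequencies are integer vectors, $f$ is $2\pi$-periodic in each coordinate, so it suffices to control $x \in [0,2\pi]^d$. Step one: for fixed $x$, $f(x)$ is a sum of $2p$ independent mean-zero terms $p^{-1/2}\beta_{\omega,\cos}\cos(\omega^\top x)$ and $p^{-1/2}\beta_{\omega,\sin}\sin(\omega^\top x)$, whose ranges have squared widths summing to $4\sigma^2$, so Hoeffding gives the sub-Gaussian tail $\PP_\beta(\abs{f(x)} \geq t) \leq 2\exp(-t^2/(2\sigma^2))$. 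Step two: because $\norm{\omega}_\infty \leq L$ and $\abs{\beta_{\omega,\cdot}} \leq \sigma$ hold surely, one gets the deterministic gradient bound $\norm{\nabla f}_\infty \leq 2\sigma L\sqrt{p}$, so $f$ is Lipschitz with a controlled constant. Step three: take an $\eta$-net with $\eta \asymp (\sigma L\sqrt{p}\,d)^{-1}$, chosen so that the Lipschitz wiggle from any point to its nearest net point is at most $1/2$; using $\log p \leq d\log(2L+1)$ its cardinality obeys $\log N_{\mathrm{net}} = \mathcal{O}(d\log(\sigma L\sqrt{p}\,d)) = \mathcal{O}(d^2(\log d + \log L))$. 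Union-bounding the pointwise tail over the net and extending by Lipschitz continuity yields $\sup_x \abs{f(x)} \leq \sigma\sqrt{2\log(2N_{\mathrm{net}}/\delta)} + \tfrac12$ with probability at least $1-\delta$.

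The crux --- and the reason for the precise value $\sigma = \Theta(1/(d(\log d + \log L)))$ --- is the balance in that last display: I need $\sigma\sqrt{\log N_{\mathrm{net}}}$ to stay below $1/2$ despite $N_{\mathrm{net}}$ being exponential in $d$. With $\log N_{\mathrm{net}} = \mathcal{O}(d^2(\log d + \log L))$ one obtains $\sigma\sqrt{\log N_{\mathrm{net}}} = \mathcal{O}\big(1/\sqrt{\log d + \log L}\big) = o(1)$, so the sum is $\leq 1$ in the stated regime. This is the only place the exact scaling of $\sigma$ enters, and making the covering number and the deterministic gradient bound line up against the sub-Gaussian parameter $\sigma$ is the main obstacle; the slack in the constants (and the ``high probability'') is what absorbs the small-$L$ edge cases. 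A cleaner alternative would replace the net by Dudley's entropy integral for the process $x \mapsto f(x)$, but the net argument suffices and makes the dependence on $\sigma$ fully transparent.
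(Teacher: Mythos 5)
Your proposal is correct, and items 1 and 2 follow the paper's own argument essentially verbatim (Hoeffding on the $2p$ bounded summands of $\norm{\beta}^2$, then the orthogonality identity $\VV_x[f]=\norm{\beta}^2/p$ combined with the lower tail of item 1). Where you genuinely diverge is item 3. The paper also uses a covering net of the torus, but it runs a \emph{second-order} argument: at each net point $r$ it writes a Taylor expansion $f(r+u)-f(r)=u^\top\nabla f(r)+u^\top R(r,u)u$, and it controls $|f(r)|$, $|\partial_i f(r)|$, and the remainder matrices $A(r)_{ij},B(r)_{ij}$ \emph{probabilistically} via McDiarmid at every net point, which lets it take a coarse net of radius $\epsilon\asymp 1/(dL)$, independent of $p$, with log-cardinality $\mathcal{O}(d(\log d+\log L))$. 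You instead use the deterministic bound $\norm{\nabla f}_\infty\leq 2\sigma L\sqrt{p}$, which holds surely, so you only need the pointwise sub-Gaussian tail $\PP_\beta(|f(x)|\geq t)\leq 2\exp(-t^2/(2\sigma^2))$ at net points; the price is a much finer net of radius $\eta\asymp(\sigma L\sqrt{p}\,d)^{-1}$, and since $p$ can be as large as $(2L+1)^d$ this inflates the log-cardinality to $\mathcal{O}(d^2(\log d+\log L))$. The trade-off is quantitative: your route is simpler (no Taylor remainder bookkeeping, no probabilistic derivative control) but requires $\sigma\lesssim 1/(d\sqrt{\log d+\log L})$ for the union bound to close, whereas the paper's coarser net only requires $\sigma\lesssim 1/\sqrt{d(\log d+\log L)}$, i.e.\ it would tolerate a larger variance parameter. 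Both constraints are comfortably met by the stated $\sigma=\Theta(1/(d(\log d+\log L)))$, so both proofs establish the theorem as written; the paper's extra machinery buys slack in $\sigma$ that the statement does not actually exploit.
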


In the above theorem, (1) shows that $\norm{\beta}^2$ is of the order of $p \sigma^2$ therefore of potentially higher norm than $\beta_{\text{MNLS}}$ (which scales like $M$), (2) shows that $f$ is not concentrated, and (3) shows that $f$ is bounded by a constant, which leaves open the amenability to realize it as an quantum expectation value of an observable $O$ with $\norm{O}_{\infty}$ bounded by a constant, which is a property of commonplace quantum observables. This theorem gives a function that is not impossible to achieve from a VQC, far from the corresponding classical model, and not concentrated. However, one needs to find a quantum circuit capable of implementing such a function.

\vspace{-1em}

\subsection{Open Questions}\label{subsec:Limitations}
\vspace{-1em}

In the previous sections, we showed how quantum models can often converge to a solution close to the MNLS estimator, and how to offer potential quantum advantage over RF surrogates using arguments on the norm of the weight vector. One could design a VQC based on those indications to create a model far from its classical counterpart. In this Section, we offer open questions and suggestions for future work.

First, we focused our analysis on usual classical gradient descent and KRR, giving rise to the bias of converging towards the MNLS estimator. While our results can be extended for momentum based gradient descent \cite{nakerst2020gradientdescentmomentum} (see \autoref{app:mnls}), we did not explore other classical learning algorithms that may not converge to the same solution.

Our work shows how to obtain separation between quantum and classical learning models through the study of their weight vectors. This condition could be applied to VQCs with discrete or continuous input variables. However, finding a VQC with continuous variable input and without concentration issues can be challenging, while possible as explained in \autoref{thm:Example_Perfect_Fct}. Finding such VQCs or how to use ones with discrete input variables for useful learning problems must be tackled in future works. 

One could envision a change of the feature map, as the characteristics of the MNLS solution may then differ. Our study would need to be refined but could be adapted to any feature map. However, an intuitive case where one expects to find a quantum advantage would be when the individual components of the feature map are functions that are easy to compute on a quantum computer, but hard to do so on a classical one. Therefore even trying to train a classical surrogate wouldn't be possible. For instance, one can create a feature map inspired by cryptography \cite{Shor_1997,liu2021rigorous}, or create feature maps involving the ground state of data dependent Hamiltonians \cite{umeano2024ground}.

Finally, while our work focus on avoiding RF based surrogate techniques, a practitioner would need to study more in details the training of the quantum model. In addition with concentration issues mentioned in this paper, other training issues can occur including local minima.

\vspace{-1em}
\section{Conclusion}
\vspace{-1em}

Using variational quantum circuits (VQCs) for machine learning was originally motivated by the fact that they realize linear models in high dimensions. However, classical dequantization methods have been proposed to approximate certain forms of these quantum models.

We point out that, for the same feature map, quantum and classical models do not converge towards the same solution.  Classical linear regression trained with gradient descent or via the associated kernel ridge regression (KRR) converges towards the minimum norm least square estimator (MNLS). In the overparametrized regime (feature space dimension higher than size of training dataset) however, an infinite number of solutions fit the training data, and quantum models generally do not converge to the MNLS, due to the special parametrization of their weight vector.
It could be that this inductive bias of quantum models leads to solutions that generalize better to the ground truth function.
Besides, since the MNLS solution can be anyways approximated to some degree via regression on a polynomial number of random features, the use of a quantum computer to obtain it may be unecessary. However, since the quality of this approximation is also influenced by the spectrum of the empirical kernel matrix (which depends on the exact feature map and data distribution), we do not rule out a case where quantum computers could provide an advantage by evaluating the MNLS.

We investigate in greater details the case of the Fourier feature map with integer coefficients. In this case, the condition number of the kernel matrix is constant in high probability whenever there is polynomially more features than datapoints, so it is the most favorable case to apply random features regression. We show examples of quantum circuits that implement the same feature map and whose weight vector has a norm much bigger than the norm of the MNLS. Therefore there could exist a separation between quantum and classical models in this cases.

Unfortunately, the proposed quantum models are highly concentrated, which makes them unusable in practice. The concentration of a linear function for integer Fourier features with uniform distribution also directly depends on the norm of the weight vector. We asked whether it was possible to simultaneously have a  weight vector norm scaling like the number of features and the function to be bounded. We show that it is possible and we exhibit a function with these properties. The quantum circuit realizing these models remains to be found.

\section{Acknowledgments}

The authors warmly thank Hela Mhiri, and Elham Kashefi for the fruitful discussions. All the work was conducted at Sorbonne University and University of Edingburgh. ST was supported by Pasqal. LM was supported by the Naval Group Centre of Excellence for Information Human factors and Signature Management (CEMIS). EM is supported by the grant ANR-22-PNCQ-0002.

\section{Author contributions}
S.T and J.L conceived the project. S.T, L.M, E.Z.M, J.L contributed to the theoretical developments. All authors contributed to the technical discussions and writing of the manuscript.

\bibliographystyle{IEEEtran}
\bibliography{refs}

\clearpage

\onecolumngrid
\appendix

\tableofcontents

\newpage
\section{Table of notations}
\label{sec:notations}

\begin{table*}[h!]
\begin{tabular}{|c|c|c|c|}
\hline
Notation                  & Object                                       & Notation   & Object                                 \\ \hline
$x$                       & Input vector                                 & $M$        & Size of the dataset                    \\
$d$                       & Input dimension                              & $\bPhi$    & Data matrix of shape $(M, p)$          \\
$\phi(x)$                 & Feature map                                  & $\by$      & Vector of targets of shape $(M, 1)$    \\
$p$                       & Given context: dimension of $\phi(x)$, size of $\Omega$  & $D$        & Number of random features              \\
$f_Q(x; \theta), \beta_Q$ & Quantum model and its weight vector          & $\Omega$   & Spectrum of the quantum model          \\
$\theta$                  & Parameters of the Quantum circuit            & $\Omega^*$ & Spectrum of the quantum model without 0\\
$f^*, \beta^*$            & Target function and its weight vector        & $\Omega_+$ & Subset of $\Omega^*$ such that $\forall \omega \in \Omega_+, -\omega \in \Omega\backslash\Omega_+$\\
$\fmnls, \bmnls$          & Minimum norm estimator and its weight vector &  $n$       & Number of qubits\\
$K$                          &        Kernel matrix                                      & $N$        & $N=2^n$, size of the Hilbert space     \\ \hline
\end{tabular}
\caption{Main notations used in the paper.}
\label{tab:notations}
\end{table*}

\section{Minimum norm least square estimator}
\label{app:mnls}

In this section, we give more details about the overparameterized regime of classical linear regression. In this case, an infinite number of weight vectors can set the empirical risk is zero. However, the algorithms of GD and KRR will converge towards a specific vector $\beta_{\mathrm{MNLS}}$ called \textbf{minimum norm least square} estimator (MNLS) \cite{hastie2022surprises}. $\beta_{\mathrm{MNLS}}$ is the vector of minimal norm among the minimizers of the empirical loss, and it is provably unique:
\begin{equation}
\label{eq:bmnls}
    \beta_{\mathrm{MNLS}} = \arg \min \norm{\beta}_2 \; \text{with} \;  \mathcal{L}(\beta^\top \phi, f^{\ast}) = 0 \,
\end{equation}
which can also be written
\begin{equation}
    \beta_{\text{MNLS}} = \Phi^\top(\Phi\Phi^\top)^{-1}\by
\end{equation}
This result is formalized in \autoref{theorem:ConvergenceMNLS}. It is not a new result, but we give a proof for completeness.
The key reason for such a property is that when performing GD, each iterate of $\beta$ stays in the subspace $V = \text{span}(\phi(x_1), \dots \phi(x_M))$, the \textit{row space} of the data. Each vector of $\RR^p$ can be decomposed as $\beta = \beta_V + \beta_{V^\perp}$ where $\beta_{V^\perp} \in V^\perp$, the orthogonal of $V$ in $\RR^p$, such that $\Phi\beta_{V^\perp} = 0$.

Each vector $\hat{\beta}$ such that $\mathcal{L}(\hat{\beta}) = 0$ can then be written 
\begin{equation}
    \hat{\beta} = \bmnls + u \, ,
\end{equation}
where $u \in V^\perp$, and $\Phi\bmnls = \by, \: \Phi u = 0, \: \Vert \hat{\beta}\Vert ^2 = \Vert \beta_{\text{MNLS}}\Vert ^2 + \Vert u\Vert ^2$.
Since the iterates of GD stay in $V$, then $u=0$ all along.

\begin{restatable}[From \cite{hastie2022surprises}]{thm}{ConvergenceMNLS}
\label{theorem:ConvergenceMNLS}
    Let $\beta_0 = 0$ the initialization of a gradient descent algorithm. Let the following iterations be defined by 
    \begin{equation}
    \label{eqn:gd}
        \beta_{k+1} = \beta_k + \gamma\:\Phi^\top (\by - \Phi\beta_k) \, ,
    \end{equation}
    with $\gamma$ the learning rate such that $0 \leq \gamma \leq 1/\lambda_{\max}(\Phi^\top X)$ where $\lambda_{\max}(\Phi^\top \Phi)$ is the largest eigenvalue of $\Phi^\top \Phi$. Then: 
    \begin{itemize}
        \item $\beta_k$ converges towards the minimum norm least square estimator $\bmnls$ defined in \autoref{eq:bmnls}.
        \item $\beta_{\rm MNLS} = \Phi^\top(\Phi\Phi^\top)^{-1}\by$.
    \end{itemize}  
\end{restatable}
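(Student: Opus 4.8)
The plan is to prove the two claims in sequence, handling the closed-form characterization of $\bmnls$ first since the convergence argument relies on it. For the formula $\bmnls = \Phi^\top(\Phi\Phi^\top)^{-1}\by$, I would work in the overparameterized regime where $\Phi$ has full row rank $M$, so that the $M \times M$ Gram matrix $\Phi\Phi^\top$ is invertible. I would first verify that the candidate vector interpolates the data, since $\Phi\,\Phi^\top(\Phi\Phi^\top)^{-1}\by = \by$, and that it lies in the row space $V = \mathrm{Range}(\Phi^\top)$ by construction. I would then invoke the orthogonal decomposition already recalled in the excerpt: any interpolating vector $\hat\beta$ with $\Phi\hat\beta = \by$ differs from this candidate by an element $u \in \ker\Phi = V^\perp$, and Pythagoras gives $\norm{\hat\beta}^2 = \norm{\bmnls}^2 + \norm{u}^2 \geq \norm{\bmnls}^2$, with equality iff $u = 0$. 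This simultaneously establishes minimality and uniqueness, identifying $\bmnls$ with the stated closed form.

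For the convergence claim, the key observation is that the gradient-descent iterates never leave the row space. Since $\beta_0 = 0 \in V$ and each update adds the vector $\gamma\,\Phi^\top(\by - \Phi\beta_k) \in \mathrm{Range}(\Phi^\top) = V$, an immediate induction gives $\beta_k \in V$ for all $k$. I would then track the error $e_k = \beta_k - \bmnls$. Using $\Phi\bmnls = \by$, the iteration rewrites as $e_{k+1} = (I - \gamma\,\Phi^\top\Phi)\,e_k$, whence $e_k = (I - \gamma\,\Phi^\top\Phi)^k e_0$ with $e_0 = -\bmnls \in V$.

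The heart of the argument is the spectral analysis of $A := \Phi^\top\Phi$ on $V$. The matrix $A$ is symmetric positive semidefinite, with $\ker A = V^\perp$ and $\mathrm{Range}\,A = V$; restricted to $V$ it is strictly positive definite, with eigenvalues lying in $[\lambda_{\min}^+, \lambda_{\max}]$ where $\lambda_{\min}^+ > 0$ is the smallest nonzero eigenvalue. The condition $0 < \gamma \leq 1/\lambda_{\max}(\Phi^\top\Phi)$ then forces every eigenvalue of $I - \gamma A$ on $V$ into the interval $[0,\,1 - \gamma\lambda_{\min}^+] \subset [0,1)$, so the operator norm of $(I - \gamma A)$ restricted to $V$ is at most $1 - \gamma\lambda_{\min}^+ < 1$. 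Because $e_0 \in V$ and $(I-\gamma A)$ preserves $V$, this yields the geometric bound $\norm{e_k} \leq (1 - \gamma\lambda_{\min}^+)^k\,\norm{\bmnls} \to 0$, giving $\beta_k \to \bmnls$.

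I expect the main obstacle to be conceptual rather than computational: one must make explicit why the iteration converges to $\bmnls$ specifically and not to some other interpolating solution. The answer is precisely the confinement $\beta_k \in V$, which guarantees that $e_0$ has no component in $\ker\Phi$; on that kernel the map $I - \gamma A$ acts as the identity and would leave any such component frozen forever. Keeping careful track of this splitting, and of the strictness $1 - \gamma\lambda_{\min}^+ < 1$ (which requires $\gamma > 0$ together with the positivity of the smallest nonzero eigenvalue), is the delicate part; the remainder is routine linear algebra.
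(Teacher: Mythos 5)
Your proof is correct, and its treatment of the convergence claim takes a genuinely different route from the paper's. Both proofs share the same two structural ingredients: the confinement of the iterates to the row space $V=\mathrm{Range}(\Phi^\top)$, and the Pythagorean decomposition $\norm{\hat\beta}^2=\norm{\bmnls}^2+\norm{u}^2$ with $u\in\ker\Phi$ that identifies $\Phi^\top(\Phi\Phi^\top)^{-1}\by$ as the unique minimum-norm interpolator. Where you diverge is in how convergence itself is established. The paper simply \emph{asserts} that gradient descent with the given step size converges to \emph{some} minimizer of the empirical loss, and then spends its effort showing that the only minimizer lying in the row space is the MNLS, so the limit must be it; the convergence assertion is left as a black box (a standard fact, but unproven there). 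You instead prove convergence directly: writing $e_k=\beta_k-\bmnls$, deriving the recursion $e_{k+1}=(I-\gamma\,\Phi^\top\Phi)\,e_k$, and observing that $e_0=-\bmnls$ lies in the invariant subspace $V$ on which $I-\gamma\,\Phi^\top\Phi$ has operator norm at most $1-\gamma\lambda_{\min}^{+}<1$. This buys you two things the paper's argument does not give: an explicit geometric rate $\norm{e_k}\le(1-\gamma\lambda_{\min}^{+})^k\norm{\bmnls}$, and a self-contained argument that makes transparent exactly why the iteration cannot drift toward any other interpolating solution (any component in $\ker\Phi$ would be frozen by the iteration map, and the initialization rules such a component out). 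You also correctly flag that the theorem's stated hypothesis $0\le\gamma$ must really be $0<\gamma$ for convergence to hold, a point the paper glosses over; and your standing assumption that $\Phi\Phi^\top$ is invertible is the same implicit assumption the paper makes when it writes $(\Phi\Phi^\top)^{-1}$.
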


\begin{proof}
     The algorithm converges to a minimizer of $\norm{y - X\beta}^2$ noted $\hat{\beta}$ for the choice of the step size. Furthermore, at each iteration iteration of the algorithm, $\beta_k$ lies in the row space of $X$. Then $\hat{\beta}$ also lies in the row space of $X$, and we will show that it is necessary the minimum norm least square.
     
     If $(X^\top X)$ is low rank, then the minimum least square is unique as shown previously.
     
     Otherwise, if $(XX^\top)$ is full rank, let us denote $\beta^* = X^\top \alpha^\star = X^\top (XX^\top)^{-1}y$. $\beta^*$ is a minimizer of the least square loss and is in the row space of $X$. We also have that $X\beta = y$. We will show that there are no other minimizer of the least square loss on the row space of $X$ and that $\beta^*$ is the minimum norm least square.
     Let $\beta = X^\top \alpha$ another minimizer of the least square loss in the row space of $X$. We then necessarily have $X\beta = y = X\beta^*$. Then $X(\beta - \beta^*) = XX^\top (\alpha - \alpha^*) = 0$, and $\alpha = \alpha^*$ since $XX^\top$ is full rank.
     
     Let $\beta = X^\top\alpha + v$ be a minimizer of the least square loss where $v$ is in the orthogonal of the row space of $X$. We then have 
     \begin{align*}
         X\beta &= X\beta^*\\
         XX^\top \alpha + Xv &= XX^\top \alpha^*\\
         XX^\top(\alpha^* - \alpha) &= Xv
     \end{align*}
     Let us compute $\norm{X^\top\alpha}^2 - \norm{X^\top\alpha^*}^2$.
     \begin{align}
         \norm{X^\top\alpha}^2 - \norm{X^\top\alpha^*}^2 &= (X^\top \alpha)^\top (X^\top\alpha - X^\top \alpha^*) + (X^\top\alpha - X^\top \alpha^*)^\top (X^\top\alpha^*)\\
         &= (X^\top\alpha + X^\top\alpha^*)^\top (X^\top\alpha - X^\top\alpha^*)\\
         &= (\alpha + \alpha^*)^\top XX^\top(\alpha - \alpha^*)\\
         &= -(\alpha + \alpha^*)^\top Xv\\
         &= - (X^\top\alpha + X\top\alpha^*)^\top v\\
         &= 0
     \end{align}
     where at the last equality we use the fact that $v$ is orthogonal to the row space of $X$.
     Finally we have $\norm{\beta}^2 - \norm{\beta^*}^2 = \norm{X^\top\alpha}^2 + \norm{v}^2 - \norm{X^\top\alpha^*}^2 = \norm{v}^2 > 0$ if $v$ is non zero.
     Then the minimum norm least square is the unique least square estimator to be in the row space of $X$.
\end{proof}

This argument remain true for momentum based gradient descent \cite{nakerst2020gradientdescentmomentum}. In this case, the iterations are defined as
\begin{equation}
    \begin{split}
        m_{t} &= g\: m_{t-1} - \gamma\nabla_{\beta_t}\mathcal{L} \:\: ; \:\: m_0 = 0\\
        \beta_{t+1} &= m_t +\beta_t \:\: ; \:\: \beta_0 = 0
    \end{split}
\end{equation}
where $m$ is an intermediate variable, called momentum, and $g>0$ is a hyperparameter. Since $\nabla_{\beta_t}\mathcal{L}$ is in the row space of the data, the momentum is also in the row space at each iteration, as well as the values of the parameters $\beta_t$.

\section{Random Feature regression}
\label{app:rff}

\begin{figure*}
\begin{center}
\includegraphics[width=0.95\textwidth]{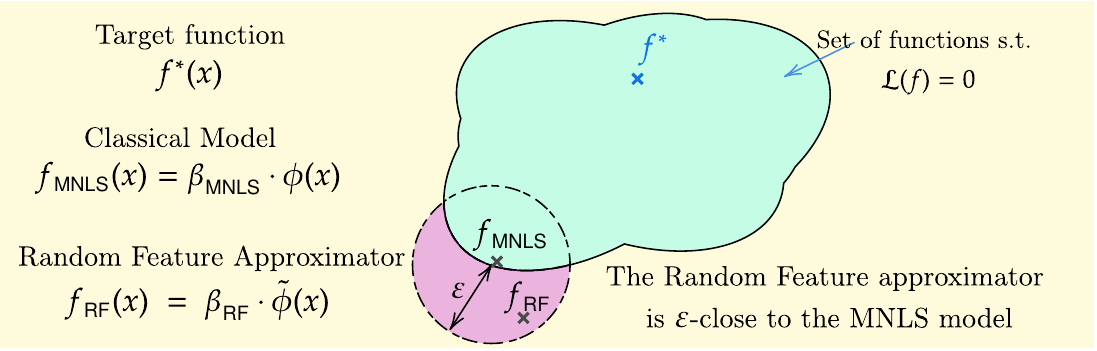}
    \caption{The random feature approximator approximates the MNLS of a feature map, given suitable behavior of the kernel matrix, which depends on the feature map and the data distribution. (see \cref{thm:rff})}
    \label{fig:RFF_MNLS}
\end{center}
\vspace{-2em}
\end{figure*}

We illustrate the relation between the target function, the MNLS function, and the random feature approximator in \autoref{fig:RFF_MNLS}.

We recall \autoref{thm:rff}:
\TheoremRFF*

\begin{proof}
 The \autoref{eqn:approx} can be proven with the same technique as  theorem 3.1 from \cite{rahimi2008uniform}.
    We will use \autoref{lemma:approxHilbert} of \cite{rahimi2008uniform} (reminded in below).\:
    Let $S$ be a subset of $\llbracket 1,\: p\rrbracket$ of cardinal $D$ sampled independently with the distribution $q$. Let $\hat{f} = \frac{1}{D}\sum_{k \in S} \frac{\beta_k}{\sqrt{q_k}}\phi_k$. Then $\EE[\hat{f}] = f$ and $\norm{\frac{\beta_k}{\sqrt{q_k}}\phi_k}_\mu = |\frac{\beta_k}{\sqrt{q_k}}| \norm{\phi_k}_\mu\leq \max_i|\beta_i\norm{|\phi_i}_\mu/\sqrt{q_i}$
    
    To derive \autoref{eqn:approx_mnls}, we note that $\beta_i = \sqrt{q_i}\;\boldsymbol{k}^{\top}K^{-1}\by$ where $\boldsymbol{k} = [\phi_i(x_1) \dots \phi_i(x_M)]^{\top}$. We have then
    \begin{align}
        |\beta_i/\sqrt{q_i}| &\leq |\boldsymbol{k}^{\top}K^{-1}\by|\\
        &\leq \norm{\boldsymbol{k}}\norm{K^{-1}\by}\\
        &\leq \frac{M}{\lambda_{\min}(K)}
    \end{align}
    The second equation is obtained using Cauchy Schwarz inequality, and the last one by using the facts that $\norm{\boldsymbol{k}} \leq \sqrt{M}$ because $\norm{\boldsymbol{k}}_\infty \leq 1$, and $\norm{\by} \leq \sqrt{M}$ because $\norm{\by}_\infty \leq 1$
\end{proof}

\begin{lemma}
\label{lemma:approxHilbert}
Lemma 4 of \cite{rahimi2008weighted}
    Let $X = \{x_1, \dots, x_D\}$ be iid random variables in a ball of radius $R$ centered around the origin in a Hilbert space. Let $\bar{X} = \frac{1}{D}\sum_k x_k$ their average. Then for any $\delta > 0$, with probability at least $1-\delta$, 
    \begin{equation}
        \norm{\bar{X} - \EE[\bar{X}]} \leq \frac{R}{\sqrt{D}}\big(1 + \sqrt{2\log \frac{1}{\delta}}\big)
    \end{equation}
\end{lemma}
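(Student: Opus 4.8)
The plan is to split the argument into two independent parts: first bound the \emph{expected} deviation $\EE[\norm{\bar X - \EE[\bar X]}]$, then show that $\norm{\bar X - \EE[\bar X]}$ concentrates around this expectation. The two contributions will reproduce exactly the summands $1$ and $\sqrt{2\log(1/\delta)}$ appearing in the stated bound.

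For the expectation, I would pass to the second moment via Jensen, writing $\EE[\norm{\bar X - \EE[\bar X]}] \le \sqrt{\EE[\norm{\bar X - \EE[\bar X]}^2]}$. Expanding the squared Hilbert-space norm and using independence of the $x_k$ to cancel the cross terms leaves $\EE[\norm{\bar X - \EE[\bar X]}^2] = \frac{1}{D^2}\sum_k \EE[\norm{x_k - \EE[x_k]}^2]$. Since each $x_k$ lies in the ball of radius $R$, the per-sample variance $\EE[\norm{x_k}^2] - \norm{\EE[x_k]}^2$ is at most $R^2$, so the second moment is bounded by $R^2/D$ and therefore $\EE[\norm{\bar X - \EE[\bar X]}] \le R/\sqrt{D}$.

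For the concentration, I would regard $g(x_1,\dots,x_D) := \norm{\bar X - \EE[\bar X]}$ as a scalar function of the $D$ independent arguments and verify the bounded-differences property. Replacing a single $x_k$ by $x_k'$ perturbs $\bar X$ by $(x_k - x_k')/D$, so by the reverse triangle inequality the value of $g$ changes by at most $\norm{x_k - x_k'}/D \le 2R/D$. McDiarmid's bounded-differences inequality with $\sum_k c_k^2 = D\,(2R/D)^2 = 4R^2/D$ then gives $\PP[\,g \ge \EE[g] + t\,] \le \exp(-t^2 D/(2R^2))$. Setting the right-hand side equal to $\delta$ yields $t = (R/\sqrt{D})\sqrt{2\log(1/\delta)}$, and combining with the expectation bound from the previous step shows that with probability at least $1-\delta$,
\[
g \le \frac{R}{\sqrt{D}} + \frac{R}{\sqrt{D}}\sqrt{2\log(1/\delta)} = \frac{R}{\sqrt{D}}\Big(1 + \sqrt{2\log(1/\delta)}\Big),
\]
which is exactly the claim.

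The only point requiring care — and the natural main obstacle — is that McDiarmid's inequality is usually stated for real-valued coordinates, whereas here the arguments are abstract Hilbert-space vectors. I would note that the inequality holds verbatim over any product of probability spaces, so vector-valued inputs cause no difficulty; the only Hilbert-space facts used are the triangle and reverse-triangle inequalities and the Pythagorean cancellation of cross terms under independence. I would also emphasize that the single hypothesis $\norm{x_k} \le R$ does double duty: it controls the per-sample variance in the expectation step and simultaneously supplies the bounded-difference constant in the concentration step.
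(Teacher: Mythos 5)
Your proof is correct, and it is precisely the standard argument: the paper itself does not prove this lemma but imports it verbatim from Rahimi and Recht, whose proof is exactly your two-step scheme (bounding $\EE\norm{\bar X - \EE[\bar X]}$ by $R/\sqrt{D}$ via the second moment and independence, then applying McDiarmid's bounded-differences inequality with $c_k = 2R/D$). Your closing remark that McDiarmid applies over arbitrary product probability spaces, so Hilbert-space-valued coordinates pose no issue, correctly addresses the only subtlety.
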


\section{Generalization}
\label{app:generalization}

 Some usual generalization bounds from classical machine learning theory can link generalization to the norm of the weight vector. To state one example explicitly, the direct application of Thm. 11.3 (using also Thm. 6.12) in \cite{mohri2018foundations} gives that for $M$ random iid samples of data points $x_i$ (from probability measure $\mu$ on $\mathcal{X}$), it holds with probability at least $1-\delta$ over the choice of the dataset that:
\begin{align}\label{eq:genbound}
R(\beta) &\leq 2 \cdot 4(\norm{\beta}_2 + \norm{\beta^\ast}_2) \cdot \frac{\norm{\beta}_2}{\sqrt{M}} \,+\, 3 (\norm{\beta}_2 + \norm{\beta^\ast}_2)^2 \sqrt{\frac{\log(\frac{2}{\delta})}{2M}}\
\end{align}
with $R(\beta) := \norm{f_\beta - f_{\beta^\ast}}_\mu^2 = \int_{\RR^d} (f_\beta(x) - f_{\beta^\ast}(x))^2 d\mu(x)$, and we assumed zero training error for $\beta$.
So indeed, the lower the weight vector norm $\norm{\beta}_2$ is, the more it is guaranteed (via Eq.~\eqref{eq:genbound}) that the generalization error $R(\beta)$ is low.
However these generalizations bound may generally be loose, i.e. a model (quantum, say) different from the MNLS model, hence with a higher weight vector norm $\norm{\beta}_2$, does not necessarily have a higher generalization error $R(\beta)$ a priori.

\section{Details on potential quantum advantage}\label{sec:Proof_Error_Model}

Because VQCs are parametrized in a different way from  classical models, they do not necessarily converge towards the MNLS as a result of the training. The weight vector associated to the quantum model may indeed not be contained in the row space of the data, as it is the case for classical linear regression.
Contrary to the classical case, one does not have access directly to the coefficients $\beta$ while tuning a quantum model. One instead optimizes a vector of parameters $\theta$ such that $\beta = \beta(\theta)$ and optimizes the loss function $\mathcal{L}(\theta) = \norm{y - X\beta(\theta)}^2$.
The update rule defined in \autoref{eqn:gd} becomes
\begin{align}
\label{eqn:gdq}
    \beta_{k+1} &= \beta(\theta_{k+1}) = \beta\Big(\theta_k - t\frac{\partial \mathcal{L}}{\partial \theta}\Big|_{\theta=\theta_k}\Big)\\
    \frac{\partial \mathcal{L}}{\partial \theta} &= \frac{\partial \beta}{\partial\theta}^\top X^\top (X\beta - y)
\end{align}
If $t$ is small enough, one can linearize \autoref{eqn:gdq} and write
\begin{align}
    \beta_{k+1} &= \beta_k - t\frac{\partial \beta}{\partial \theta}\Big|_{\theta=\theta_k}^\top \frac{\partial \mathcal{L}}{\partial \theta}\Big|_{\theta=\theta_k}\\
    &= \beta_k - t\frac{\partial \beta}{\partial \theta}\Big|_{\theta=\theta_k}^\top \frac{\partial \beta}{\partial \theta}\Big|_{\theta=\theta_k}X^\top (X\beta - y)
\end{align}
In the general case, $\beta$ does not remain in the row space of $X$ therefore it does not necessarily converge towards the minimum least square estimator. This constitutes a crucial potential distinction between quantum and classical models. It remains to see when $\beta_Q$ can converge far from $\beta_{\text{MNLS}}$ or an approximation of MNLS via Random Features.

We will also explain the way in which quantifying the weight vector norm of the quantum models ($\norm{\beta_Q}_2$) can serve as proxy to quantify the difference between a quantum model $f_{\beta_Q}$ and the classical solution $f_{\beta_{\rm MNLS}}$ (i.e. $\norm{f_{\beta_Q} - f_{\beta_{\rm MNLS}}}_{\infty}$), and hence the possibility of a potential quantum advantage; see \cref{thm:looking-at-norm-betaq-to-assess-potential-quantum-advantage}.

We adopt here a slightly more general setup than the one of the main text. Namely, we consider the following:
	\begin{itemize}
		
		\item $(\mathcal{V},\langle  \cdot , \cdot \cdot \rangle)$ is a finite-dimensional Hilbert space over $\mathbb{K}$  ($\mathbb{K}=\mathbb{R}$ or $\mathbb{C}$) --- the \emph{feature space} --- of $\dim_{\mathbb{K}}(\mathcal{V}):=p$. (The main text essentially takes the case $\mathcal{V}=\mathbb{R}^p$, along with its standard inner-product $\langle v, w \rangle = v^T w$.)
		\item $\mathcal{X}$ is any set. (The main text takes the case of a subset $\mathcal{X}$ of $\mathbb{R}^d$.)
		\item $\phi:\mathcal{X} \to \mathcal{V}$ is any bounded function, called the \emph{feature map}.
	\end{itemize}
	Furthermore, we denote:
	\begin{itemize}
		\item $L^{\infty}(\mathcal{X})$ the space of bounded functions from $\mathcal{X}$ to $\mathbb{K}$,
		\item for $\beta\in \mathcal{V}$: the map $f_{\beta}:\mathcal{X}\to \mathbb{K}$ defined by $f_\beta(x):= \langle \beta , \phi(x) \rangle$,
		\item $\mathtt{f}:\mathcal{V} \to L^{\infty}(\mathcal{X})$ the map that sends a weight vector to its associated linear model in feature space, i.e. $\mathtt{f}(\beta):=f_{\beta}$,
		\item for $k\geq1,x_{1},\dots,x_{k} \in \mathcal{X}$: the $k \times k$ $\mathbb{K}$-matrix $K_{x_{1},\dots,x_{k}}$ defined by: $[K_{x_{1},\dots,x_{k}}]_{ij}:= \langle \phi(x_{i}) , \phi(x_{j}) \rangle$,
		\item for $\beta\in \mathcal{V}$: $\norm{\beta}_2 := \sqrt{\langle \beta,\beta \rangle}$.
	\end{itemize}

	\medskip
	
	\begin{lemma}[Lipschitz continuity of $\mathtt{f}$ -- w.r.t. $\lVert \cdot \rVert_{2},\,\lVert \cdot \rVert_{\infty}$]\label{lem:lipschitz}
		For all $\beta,\beta'\in \mathcal{V}$, we have:
		\begin{equation}
			\lVert f_\beta - f_{\beta'}\rVert_{\infty} \leq \sup_{x \in \mathcal{X}}\!\big(\lVert \phi(x)\rVert_{2}\big) \ \lVert \beta - \beta'\rVert_{2}\,.
		\end{equation}
		\begin{proof}
			It follows from the fact that for any $x\in\mathcal{X}$,
			\begin{align}
				\lvert f_{\beta}(x) - f_{\beta'}(x)\rvert
				=  \big\lvert \langle  \beta - \beta' , \phi(x) \rangle \big\rvert
				\leq  \lVert \beta - \beta'\rVert_{2} \  \lVert \phi(x)\rVert_{2},
			\end{align}
			where the Cauchy-Schwartz inequality was used.
		\end{proof}
	\end{lemma}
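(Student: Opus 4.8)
The plan is to prove the bound pointwise in $x$ and then pass to the supremum that defines the $\infty$-norm. The whole statement is just a packaging of the Cauchy--Schwarz inequality in the feature space $(\mathcal{V},\langle \cdot , \cdot \rangle)$, so there is no real structural difficulty; the only care needed concerns the interplay of the two suprema (the one hidden inside $\lVert \cdot \rVert_\infty$ on the left and the one written explicitly on the right) and checking that the right-hand side is a well-defined finite constant.

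First I would fix an arbitrary $x \in \mathcal{X}$ and use linearity of $\langle \cdot , \cdot \rangle$ in its first argument, together with the definition $f_\beta(x) = \langle \beta , \phi(x)\rangle$, to write
\begin{equation}
    f_\beta(x) - f_{\beta'}(x) = \langle \beta - \beta' , \phi(x)\rangle .
\end{equation}
Taking absolute values and applying Cauchy--Schwarz then gives
\begin{equation}
    \lvert f_\beta(x) - f_{\beta'}(x)\rvert \leq \lVert \beta - \beta'\rVert_2 \, \lVert \phi(x)\rVert_2 .
\end{equation}
Since this holds for \emph{every} $x$, I would next replace the $x$-dependent factor by its worst case, $\lVert \phi(x)\rVert_2 \leq \sup_{x' \in \mathcal{X}} \lVert \phi(x')\rVert_2$, making the right-hand side independent of $x$. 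Taking the supremum over $x \in \mathcal{X}$ on the left produces exactly $\lVert f_\beta - f_{\beta'}\rVert_\infty$ while leaving the desired product on the right, which is the claimed inequality.

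The only points I would flag as needing (trivial) justification are the following. The constant $C := \sup_{x \in \mathcal{X}} \lVert \phi(x)\rVert_2$ is finite precisely because $\phi$ is assumed bounded, so the right-hand side is well defined and the statement is not vacuous. For $\mathbb{K} = \mathbb{C}$ one should adopt the convention in which $\langle \cdot , \cdot \rangle$ is linear in the slot occupied by $\beta$, so that the difference $\beta - \beta'$ factors out cleanly; both the modulus estimate and Cauchy--Schwarz are insensitive to this choice. There is thus no genuine obstacle here: the argument is a single application of Cauchy--Schwarz followed by a supremum, and it shows that $\mathtt{f}$ is Lipschitz with constant $C$ as a map from $(\mathcal{V}, \lVert \cdot \rVert_2)$ to $(L^\infty(\mathcal{X}), \lVert \cdot \rVert_\infty)$.
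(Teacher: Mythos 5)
Your proposal is correct and follows essentially the same route as the paper's own proof: a single application of the Cauchy--Schwarz inequality to $\lvert f_\beta(x) - f_{\beta'}(x)\rvert = \lvert \langle \beta - \beta', \phi(x)\rangle\rvert$, followed by a supremum over $x \in \mathcal{X}$. Your extra remarks on finiteness of $\sup_{x}\lVert \phi(x)\rVert_2$ (from boundedness of $\phi$) and the complex inner-product convention are sound bookkeeping the paper leaves implicit.
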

	
	\begin{lemma}[Reverse Lipschitz continuity of $\mathtt{f}$ -- w.r.t. $\lVert \cdot \rVert_{2},\ \lVert \cdot \rVert_{\infty}$]\label{lem:reverse-lipschitz}
		Suppose that $\operatorname{span}(\phi(\mathcal{X}))=\mathcal{V}$. Then for any $x_{1},\dots,x_{p}\in \mathcal{X}$ such that $\phi(x_1),\dots,\phi(x_p)$ are linearly independent, we have, for all $\beta,\beta'\in \mathcal{V}$:
		
		\begin{equation}
			\lVert \beta - \beta'\rVert_{2} \leq \sqrt{p} \ \,\lambda_{\mathrm{min}}\!\left( K_{x_{1},\dots,x_{p}}\right)^{-1/2} \ \,\lVert f_\beta - f_{\beta'}\rVert_{\infty}\,,
		\end{equation}
		where $\lambda_{\mathrm{min}}(\cdot)$ denotes the smallest eigenvalue.
		
		\begin{proof} Let $x_{1},\dots,x_{p}\in \mathcal{X}$ be such that $\phi(x_1),\dots,\phi(x_p)$ are linearly independent. Let $\mathcal{B}$ be an orthonormal basis of $\mathcal{V}$, and denote by $X$ the $p \times p$ $\mathbb{K}$-matrix giving in columns the vectors $\phi(x_i)$ expressed in the basis $\mathcal{B}$, i.e.:
			\begin{align}\label{eq:matrix-of-Phi-xis}
				X :=
				\begin{bmatrix}
					| & | & & | \\
					\big(\phi(x_1)\big)_{\mathcal{B}} & \big(\phi(x_2)\big)_{\mathcal{B}} & \cdots & \big(\phi(x_p)\big)_{\mathcal{B}} \\
					| & | & & |
				\end{bmatrix}.
			\end{align}
			Note that this matrix $X$ is invertible by assumption. For any $\beta \in \mathcal{V}$, denote also by $\vec{\beta} \in \mathbb{K}^p$ the column vector of its components in the basis $\mathcal{B}$.
			
			For any $\beta,\beta' \in \mathcal{V}$, we have:
			\begin{align}
				\lVert \beta - \beta' \rVert_{2}
				&= \lVert \vec{\beta} - \vec{\beta'} \rVert_{2}\\
				&= \lVert (X^{\dagger})^{-1} \,  X^{\dagger} \, (\vec{\beta} - \vec{\beta'}) \rVert_{2} \label{eq:reverse-lipschitz-proof--eq1}\\
				&\leq \lVert (X^{\dagger})^{-1}\rVert_{\infty} \   \quad\lVert X^{\dagger} \, (\vec{\beta} - \vec{\beta'}) \rVert_{2} \label{eq:reverse-lipschitz-proof--eq2}\\
				&\leq \lVert (X^{\dagger})^{-1}\rVert_{\infty} \   \quad\sqrt{p} \ \lVert X^{\dagger} \, (\vec{\beta} - \vec{\beta'}) \rVert_{\infty} \label{eq:reverse-lipschitz-proof--eq3}\\
				&\leq \lVert (X^{\dagger})^{-1}\rVert_{\infty} \   \quad\sqrt{p} \ \lVert f_{\beta} - f_{\beta'}\rVert_{\infty} \label{eq:reverse-lipschitz-proof--eq4}\\
				&= \lambda_{\mathrm{min}}\!\left( K_{x_{1},\dots,x_{k}}\right)^{-1/2} \,  \,\sqrt{p} \ \lVert f_{\beta} - f_{\beta'}\rVert_{\infty}. \label{eq:reverse-lipschitz-proof--eq5}
			\end{align}
			In the above, the equality in \labelcref{eq:reverse-lipschitz-proof--eq1} is valid since $X^\dagger$ is (left-)invertible (since $X$ is invertible). The inequality in \labelcref{eq:reverse-lipschitz-proof--eq2} follows from the "compatibility" inequality $\lVert A v \rVert_{2} \leq \lVert A \rVert_{\infty} \lVert v \rVert_{2}$ (which holds by the fact that the spectral norm $\lVert \cdot \rVert_{\infty}$ is the operator norm induced by the vector $2$-norm). The inequality in \labelcref{eq:reverse-lipschitz-proof--eq3} uses the relation $\lVert v \rVert_{2} \leq \sqrt{p} \lVert v \rVert_{\infty}$ between the vector $2$-norm and the vector $\infty$-norm. The inequality in \labelcref{eq:reverse-lipschitz-proof--eq4} follows from the definitions, since $\lVert X^{\dagger} \, (\vec{\beta} - \vec{\beta'}) \rVert_{\infty} := \max_{i=1,\dots,p} \big\lvert \langle  \beta - \beta' , \phi(x_i) \rangle \big\rvert \leq \sup_{x \in \mathcal{X}} \big\lvert \langle  \beta - \beta' , \phi(x) \rangle \big\rvert$. Lastly, the equality in \labelcref{eq:reverse-lipschitz-proof--eq5} follows from the general fact that for any invertible matrix $A$, 
			$\lVert (A^{\dagger})^{-1}\rVert_{\infty} = \lVert ((A^{\dagger})^{-1})^\dagger ((A^{\dagger})^{-1}) \rVert_{\infty}^{1/2} = \lVert A^{-1} ((A^{\dagger})^{-1})\rVert_{\infty}^{1/2} =  \lVert (A^\dagger A)^{-1} \rVert_{\infty}^{1/2} = 
			\lambda_{\mathrm{min}}(A^\dagger A)^{-1/2}$, and the fact that $X^\dagger X = K_{x_{1},\dots,x_{p}}$.
		\end{proof}
	\end{lemma}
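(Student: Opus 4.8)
The plan is to recast the inequality as a purely linear-algebraic estimate obtained by evaluating the model difference at the $p$ sample points. Setting $\delta := \beta - \beta'$, I would first observe that $f_\beta(x) - f_{\beta'}(x) = \langle \delta, \phi(x)\rangle$, so the claim is really a statement about how faithfully the $p$ ``measurements'' $\langle \delta, \phi(x_i)\rangle$ control $\lVert \delta \rVert_2$. The structural input is that the hypothesis $\operatorname{span}(\phi(\mathcal{X}))=\mathcal{V}$ together with the linear independence of $\phi(x_1),\dots,\phi(x_p)$ makes these $p$ vectors a basis of the $p$-dimensional space $\mathcal{V}$, so $\delta$ is uniquely recoverable from its pairings against them, and one expects the recovery to degrade exactly through the conditioning of that basis.

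Concretely, I would fix an orthonormal basis $\mathcal{B}$ of $\mathcal{V}$, let $X$ be the $p\times p$ matrix whose columns are the coordinate vectors $\big(\phi(x_i)\big)_{\mathcal{B}}$, and write $\vec{\delta}$ for the coordinates of $\delta$. Then $X$ is invertible, and the vector of sample-point differences is exactly $X^{\dagger}\vec{\delta}$, whose $i$-th entry is $\langle \delta, \phi(x_i)\rangle$. The core of the argument is the identity $\vec{\delta} = (X^{\dagger})^{-1}(X^{\dagger}\vec{\delta})$, to which I apply the compatibility inequality $\lVert A v\rVert_2 \leq \lVert A\rVert_{\infty}\lVert v\rVert_2$ (spectral norm versus vector $2$-norm) to strip off $(X^{\dagger})^{-1}$, followed by the elementary bound $\lVert w\rVert_2 \leq \sqrt{p}\,\lVert w\rVert_{\infty}$ in $\mathbb{K}^p$ and the observation that $\lVert X^{\dagger}\vec{\delta}\rVert_{\infty} = \max_i \lvert\langle\delta,\phi(x_i)\rangle\rvert \leq \sup_{x\in\mathcal{X}}\lvert\langle\delta,\phi(x)\rangle\rvert = \lVert f_\beta - f_{\beta'}\rVert_{\infty}$.

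The one step needing genuine care, and the main obstacle, is identifying the leading constant $\lVert (X^{\dagger})^{-1}\rVert_{\infty}$ with $\lambda_{\min}(K)^{-1/2}$: since the Gram matrix satisfies $X^{\dagger}X = K_{x_1,\dots,x_p}$ and the operator norm of the inverse of an invertible matrix is the reciprocal of its smallest singular value, I obtain $\lVert (X^{\dagger})^{-1}\rVert_{\infty} = \sigma_{\min}(X)^{-1} = \lambda_{\min}(X^{\dagger}X)^{-1/2} = \lambda_{\min}(K)^{-1/2}$. In the complex case $\mathbb{K}=\mathbb{C}$ one must keep the conjugate transpose consistent throughout, so that the measurement map is genuinely $X^{\dagger}$ and $K = X^{\dagger}X$; otherwise the argument is simply the composition of these three elementary norm inequalities, which I would chain together to conclude.
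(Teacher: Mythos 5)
Your proposal is correct and follows essentially the same route as the paper's proof: the same coordinate matrix $X$ built from an orthonormal basis, the same identity $\vec{\delta} = (X^{\dagger})^{-1}X^{\dagger}\vec{\delta}$, the same chain of compatibility, $\ell_2$--$\ell_\infty$, and evaluation bounds, and the same identification $\lVert (X^{\dagger})^{-1}\rVert_{\infty} = \lambda_{\min}(X^{\dagger}X)^{-1/2} = \lambda_{\min}(K)^{-1/2}$ (stated in the paper via operator-norm identities rather than singular values, but mathematically identical). Your remark about keeping the conjugate transpose consistent in the complex case is a sound point that the paper handles implicitly.
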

	
	\begin{lemma}\label{lem:entry-wise-linindep-featuremap}
		For the case of the feature space $\mathcal{V}:=\mathbb{K}^p$
		and a feature map $\phi:\mathcal{X}\to\mathbb{K}^p$ of the form
		\begin{align}
			\phi(x) &=
			\begin{bmatrix}
				g_{1}(x) \\
				g_{2}(x) \\
				\vdots \\
				g_{p}(x)
			\end{bmatrix}, 
		\end{align}
		where $\mathcal{X}$ is a vector space, and $g_1,\dots,g_p: \mathcal{X} \to \mathbb{K}$ are \emph{linearly independent functions}, it holds that:
		\begin{align}
			\operatorname{span}(\phi(\mathcal{X}))=\mathcal{V}.
		\end{align}
		\begin{proof}
			We proceed by induction on $p \in \mathbb{N}_{\geq 1}$.
			
			For $p=1$, the claim is that there exists $x \in \mathcal{X}$ such that $g_1(x) \neq 0$, but this is true as the hypothesis of "$(g_1)$ being linearly independent" says that $g_1$ is not identically zero.
			
			Suppose the claim holds for $p\geq1$, i.e., there exists $x_1,\dots,x_p \in \mathcal{X}$ such that $\phi(x_1),\dots,\phi(x_p)$ are linearly independent, or in other words, such that the matrix 
			\begin{align}
				X(x_1,\dots,x_p) := 
				\begin{bmatrix}
					g_1(x_1) &  \cdots & g_1(x_p) \\
					\vdots & \ddots & \vdots \\
					g_p(x_1) & \cdots & g_p(x_p)
				\end{bmatrix}
			\end{align}
			has nonzero determinant. Let us show that there exists a value of $x_{p+1} \in \mathcal{X}$ such that the matrix
			\begin{align}
				X(x_1,\dots,x_p, x_{p+1} ) := 
				\begin{bmatrix}
					g_1(x_1) &  \cdots & g_1(x_p) & g_1(x_{p+1}) \\
					\vdots & \ddots & \vdots & \vdots \\
					g_p(x_1) & \cdots & g_p(x_p) & g_p(x_{p+1}) \\
					g_{p+1}(x_1) & \cdots & g_{p+1}(x_p) & g_{p+1}(x_{p+1})
				\end{bmatrix}
			\end{align}
			has nonzero determinant as well.
			Writing the cofactor expansion along the last column of the above matrix's determinant provides:
			\begin{align}\label{eq:entry-wise-linindep-featuremap-proof--expansion}
				\det\!\big[ X(x_1,\dots,x_p, x_{p+1}) \big]
				= \alpha_{1} g_1(x_{p+1}) + \dots +  \alpha_{p} g_p(x_{p+1}) + \alpha_{p+1} g_{p+1}(x_{p+1}),
			\end{align}
			where $\alpha_1,\dots,\alpha_p,\alpha_{p+1}$ are scalars that depend on $x_1,\dots,x_p$ but not on $x_{p+1}$, and we have in particular $\alpha_{p+1} = (-1)^{p+1} \det\!\big[ X(x_1,\dots,x_p) \big]$.
			
			Suppose by contradiction that no $x_{p+1}$ exists such that \cref{eq:entry-wise-linindep-featuremap-proof--expansion} is nonzero, i.e.:
			\begin{align}
				&\forall x_{p+1} \in \mathcal{X}\qquad 
				\alpha_{1} g_1(x_{p+1}) + \dots +  \alpha_{p} g_p(x_{p+1}) + \alpha_{p+1} g_{p+1}(x_{p+1})
				\ \ =\ \ 0
			\end{align}
			This means that we have the following null linear combination of functions on $\mathcal{X}$:
			\begin{equation}
				\alpha_{1} \, g_1 + \dots +  \alpha_{p} \, g_p + \alpha_{p+1} \, g_{p+1} \ = \ 0,
			\end{equation}
			and since $g_1,\dots,g_{p+1}$  are linearly independent functions, this implies that all these coefficients $\alpha_i$ are zero. In particular, the last one is zero, so
			\begin{equation}
				\det\!\big[ X(x_1,\dots,x_p) \big] = 0.
			\end{equation}
			But this is in contradiction with the starting hypothesis. Therefore, there must exits a $x_{p+1} \in \mathcal{X}$ such that \cref{eq:entry-wise-linindep-featuremap-proof--expansion} is nonzero, which concludes the proof.
		\end{proof}
	\end{lemma}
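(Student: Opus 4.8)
The plan is to prove this by a short duality argument rather than by tracking determinants, since the statement is really just a restatement of linear independence in dual form. I would argue by contradiction: suppose $W := \operatorname{span}(\phi(\mathcal{X}))$ is a proper subspace of $\mathcal{V} = \mathbb{K}^p$.

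First I would invoke the elementary finite-dimensional fact that a proper subspace $W \subsetneq \mathbb{K}^p$ admits a nonzero annihilating linear functional: extending a basis of $W$ to a basis of $\mathbb{K}^p$ and taking the coordinate functional dual to a basis vector lying outside $W$ yields a nonzero $\ell \in \mathcal{V}^*$ with $\ell|_W = 0$. In the standard coordinates this reads $\ell(v) = \sum_{i=1}^p c_i v_i$ for some coefficient vector $(c_1, \dots, c_p) \neq 0$.

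Next, because $\phi(x) \in W$ for every $x \in \mathcal{X}$, applying $\ell$ gives $\sum_{i=1}^p c_i g_i(x) = \ell(\phi(x)) = 0$ for all $x$. Thus $\sum_{i=1}^p c_i g_i$ is the zero function on $\mathcal{X}$ with $(c_i)$ not all zero, which directly contradicts the linear independence of $g_1, \dots, g_p$. Hence $W$ cannot be proper, and $\operatorname{span}(\phi(\mathcal{X})) = \mathcal{V}$.

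I do not anticipate any genuine obstacle here; the only point requiring care is the logical direction of the equivalence. Conceptually, the two conditions \textit{``the $g_i$ are linearly independent as functions''} and \textit{``the vectors $\phi(x)$ span $\mathbb{K}^p$''} are both equivalent to the nonexistence of a nonzero covector $(c_i)$ annihilating every $\phi(x)$, so the lemma becomes immediate once phrased this way. I would also remark in passing that the argument uses nothing about $\mathcal{X}$ beyond its being a set, so the vector-space hypothesis is superfluous, and that it sidesteps the induction and cofactor expansion entirely.
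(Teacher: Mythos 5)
Your proof is correct, and it takes a genuinely different route from the paper's. The paper proceeds by induction on $p$, maintaining points $x_1,\dots,x_p$ for which the matrix $X(x_1,\dots,x_p)=[g_i(x_j)]$ has nonzero determinant, and extending by one point via a cofactor expansion along the last column: if no extension existed, the expansion would give a null linear combination of $g_1,\dots,g_{p+1}$ whose last coefficient is $\pm\det X(x_1,\dots,x_p)\neq 0$, a contradiction. You instead pass to the dual: if $W:=\operatorname{span}(\phi(\mathcal{X}))$ were proper, a nonzero annihilating functional $\ell(v)=\sum_i c_i v_i$ would give $\sum_i c_i g_i(x)=\ell(\phi(x))=0$ for all $x$, contradicting the linear independence of the $g_i$. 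Your argument is shorter, exposes the statement as an equivalence (both conditions amount to the nonexistence of a nonzero covector annihilating every $\phi(x)$), and, as you observe, uses nothing about $\mathcal{X}$ beyond its being a set --- a hypothesis the paper's own proof does not exploit either. What the paper's induction buys is an explicit, greedy construction of $p$ points with $\det X(x_1,\dots,x_p)\neq 0$, which is the form in which the lemma is consumed downstream: the reverse-Lipschitz bound of \cref{lem:reverse-lipschitz} needs such a tuple to define its constant $c$, and the remark on analytic feature maps needs the function $(x_1,\dots,x_p)\mapsto\lvert\det X\rvert^2$ to be nonzero at at least one point. Note, however, that this is not a real loss for your approach: from $\operatorname{span}(\phi(\mathcal{X}))=\mathbb{K}^p$ one can always extract a basis $\phi(x_1),\dots,\phi(x_p)$ from the spanning set $\phi(\mathcal{X})$, recovering exactly that nonvanishing determinant, so your proof supports the same applications.
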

	\begin{remark}
		In the situation of \cref{lem:entry-wise-linindep-featuremap}, if $\mathcal{X}=\mathbb{R}^n$ and if the functions $g_1,\dots,g_p$ are (real or complex) \emph{analytic}, then it furthermore holds that the set
		\begin{equation}
			\big\{ (x_1,\dots,x_p) \in (\mathbb{R}^{k})^{p} \ \,|\,\ \phi(x_1),\dots,\phi(x_p)\text{ are linearly independent} \big\}
		\end{equation}
		has a complement of zero Lebesgue measure in $(\mathbb{R}^{k})^{p}$.
		
		In other words, we have in this case that \emph{almost all} $(x_1,\dots,x_p) \in \mathcal{X}^p$ satisfy the property that $\phi(x_1),\dots,\phi(x_p)$ are linearly independent.
		
		This is because in this case, the function $(x_1,\dots,x_p) \mapsto \lvert\det X\rvert^2$, where $X$ is the matrix of the vectors $\phi(x_1),\dots,\phi(x_p)$ (see \cref{eq:matrix-of-Phi-xis}), is real-analytic, and by assumption this function is non-zero on \emph{at least one} input $(x_1,\dots,x_p)$; so the remark follows by the fact that the zero set of an analytic and non identically zero function has measure zero \cite{mityagin2015zero}.
	\end{remark}
	
	\begin{lemma}[Fourier functions are linearly independent]\label{lem:Fourier-functs-are-lin-indep}
		For $p \geq 1$, and for $\mathcal{X}$ any subset of $\mathbb{R}$, it holds that:
		\begin{enumerate}
			\item  if $\omega_1,\dots,\omega_p \in \mathbb{R}$ are all distinct, then the functions $g_1,\dots,g_p: \mathcal{X} \to \mathbb{C}$ given by
			\begin{align}
				g_j(x) := e^{i \omega_{j} x}
			\end{align}
			are linearly independent functions; and
			
			\item if $\omega_1,\dots,\omega_p \in \mathbb{R}$ are all distinct and no two of them are negations of each other, then the functions $g_0, g_1,\dots,g_{2p}: \mathcal{X} \to \mathbb{R}$ given by
			\begin{align}
				g_{0}(x)    &:= 1,\\
				g_{2j}(x)   &:= \cos(\omega_{j} x),\\
				g_{2j+1}(x) &:= \sin(\omega_{j} x),
			\end{align}
			are linearly independent functions.
		\end{enumerate}
		\begin{proof}
			Denote by $C^{\infty}(\mathcal{X},\mathbb{K})$ the set of smooth functions from $\mathcal{X}$ to $\mathbb{K}$, and by $\mathcal{D}:C^{\infty}(\mathcal{X},\mathbb{K}) \to C^{\infty}(\mathcal{X},\mathbb{K})$ the derivative map (which is linear).
			
			For the first claim (here $\mathbb{K}=\mathbb{C}$), the functions $g_j$ are eigenvectors of $\mathcal{D}$ associated to distinct eigenvalues ($i \omega_j$), therefore they are mutually linearly independent.
			
			For the second claim (here $\mathbb{K}=\mathbb{R}$), the functions $g_{2j}, g_{2j+1}$ are eigenvectors of $\mathcal{D}^2$ associated to eigenvalues $(-\omega_j^2)$, which are distinct for different $j$'s since by assumption $\omega_j>0$ for all $j$, and furthermore $g_{2j}$ and $g_{2j+1}$ are linearly independent as well, and lastly $g_0$ is linearly independent from all the others as well; therefore one obtains that $g_0,\dots,g_{2q}$ are mutually linearly independent.
		\end{proof}
	\end{lemma}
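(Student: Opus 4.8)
The plan is to prove the two claims in sequence, establishing the complex-exponential case first and then reducing the real trigonometric case to it by complexification. Throughout I would work with the functions on all of $\mathbb{R}$, invoking real-analyticity to pass between $\mathcal{X}$ and $\mathbb{R}$: each $g_j$ extends to an entire function, so any finite linear combination that vanishes on a subset $\mathcal{X}$ possessing an accumulation point must vanish identically, and it therefore suffices to prove linear independence over an interval. For the feature-map domains in the paper (cubes such as $[0,2\pi]^d$) this hypothesis is automatic, and I would flag that the statement as worded is only correct once $\mathcal{X}$ is rich enough, e.g. contains a nondegenerate interval.

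For claim (1), the cleanest route is the eigenfunction argument already used in the neighbouring lemmas: each $g_j(x)=e^{i\omega_j x}$ satisfies $\mathcal{D} g_j = i\omega_j\, g_j$ with $\mathcal{D}=d/dx$, and the eigenvalues $i\omega_1,\dots,i\omega_p$ are pairwise distinct since the $\omega_j$ are. A standard induction then gives that eigenvectors attached to distinct eigenvalues are linearly independent: assuming $\sum_{j} c_j g_j = 0$, applying $\mathcal{D}$ and subtracting $i\omega_p$ times the relation eliminates the last term and leaves a shorter relation among $g_1,\dots,g_{p-1}$ with coefficients $c_j(i\omega_j - i\omega_p)$, which the induction hypothesis forces to vanish; distinctness then yields $c_j=0$ for $j<p$, whence $c_p=0$ as well. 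As a self-contained alternative I could differentiate $\sum_j c_j g_j=0$ a total of $0,1,\dots,p-1$ times and evaluate at a fixed point $x_0$, obtaining a Vandermonde system in the distinct nodes $i\omega_j$; its invertibility forces each $c_j e^{i\omega_j x_0}=0$ and hence $c_j=0$.

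For claim (2), I would reduce to claim (1) rather than work directly with $\mathcal{D}^2$. First, the hypotheses (the $\omega_j$ distinct with no two negations of one another) imply that the $2p+1$ real numbers $0,\pm\omega_1,\dots,\pm\omega_p$ are pairwise distinct; in particular each $\omega_j\neq0$, since $\omega_j=-\omega_j$ would make $\omega_j$ its own negation. The constant, the cosines, and the sines are images of the exponentials $e^{i\lambda x}$, $\lambda\in\{0,\pm\omega_1,\dots,\pm\omega_p\}$, under an invertible change of basis: $1=e^{i0x}$, and for each $j$ the pair $(\cos(\omega_j x),\sin(\omega_j x))$ equals $\tfrac12\!\left(\begin{smallmatrix}1&1\\-i&i\end{smallmatrix}\right)$ applied to $(e^{i\omega_j x},e^{-i\omega_j x})$, a matrix of determinant $i$. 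Consequently any nontrivial real dependence among the constant/cosine/sine family would transport, through this invertible block-diagonal transformation, into a nontrivial complex dependence among the $2p+1$ pairwise-distinct exponentials, contradicting claim (1). Using $\mathcal{D}^2$ directly would also work but is more delicate, since each eigenvalue $-\omega_j^2$ is then doubly degenerate and one must separately argue that $\cos(\omega_j x)$ and $\sin(\omega_j x)$ are independent within their eigenspace; the complexification sidesteps this by restoring simple spectra.

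The genuinely delicate point is not the algebra but the domain. Linear independence on an arbitrary subset $\mathcal{X}\subseteq\mathbb{R}$ is strictly stronger than on $\mathbb{R}$ and can fail on thin sets: a finite $\mathcal{X}$ with fewer than $p$ points automatically forces a dependence. I expect to dispatch this exactly as in the analyticity remark accompanying \cref{lem:entry-wise-linindep-featuremap}, using that a nonzero entire function cannot vanish on a set with an accumulation point, so the conclusion holds for every $\mathcal{X}$ containing an interval and in particular for all feature-map domains considered here.
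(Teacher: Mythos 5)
Your proof is correct, and for claim (2) it takes a genuinely different route from the paper's. Claim (1) is essentially the paper's own argument---the $g_j$ are eigenfunctions of the derivative map $\mathcal{D}$ with pairwise distinct eigenvalues $i\omega_j$---with the standard induction (or the Vandermonde alternative) spelled out where the paper just cites the eigenvector fact. For claim (2), the paper stays over $\mathbb{R}$ and uses $\mathcal{D}^2$: the eigenvalues $-\omega_j^2$ are distinct across $j$, but each eigenspace is two-dimensional, so the paper must additionally assert that $\cos(\omega_j x)$ and $\sin(\omega_j x)$ are independent within their common eigenspace and that $g_0$ is independent of the rest (and along the way it writes ``by assumption $\omega_j>0$,'' which is not among the hypotheses; what the no-negations condition actually yields is that the $\omega_j$ are nonzero with pairwise distinct squares, a point you derive correctly via the observation that $\omega_j=0$ would be its own negation). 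Your complexification---$1=e^{i0x}$, and each pair $(\cos(\omega_j x),\sin(\omega_j x))$ an invertible image of $(e^{i\omega_j x},e^{-i\omega_j x})$---transports any nontrivial real dependence into a nontrivial dependence among the $2p+1$ pairwise distinct exponentials with frequencies $\{0,\pm\omega_1,\dots,\pm\omega_p\}$, which claim (1) forbids. This buys a simple spectrum and eliminates the intra-eigenspace bookkeeping; the paper's version buys never leaving the real field. (One immaterial slip: with the $\tfrac12$ prefactor the determinant of your $2\times2$ block is $i/2$, not $i$; only invertibility matters.)

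Your domain caveat is a genuine correction rather than pedantry, and the paper glosses over it. As stated, ``for $\mathcal{X}$ any subset of $\mathbb{R}$'' is false: on a finite $\mathcal{X}$ with $\lvert\mathcal{X}\rvert<p$ the restricted functions are automatically dependent, and even infinite $\mathcal{X}$ can fail (take $\omega_1=0$, $\omega_2=2\pi$, $\mathcal{X}=\mathbb{Z}$: both exponentials restrict to the constant function $1$). The paper's proof silently assumes more than stated, since it works in $C^{\infty}(\mathcal{X},\mathbb{K})$ and differentiates, which only makes sense when $\mathcal{X}$ is, say, open or an interval. Your patch via the identity theorem---each $g_j$ extends to an entire function, so a dependence on any $\mathcal{X}$ with an accumulation point propagates to all of $\mathbb{R}$---is the right general fix, is exactly in the spirit of the analyticity remark following \cref{lem:entry-wise-linindep-featuremap}, and covers every domain the lemma is actually applied to in the paper.
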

	
	We state a generalization of this lemma to the multi-dimensional case, with the added constraint that the frequencies are integers ($\omega_j \in \mathbb{Z}^d$). Indeed, in this case, this lemma is then immediate by appealing to the orthogonality of the mentioned functions under the standard $L^2$ inner-product.

	\begin{lemma}[Multi-dimensional generalization of \cref{lem:Fourier-functs-are-lin-indep}]\label{lem:Multidim-Fourier-functs-are-lin-indep}
	For $d\geq1, p \geq 1$, and for $\mathcal{X}$ any subset of $\mathbb{Z}^d$, it holds that:
	\begin{enumerate}
		\item  if $\omega_1,\dots,\omega_p \in \mathbb{Z}^d$ are all distinct, then the functions $g_1,\dots,g_p: \mathcal{X} \to \mathbb{C}$ given by
		\begin{align}
			g_j(x) := e^{i \omega_{j}^T x}
		\end{align}
		are linearly independent functions; and
		
		\item if $\omega_1,\dots,\omega_p \in \mathbb{Z}^d$ are all distinct and no two of them are negations of each other, then the functions $g_0, g_1,\dots,g_{2p}: \mathcal{X} \to \mathbb{R}$ given by
		\begin{align}
			g_{0}(x)    &:= 1,\\
			g_{2j}(x)   &:= \cos(\omega_{j}^T x),\\
			g_{2j+1}(x) &:= \sin(\omega_{j}^T x),
		\end{align}
		are linearly independent functions.
	\end{enumerate}
\end{lemma}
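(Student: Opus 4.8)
The plan is to prove linear independence directly on the integer lattice, exploiting that distinct integer frequencies project to distinct integers along a suitable lattice direction and hence to distinct points on the unit circle, so that a Vandermonde argument forces all coefficients to vanish. Throughout I treat $\mathcal{X}$ as the full lattice $\mathbb{Z}^d$: this is the only interpretation under which the claim can hold, since any finite family of functions fails to be independent once $\mathcal{X}$ is too small (already a single point $\mathcal{X}=\{0\}$ makes every $g_j$ equal to $1$), and independence on $\mathbb{Z}^d$ is exactly what is needed downstream via \cref{lem:entry-wise-linindep-featuremap} to exhibit $p$ lattice points with independent feature vectors.

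For claim (1), suppose $\sum_{j=1}^p c_j e^{i\omega_j^\top x}=0$ for all $x\in\mathbb{Z}^d$. First I would reduce to one dimension by choosing a direction $v\in\mathbb{Z}^d$ making the integers $\mu_j:=\omega_j^\top v$ pairwise distinct. Such a $v$ exists because each constraint $\mu_j\neq\mu_k$ reads $(\omega_j-\omega_k)^\top v\neq0$, a nonzero linear form (the $\omega_j$ being distinct); taking $v=(1,t,\dots,t^{d-1})$ turns each into a nonzero polynomial in $t$ with finitely many integer roots, so any large enough $t\in\mathbb{Z}$ avoids all $\binom{p}{2}$ of them. Restricting the hypothesis to the progression $x=t\,v$, $t=0,1,\dots,p-1$ (all in $\mathbb{Z}^d$), yields $\sum_j c_j z_j^{\,t}=0$ with nodes $z_j:=e^{i\mu_j}$. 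These nodes are pairwise distinct since $e^{i\mu}$ is injective on $\mathbb{Z}$ (two integers with equal exponentials differ by a multiple of $2\pi$, hence coincide), so the $p\times p$ Vandermonde matrix is invertible and $c_1=\dots=c_p=0$.

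For claim (2), I would reduce to (1) by passing to complex exponentials. Using $\cos(\omega^\top x)=\tfrac12(e^{i\omega^\top x}+e^{-i\omega^\top x})$, $\sin(\omega^\top x)=\tfrac{1}{2i}(e^{i\omega^\top x}-e^{-i\omega^\top x})$ and $1=e^{i0^\top x}$, any vanishing real combination $a_0+\sum_j(a_j\cos(\omega_j^\top x)+b_j\sin(\omega_j^\top x))=0$ becomes a vanishing complex combination of the exponentials indexed by the $2p+1$ frequencies $\{0\}\cup\{\pm\omega_1,\dots,\pm\omega_p\}$. The hypotheses that the $\omega_j$ are distinct, nonzero, and pairwise non-opposite make these $2p+1$ frequencies pairwise distinct, so claim (1) forces every exponential coefficient to zero: the coefficient of $e^{i0}$ gives $a_0=0$, and those of $e^{\pm i\omega_j^\top x}$ give $\tfrac{a_j}{2}\pm\tfrac{b_j}{2i}=0$, whence $a_j=b_j=0$.

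The main obstacle is the direction-selection step in (1): I must guarantee the projection onto $v$ is lossless, i.e. that distinct $\omega_j\in\mathbb{Z}^d$ stay distinct after dotting with $v$ and, crucially, that the resulting $e^{i\mu_j}$ remain distinct on the circle. This is precisely where integrality of the frequencies is indispensable, as it rules out the accidental coincidences $e^{i\mu}=e^{i\mu'}$ that could occur for real frequencies. This route is the discrete counterpart of the $L^2$-orthogonality invoked informally in the paper, with Vandermonde invertibility on the lattice playing the role of orthogonality of characters on the torus, and it has the advantage of never leaving $\mathbb{Z}^d$.
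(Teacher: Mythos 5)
Your proof is correct, and it takes a genuinely different route from the paper. The paper disposes of this lemma in one line: since the frequencies are integers, the functions are pairwise orthogonal under the standard $L^2$ inner product on the torus $[0,2\pi]^d$, hence linearly independent --- a continuum argument that in fact proves independence on $\mathbb{R}^d$ (the stated domain ``$\mathcal{X}$ any subset of $\mathbb{Z}^d$'' is evidently a slip, since, as you note, independence fails for small $\mathcal{X}$, and the paper's own $L^2$ justification never leaves the torus). You instead take the discrete reading literally and prove independence of the restrictions to $\mathbb{Z}^d$: project onto a single integer direction $v=(1,t,\dots,t^{d-1})$ chosen so the integers $\mu_j=\omega_j^\top v$ are pairwise distinct (each coincidence is a nonzero polynomial in $t$ with finitely many roots), use the irrationality of $2\pi$ to see that $\mu\mapsto e^{i\mu}$ is injective on $\mathbb{Z}$, and conclude by Vandermonde invertibility along the progression $x=tv$; part (2) then reduces cleanly to part (1) via Euler's formulas, with your explicit reading that ``no two are negations of each other'' includes $\omega_j\neq-\omega_j$, i.e.\ $\omega_j\neq 0$ (the paper makes the same implicit assumption, asserting $\omega_j>0$ in the proof of \cref{lem:Fourier-functs-are-lin-indep}). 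Your statement is strictly stronger than the paper's: independence already on the lattice implies independence on $\mathbb{R}^d$, and by $2\pi$-periodicity also on $[0,2\pi]^d$ (a combination vanishing on the box vanishes on all of $\mathbb{R}^d$, hence on $\mathbb{Z}^d$) --- a short extra step worth spelling out if the downstream application of \cref{lem:entry-wise-linindep-featuremap} is to a bounded $\mathcal{X}$ such as $[0,2\pi]^d$, since your argument by itself only exhibits witness points $x_1,\dots,x_p$ on the unbounded progression $tv$. What each approach buys: the paper's orthogonality argument is instant but imports the continuum; yours is elementary, self-contained, never leaves $\mathbb{Z}^d$, and makes visible exactly where integrality is indispensable (ruling out coincidences $e^{i\mu}=e^{i\mu'}$ on the circle), which is precisely the point flagged in the paper's closing remark that real frequencies ``would require more care.''
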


Note that the assumption of integer frequencies could be relaxed to merely \emph{ $\omega_1,\dots,\omega_p \in \mathbb{R}^d$ are all distinct and no two of them are negations of each other} (a proof would require more care).

\begin{restatable}[Relation between the $\infty$-distance of two models and the $2$-norms of their weight vectors]{thm}{RelationBetaNormFunctionDist}\label{thm:RelationBetaNormFunctionDist}
Consider two arbitrary weight vectors, denoted suggestively as $\beta_C,\beta_Q\in \mathcal{V}$, and let $f_{\beta_C}$, $f_{\beta_Q}$ be the corresponding functions ($f_{\beta_i}(x)= \langle \beta_i , \phi(x) \rangle$). Suppose that the feature map $\phi$ satisfies $\operatorname{span}(\phi(\mathcal{X}))=\mathcal{V}$. Then, the following holds:

\begin{align}\label{eq:RelationBetaNormFunctionDist-eq}
    \norm{f_{\beta_Q} - f_{\beta_C}}_{\infty} \ &\leq\ L\norm{\beta_Q - \beta_C}_2\,,\\[8pt]
    \label{eq:RelationBetaNormFunctionDist-eq-2}
    \Big| \norm{\beta_Q}_2 - \norm{\beta_C}_2 \Big|\ &\leq\  c\, \norm{f_{\beta_Q} - f_{\beta_C}}_{\infty}\,,
\end{align}
where $L:=\sup_{x \in \mathcal{X}}\!\big(\lVert \phi(x)\rVert_{2}\big)$, $c := \sqrt{p} \ \,\lambda_{\mathrm{min}}\!\left( K_{x_{1},\dots,x_{p}}\right)^{-1/2}$, with $x_{1},\dots,x_{p}\in \mathcal{X}$ being any set of $p$ points of $\mathcal{X}$ such that $\phi(x_1),\dots,\phi(x_p)$ are linearly independent, $K_{x_{1},\dots,x_{p}}$ is the the kernel matrix defined by $[K_{x_{1},\dots,x_{p}}]_{ij}:= \phi(x_{i})^T \phi(x_{j})$, and where $\lambda_{\mathrm{min}}(A)$ denotes the smallest eigenvalue of a matrix $A$.

In particular, if $f_{\beta_Q}$ corresponds to a quantum model that has been (perfectly) trained on a dataset $(x_1,\dots,x_M)$ of size $M<p$ to a target function $f^*$ --- i.e. it is a solution that achieves zero empirical risk $\mathcal{L}(f_{\beta_Q},f^*):= \frac{1}{M}\sum_{i=1}^M (f_{\beta_Q}(x_i) - f^*(x_i))^2$ --- and if $f_{\beta_C}$ is the corresponding classical solution $f_{\beta_{\mathrm{MNLS}}}$ obtained through classical linear regression, then the above hypothesis apply and \cref{eq:RelationBetaNormFunctionDist-eq,eq:RelationBetaNormFunctionDist-eq-2} reduce to:
\begin{align}\label{eq:RelationBetaNormFunctionDist-eq-case-of-solutions}
    \norm{f_{\beta_Q} - f_{\beta_{\rm MNLS}}}_{\infty} \ &\leq\ \sqrt{\norm{\beta_Q}_2^2 - \norm{\beta_{\rm MNLS}}_2^2 }\,,\\[8pt]
    \label{eq:RelationBetaNormFunctionDist-eq-2-case-of-solutions}
    \norm{\beta_Q}_2 - \norm{\beta_{\rm MNLS}}_2\ &\leq\  c\, \norm{f_{\beta_Q} - f_{\beta_{\rm MNLS}}}_{\infty}\,.
\end{align}
\end{restatable}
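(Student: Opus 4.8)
The plan is to assemble the two general inequalities \cref{eq:RelationBetaNormFunctionDist-eq,eq:RelationBetaNormFunctionDist-eq-2} directly from the two preparatory lemmas, and then to specialize them using the orthogonal-decomposition structure of the MNLS solution. For the forward bound \cref{eq:RelationBetaNormFunctionDist-eq}, I would simply invoke \cref{lem:lipschitz}, which is exactly the statement $\norm{f_{\beta_Q} - f_{\beta_C}}_\infty \leq L\,\norm{\beta_Q - \beta_C}_2$ with $L = \sup_x \norm{\phi(x)}_2$; its proof is Cauchy--Schwarz applied pointwise. For the reverse bound \cref{eq:RelationBetaNormFunctionDist-eq-2}, I would first apply \cref{lem:reverse-lipschitz} to obtain $\norm{\beta_Q - \beta_C}_2 \leq c\,\norm{f_{\beta_Q} - f_{\beta_C}}_\infty$ with $c = \sqrt{p}\,\lambda_{\mathrm{min}}(K_{x_1,\dots,x_p})^{-1/2}$, and then chain this with the reverse triangle inequality $\big|\,\norm{\beta_Q}_2 - \norm{\beta_C}_2\,\big| \leq \norm{\beta_Q - \beta_C}_2$.

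For the reverse bound to be meaningful one must first know that a set of $p$ points $x_1,\dots,x_p$ with $\phi(x_1),\dots,\phi(x_p)$ linearly independent actually exists, so that $c$ is finite and $K_{x_1,\dots,x_p}$ is invertible. This is guaranteed by the standing hypothesis $\operatorname{span}(\phi(\mathcal{X}))=\mathcal{V}$, and for the concrete Fourier feature map it follows constructively from \cref{lem:entry-wise-linindep-featuremap} combined with the linear independence of the trigonometric functions established in \cref{lem:Fourier-functs-are-lin-indep,lem:Multidim-Fourier-functs-are-lin-indep}. I would record this as the assumption that renders $c$ well defined.

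The specialization to \cref{eq:RelationBetaNormFunctionDist-eq-case-of-solutions,eq:RelationBetaNormFunctionDist-eq-2-case-of-solutions} is where the MNLS geometry enters. Since $f_{\beta_Q}$ achieves zero empirical risk on $(x_1,\dots,x_M)$, the vector $\beta_Q$ interpolates the data, and by the decomposition recalled in \cref{app:mnls} any such interpolant can be written $\beta_Q = \bmnls + u$ with $u$ in the orthogonal complement $V^\perp$ of the row space $V = \operatorname{span}(\phi(x_1),\dots,\phi(x_M))$, while $\bmnls \in V$. Orthogonality then yields the Pythagorean identity $\norm{\beta_Q}_2^2 = \norm{\bmnls}_2^2 + \norm{u}_2^2$, so that $\norm{\beta_Q - \bmnls}_2 = \norm{u}_2 = \sqrt{\norm{\beta_Q}_2^2 - \norm{\bmnls}_2^2}$. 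Substituting into \cref{eq:RelationBetaNormFunctionDist-eq} and using that the feature maps under consideration are normalized, so that $L = \sup_x \norm{\phi(x)}_2 \leq 1$ (indeed $\norm{\phi(x)}_2 = 1$ for the map of \cref{eq:feat-map}), gives \cref{eq:RelationBetaNormFunctionDist-eq-case-of-solutions}; and \cref{eq:RelationBetaNormFunctionDist-eq-2-case-of-solutions} is just \cref{eq:RelationBetaNormFunctionDist-eq-2} with the absolute value dropped, which is immediate from $a - b \leq \lvert a-b\rvert$.

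I expect the main obstacle to be conceptual rather than computational, namely recognizing that the square-root gap $\sqrt{\norm{\beta_Q}_2^2 - \norm{\bmnls}_2^2}$ is exactly $\norm{\beta_Q - \bmnls}_2$. This hinges on $\bmnls$ being orthogonal to the deviation $u$, a property that requires both that $\bmnls$ lies in the row space and that $\beta_Q$ is a genuine interpolant. A secondary point to treat carefully is the normalization $L \leq 1$: without it \cref{eq:RelationBetaNormFunctionDist-eq-case-of-solutions} would carry a leading factor $L$, so I would make this assumption explicit for the feature maps in use.
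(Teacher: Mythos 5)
Your proposal is correct and follows essentially the same route as the paper's proof: \cref{lem:lipschitz} for the forward bound, the reverse triangle inequality chained with \cref{lem:reverse-lipschitz} for the reverse bound, and the orthogonal decomposition $\beta_Q = \beta_{\rm MNLS} + u$ with the Pythagorean identity (plus $L=1$ for the Fourier feature map, justified via \cref{lem:entry-wise-linindep-featuremap,lem:Multidim-Fourier-functs-are-lin-indep}) for the specialization. The only cosmetic difference is that you drop the absolute value in \cref{eq:RelationBetaNormFunctionDist-eq-2-case-of-solutions} via $a-b\leq\lvert a-b\rvert$, whereas the paper notes $\norm{\beta_Q}_2\geq\norm{\beta_{\rm MNLS}}_2$ from the same decomposition; both are valid.
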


\begin{proof}
Let $\beta_C,\beta_Q \in \mathcal{V}$. \Cref{eq:RelationBetaNormFunctionDist-eq} holds by \cref{lem:lipschitz}, while \cref{eq:RelationBetaNormFunctionDist-eq-2} holds because
\begin{equation}
\Big| \norm{\beta_Q}_2 - \norm{\beta_C}_2 \Big| \leq \lVert \beta_Q - \beta_C\rVert_{2} \leq c \ \lVert f_{\beta_Q}- f_{\beta_C}\rVert_{\infty}\,,
\end{equation}
where the first inequality is the reverse triangle inequality, and the second holds by \cref{lem:reverse-lipschitz}.

Now, assume additionally that $f_{\beta_Q}$ and $f_{\beta_C}:=f_{\beta_{\rm MNLS}}$ correspond to (perfectly) trained quantum and respective classical models, in the overparameterized regime ($p>M$).

First, note that here the feature map $\phi$ (\cref{eq:feat-map}) satisfies the hypothesis $\operatorname{span}(\phi(\mathcal{X}))=\mathcal{V}$, because they are of the form of \cref{lem:entry-wise-linindep-featuremap} with their entries $g_k$ being of the form of \cref{lem:Multidim-Fourier-functs-are-lin-indep}; and furthermore it satisfies
$L = \sup_{x \in \mathcal{X}}(\lVert \phi(x)\rVert_{2}) = 1$.

Second, we recall the following facts about the space of weight vectors $\beta$ that are solutions to $\mathcal{L}(f_{\beta},f^*)=0$ (see also \cref{app:mnls} above): it is an affine subspace (of dimension $p-M$) inside the feature space $\mathcal{V}$, it contains the vector $\beta_{\rm MNLS}$ (its minimum norm vector), and it is orthogonal to that vector. 

Since $\beta_Q$ is such a solution weight vector, we therefore have
\begin{equation}\label{eq:beta-q-orhtog-decomp-proprety}
\beta_Q = \beta_{\rm MNLS} + u\,,\qquad \text{for some $u \in \mathbb{R}^p$ with } u^\top  \beta_{\rm MNLS} = 0\,.
\end{equation}
From \labelcref{eq:beta-q-orhtog-decomp-proprety} it follows (Pythagorean theorem) that $\norm{\beta_Q - \beta_{\rm MNLS}}_2 = \sqrt{\norm{\beta_Q}_2^2 - \norm{\beta_{\rm MNLS}}_2^2}$, which justifies why \cref{eq:RelationBetaNormFunctionDist-eq} can be rewritten as \cref{eq:RelationBetaNormFunctionDist-eq-case-of-solutions}. It also follows from \labelcref{eq:beta-q-orhtog-decomp-proprety} that $\norm{\beta_Q}_2 \geq \norm{\beta_{\rm MNLS}}_2$, justifying why \cref{eq:RelationBetaNormFunctionDist-eq-2} can be rewritten as \cref{eq:RelationBetaNormFunctionDist-eq-2-case-of-solutions}.
\end{proof}

The above \cref{thm:RelationBetaNormFunctionDist} can therefore be interpreted in the following way:

\begin{restatable}[Large weight vector norm and quantum advantage]{cor}{Looking-at-norm-betaq-to-assess-potential-quantum-advantage}\label{thm:looking-at-norm-betaq-to-assess-potential-quantum-advantage}
Suppose that the quantum model $f_{\beta_Q}$ has been (perfectly) trained on a dataset $(x_1,\dots,x_M)$ of size $M>p$ to a target function $f^*$ --- i.e. it is a solution that achieves zero empirical risk $\mathcal{L}(f_{\beta_Q},f^*):= \frac{1}{M}\sum_{i=1}^M (f_{\beta_Q}(x_i) - f^*(x_i))^2$ ---, and let $f_{\beta_{\mathrm{MNLS}}}$ be the solution obtained through classical linear regression.

Let us refer to having a large value of $\norm{f_{\beta_{\mathrm{MNLS}}} - f_{\beta_Q}}_{\infty}$ as having a ``potential quantum advantage'', and a small value as an ``absence of quantum advantage''.

Then, the two inequalities of \cref{eq:RelationBetaNormFunctionDist-eq-case-of-solutions,eq:RelationBetaNormFunctionDist-eq-2-case-of-solutions} may be respectively read as follows:
\begin{enumerate}
    \item Having a small weight vector norm $\norm{\beta_Q}_2$ (close to the value of $\norm{\beta_{\rm MNLS}}_2$) implies an absence of quantum advantage. 

    \item Provided that $c$ is small, having a large weight vector norm $\norm{\beta_Q}_2$ (compared to the value of $\norm{\beta_{\mathrm{MNLS}}}_2$) implies a potential quantum advantage.
\end{enumerate}
\end{restatable}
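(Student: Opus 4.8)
The plan is to obtain \cref{thm:looking-at-norm-betaq-to-assess-potential-quantum-advantage} as a purely qualitative reading of the two quantitative bounds \cref{eq:RelationBetaNormFunctionDist-eq-case-of-solutions,eq:RelationBetaNormFunctionDist-eq-2-case-of-solutions}, which \cref{thm:RelationBetaNormFunctionDist} already establishes in exactly the trained-solution regime assumed here (where $f_{\beta_Q}$ interpolates the data and $f_{\beta_{\rm MNLS}}$ is the minimum-norm classical fit). Thus no new inequality is needed: the work is to check that the theorem's hypotheses hold in this setting and then to spell out, for each of the two bounds, the implication it carries between the weight-norm gap $\norm{\beta_Q}_2 - \norm{\beta_{\rm MNLS}}_2$ and the generalization gap $\norm{f_{\beta_Q} - f_{\beta_{\rm MNLS}}}_{\infty}$. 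The hypothesis $\operatorname{span}(\phi(\mathcal{X}))=\mathcal{V}$ and the existence of $p$ points making $\phi(x_1),\dots,\phi(x_p)$ linearly independent (needed for $c$ to be finite) are supplied, for the Fourier map, by \cref{lem:entry-wise-linindep-featuremap,lem:Multidim-Fourier-functs-are-lin-indep}.

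Claim~1 follows by reading off \cref{eq:RelationBetaNormFunctionDist-eq-case-of-solutions}. Setting $\varepsilon := \norm{\beta_Q}_2 - \norm{\beta_{\rm MNLS}}_2 \geq 0$, the right-hand side equals $\sqrt{2\varepsilon\,\norm{\beta_{\rm MNLS}}_2 + \varepsilon^2}$, so as the norm gap $\varepsilon$ shrinks to $0$ the bound forces $\norm{f_{\beta_Q} - f_{\beta_{\rm MNLS}}}_{\infty} \to 0$. A quantum weight vector whose norm is close to $\norm{\beta_{\rm MNLS}}_2$ therefore cannot be far from the classical model, i.e. there is an absence of advantage.

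Claim~2 is obtained by rearranging \cref{eq:RelationBetaNormFunctionDist-eq-2-case-of-solutions} into $\norm{f_{\beta_Q} - f_{\beta_{\rm MNLS}}}_{\infty} \geq (\norm{\beta_Q}_2 - \norm{\beta_{\rm MNLS}}_2)/c$, so that, provided $c$ is bounded by a modest constant, a large norm gap lower-bounds the generalization gap and hence signals a potential advantage.

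The real obstacle is not the algebra, which is immediate from \cref{thm:RelationBetaNormFunctionDist}, but guaranteeing that Claim~2 is non-vacuous: the lower bound is only useful if the constant $c = \sqrt{p}\,\lambda_{\mathrm{min}}(K_{x_1,\dots,x_p})^{-1/2}$ can genuinely be kept small, which requires choosing $p$ interpolation points whose kernel matrix has a smallest eigenvalue large enough to absorb the $\sqrt{p}$ prefactor. For the integer-frequency Fourier map this is exactly the high-probability constant lower bound on $\lambda_{\mathrm{min}}(K)$ established in \cref{app:Concentration_Eigen_K_Fourier_integer}. I would also flag that the stated hypothesis $M>p$ appears to be a typo for $M<p$: it is the overparameterized regime $p>M$ in which $\beta_{\rm MNLS}$ is the relevant (non-unique-fit) classical solution, consistent with the $M<p$ assumption of \cref{thm:RelationBetaNormFunctionDist}.
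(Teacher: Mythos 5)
Your proposal is correct and takes essentially the same approach as the paper: the corollary is stated there with no separate proof, precisely because it is intended as a qualitative reading of \cref{eq:RelationBetaNormFunctionDist-eq-case-of-solutions,eq:RelationBetaNormFunctionDist-eq-2-case-of-solutions}, whose hypotheses (the span condition and existence of $p$ points with linearly independent feature vectors) are checked inside the proof of \cref{thm:RelationBetaNormFunctionDist} exactly as you do, via \cref{lem:entry-wise-linindep-featuremap,lem:Multidim-Fourier-functs-are-lin-indep}. Your two side remarks are also sound: the stated hypothesis $M>p$ is indeed inconsistent with the theorem's $M<p$ (overparameterized) assumption and is evidently a typo, and the non-vacuousness of item 2 does hinge on the smallness of $c$, which for the integer-frequency Fourier map is exactly what the kernel eigenvalue bound of \cref{app:Concentration_Eigen_K_Fourier_integer} provides.
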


\section{Concentration of eigenvalues of the kernel matrix for the Fourier feature map with integer coefficients}\label{app:Concentration_Eigen_K_Fourier_integer}

In this section, we establish concentration bounds on the eigenvalues of the kernel matrix. We consider the Fourier feature map:
\begin{equation}
    \phi(x) = \frac{1}{\sqrt{p}}\begin{bmatrix} \cos(\omega^{\top}x) \\ \sin(\omega^{\top}x) \end{bmatrix}
\end{equation}

with $\Omega \subset \mathbb{Z}^d$. 

The kernel is defined as:
\begin{equation}
    \begin{split}
        k(x,x') &= \phi(x)^\top \phi(x')\\
        &= \frac{1}{p} \sum_{\omega \in \Omega} \cos(\omega^{\top}x) \cos(\omega^{\top}x')  + \sin(\omega^{\top}x) \sin(\omega^{\top}x') \\
    \end{split}
\end{equation}

\begin{restatable}[]{thm}{AnalysisKernelMatrix}
\label{thm:AnalysisKernelMatrix}
    Let $(x_1, \dots x_M)$ uniformly distributed on $[0, 2\pi]^d$, and let $K$ be the empirical kernel matrix. Then there exists a constant $C$ such that 
    \begin{equation}
        \PP(\lambda_{\min}(K)> \frac{1}{2}) \geq 1- \frac{C}{p} \frac{(M-1)^2p^2}{(p-4(M-1)^2)^2} = 1 -\frac{C}{p}\frac{(M-1)^2}{1 - 4\frac{(M-1)^2}{p^2}}
    \end{equation}
\end{restatable}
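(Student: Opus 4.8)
The plan is to treat the off-diagonal part of $K$ as a perturbation of the identity and to control its size by a second-moment argument. First I would observe that $K_{ii} = \norm{\phi(x_i)}_2^2 = \frac1p\sum_{\omega\in\Omega}(\cos^2(\omega^\top x_i)+\sin^2(\omega^\top x_i)) = 1$, so writing $K = I + E$ with $E$ the hollow off-diagonal part gives $\lambda_{\min}(K) = 1 + \lambda_{\min}(E) \ge 1 - \norm{E}_{\mathrm{op}} \ge 1 - \norm{E}_F$. Hence the bad event satisfies $\{\lambda_{\min}(K)\le \tfrac12\}\subseteq\{W\ge\tfrac14\}$ with $W := \norm{E}_F^2 = \mathrm{Tr}(E^2) = \sum_{i\ne j}E_{ij}^2$, and it suffices to bound $\PP(W\ge\tfrac14)$ via the one-sided Chebyshev (Cantelli) inequality $\PP(W\ge\tfrac14)\le \mathrm{Var}(W)/(\tfrac14-\EE[W])^2$, valid once $\EE[W]<\tfrac14$. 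Everything then reduces to estimating $\EE[W]$ and $\mathrm{Var}(W)$.

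The single identity driving the moment computations is that for $\nu\in\mathbb{Z}^d$ and independent uniform $x_i,x_j$ on $[0,2\pi]^d$ one has $\EE[\cos(\nu^\top(x_i-x_j))] = \mathbf 1\{\nu=0\}$. Writing $E_{ij} = \frac1p\sum_{\omega\in\Omega}\cos(\omega^\top(x_i-x_j))$ and expanding $E_{ij}^2$ with $\cos a\cos b = \tfrac12(\cos(a-b)+\cos(a+b))$, every cross term becomes $\cos((\omega\pm\omega')^\top(x_i-x_j))$; only $\omega=\omega'$ survives the expectation (the hypothesis $-\omega\notin\Omega$ removes the $\omega=-\omega'$ term), so $\EE[E_{ij}^2] = \frac1{2p}$ and $\EE[W] = \frac{M(M-1)}{2p} = \Theta(M^2/p)$. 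This is exactly the term that produces the $p - O(M^2)$ factor in the denominator of the Chebyshev bound.

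The delicate step is $\mathrm{Var}(W)$, and the key observation I would establish is that the cross-covariances vanish. For two entries sharing one index, $E_{ij}$ and $E_{ik}$ with $j\ne k$, conditioning on the shared point $x_i$ and reusing the identity gives $\EE[E_{ij}^2\mid x_i] = \frac1{2p}$ independently of $x_i$; since $x_j,x_k$ are conditionally independent given $x_i$, this forces $\mathrm{Cov}(E_{ij}^2,E_{ik}^2)=0$, while entries on disjoint index pairs are outright independent. Thus $\mathrm{Var}(W)$ collapses to its diagonal, $\mathrm{Var}(W) = 2M(M-1)\,\mathrm{Var}(E_{ij}^2)\le 2M(M-1)\,\EE[E_{ij}^4]$. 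Expanding $E_{ij}^4$ as a fourfold cosine product and applying product-to-sum again yields $\EE[E_{ij}^4] = \frac1{8p^4}\,\#\{(\omega_1,\dots,\omega_4,\text{signs}): \omega_1\pm\omega_2\pm\omega_3\pm\omega_4=0\}$, and each sign pattern contributes at most the additive energy of $\Omega$, which is $\le p^3$ (fix three frequencies, the fourth is determined). Hence $\EE[E_{ij}^4]=O(1/p)$, $\mathrm{Var}(W)=O(M^2/p)$, and substituting into Chebyshev gives $\PP(W\ge\tfrac14)=O\!\big(\frac{M^2p}{(p-O(M^2))^2}\big) = \frac1p\cdot O\!\big(\frac{M^2p^2}{(p-O(M^2))^2}\big)$, which is the claimed bound (the precise constants inside $O(M^2)$ and $C$ being routine bookkeeping).

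The main obstacle is this fourth-moment count. Because the statement is for an arbitrary integer spectrum $\Omega$, it cannot rely on special structure and must go through the general additive-energy bound $\#\{\omega_1+\omega_2=\omega_3+\omega_4\}\le p^3$; for the structured spectra of interest (the intervals produced by the ternary and Golomb encodings) this energy is genuinely $\Theta(p^3)$, so the $1/p$ scaling of $\mathrm{Var}(E_{ij}^2)$ is tight and cannot be improved. The vanishing of the shared-index covariances is the other load-bearing step: without it one would only obtain a cruder $O(M^3)$ variance bound and lose the clean $(p-O(M^2))^2$ denominator.
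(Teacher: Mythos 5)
Your proof is correct and follows essentially the same route as the paper's: the paper controls $\lambda_{\min}(K)$ via the Wolkowicz--Styan trace bound $\lambda_{\min}(K)\geq m - s\sqrt{M-1}$, where $Ms^2=\sum_{i\neq j}k(x_i,x_j)^2$ is exactly your $W=\norm{E}_F^2$, and then applies Chebyshev using the same two moment computations --- $\EE[k(x_i,x_j)^2]=\tfrac{1}{2p}$ and $\EE[k(x_i,x_j)^4]=O(1/p)$, the latter by the same counting argument that a vanishing signed sum of four frequencies determines the fourth from the other three. If anything, your treatment of $\mathrm{Var}(W)$ is more careful than the paper's: the paper's expansion of $\EE[s^4]$ lists only the diagonal and fully-distinct index terms, silently dropping pairs sharing one index such as $k(x_i,x_j)^2k(x_i,x_k)^2$, and your observation that $\EE[E_{ij}^2\mid x_i]=\tfrac{1}{2p}$ is constant (hence those cross-covariances vanish exactly) is precisely the argument that justifies this omission.
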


\begin{proof}
The proof is an application of \cref{lemma:eigenval_trace}, \cref{lemma:exp_var_s} and 
Chebyshev's inequality. 
Let $s^2 = 1 + \frac{1}{M} \sum_{j \neq i}^M k(x_i,x_i)^2 - 1 = \frac{1}{M} \sum_{j \neq i}^M k(x_i,x_j)^2$.
    \begin{align}
    \PP(|s^2 - \frac{2(M-1)}{p}| > \epsilon) &\leq \frac{\VV[s^2]}{\epsilon^2}\\
    \PP(s^2 \geq \frac{1}{4(M-1)}) &\leq \PP(s^2- \frac{2(M-1)}{p}\geq \frac{1}{4(M-1)}- \frac{2(M-1)}{p}) \\
    &\leq \PP(s^2- \frac{(p-8(M-1)^2)}{4(M-1)p})\\
    & \leq \frac{C}{p}\frac{4(M-1)^2p^2}{(p-8(M-1)^2)^2}
    \end{align}
    With lemma \ref{lemma:eigenval_trace}, 
    \begin{align}
        \PP(\lambda_{\min}(K)> \frac{1}{2}) &\geq \PP(s \leq \frac{1}{2\sqrt{M-1}}) =  \PP(s^2 \leq \frac{1}{4(M-1)})\\
        &\geq 1 - \PP(s^2 \geq \frac{1}{4(M-1)})\\
        &\geq 1 - \frac{C}{p}\frac{4(M-1)^2p^2}{(p-8(M-1)^2)^2}
    \end{align}

\end{proof}

Let $(x_1, \dots x_M)$ uniformly distributed on $[0, 2\pi]^d$, and let $K$ be the empirical kernel matrix. We want to lower bound the smallest eigenvalue of K, and we use the following lemma 
\begin{restatable}[]{lemma}{}
\label{lemma:eigenval_trace}
    Theorem 2.1 from \cite{WOLKOWICZ1980471}.
    Let $A$ be a $M\times M$ complex matrix with real eigenvalues. Let $m = \text{tr}(A)/M$  and  $s^2 = \text{tr}(A^2)/M - m^2$. We have that 
    \begin{equation}
        m - s\sqrt{M-1} \leq \lambda_{\text{min}}(A) \leq m - \frac{s}{\sqrt{M-1}}
    \end{equation}
\end{restatable}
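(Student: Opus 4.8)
The plan is to recognize that $m$ and $s^2$ are nothing but the empirical mean and variance of the eigenvalue multiset, and to reduce the statement to an elementary inequality about $M$ real numbers with prescribed first two moments. I would write the (real, by hypothesis) eigenvalues as $\lambda_1 \leq \dots \leq \lambda_M$, so that $\lambda_{\min}(A) = \lambda_1$; since $\text{tr}(A) = \sum_i \lambda_i$ and $\text{tr}(A^2) = \sum_i \lambda_i^2$, one has $m = \frac{1}{M}\sum_i\lambda_i$ and $s^2 = \frac{1}{M}\sum_i\lambda_i^2 - m^2$, so everything depends only on the spectrum. Centering by $d_i := \lambda_i - m$ gives $\sum_i d_i = 0$ and $\sum_i d_i^2 = Ms^2$, with $d_1 = \min_i d_i \leq 0$ (the minimum of a zero-mean sequence is nonpositive). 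Both bounds then reduce to controlling the single quantity $d_1$.

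For the lower bound, I would apply Cauchy--Schwarz to the remaining $M-1$ deviations $d_2,\dots,d_M$, whose sum equals $-d_1$:
\begin{equation}
d_1^2 = \Big(\sum_{i\geq 2} d_i\Big)^2 \leq (M-1)\sum_{i\geq 2} d_i^2 = (M-1)(Ms^2 - d_1^2)\,.
\end{equation}
Rearranging yields $d_1^2 \leq (M-1)s^2$, hence $d_1 \geq -s\sqrt{M-1}$, i.e. $\lambda_{\min}(A) \geq m - s\sqrt{M-1}$. This is the classical Samuelson/Laguerre-type bound and is the easy half.

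The upper bound is the more delicate part, and it is precisely where the \emph{ordering} (the fact that $\lambda_1$ is the minimum, not an arbitrary eigenvalue) must be exploited. I would introduce the nonnegative gaps $\mu_i := d_i - d_1 \geq 0$ for $i \geq 2$, and use the two moment identities to compute $\sum_{i\geq 2}\mu_i = -Md_1$ and $\sum_{i\geq 2}\mu_i^2 = Ms^2 + Md_1^2$. The key step is that for nonnegative reals the sum of squares never exceeds the square of the sum, $\sum_{i\geq 2}\mu_i^2 \leq \big(\sum_{i\geq 2}\mu_i\big)^2$ (all cross terms are nonnegative); substituting gives $Ms^2 + Md_1^2 \leq M^2 d_1^2$, i.e. $s^2 \leq (M-1)d_1^2$. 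Since $d_1 \leq 0$, this forces $d_1 \leq -s/\sqrt{M-1}$, that is $\lambda_{\min}(A) \leq m - s/\sqrt{M-1}$.

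The main obstacle is this upper bound: the two halves manipulate the same pair of quantities $\sum\mu_i$ and $\sum\mu_i^2$ but in opposite directions, and the inequality $\sum\mu_i^2 \leq (\sum\mu_i)^2$ holds only because the gaps above the minimum are nonnegative — this is the structural fact distinguishing $\lambda_{\min}$ from a generic eigenvalue, and the lower bound (pure Cauchy--Schwarz) would be unable to supply it. As a sanity check, equality holds throughout when $\lambda_1 = \dots = \lambda_{M-1}$ with a single larger eigenvalue (all but one $\mu_i$ vanishing), confirming that both bounds are tight.
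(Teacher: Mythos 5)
Your proof is correct. Note that the paper itself gives no proof of this lemma: it is stated purely as a citation to Theorem 2.1 of Wolkowicz and Styan, so your self-contained derivation is genuinely different in kind — it replaces an appeal to the literature with an elementary two-moment argument. The reduction to the centered deviations $d_i = \lambda_i - m$ (using $\text{tr}(A^2) = \sum_i \lambda_i^2$, which holds for any complex matrix with real eigenvalues by triangularization), the Cauchy--Schwarz step for the Samuelson-type lower bound, and the inequality $\sum_{i\geq 2}\mu_i^2 \leq \big(\sum_{i\geq 2}\mu_i\big)^2$ for the nonnegative gaps $\mu_i = d_i - d_1$ are all sound: your identities $\sum_{i\geq 2}\mu_i = -Md_1$ and $\sum_{i\geq 2}\mu_i^2 = Ms^2 + Md_1^2$ check out, and together they give $s^2 \leq (M-1)d_1^2$ with $d_1 \leq 0$, which is exactly the upper bound. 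One small inaccuracy in your closing sanity check: the configuration $\lambda_1 = \dots = \lambda_{M-1} < \lambda_M$ makes only the \emph{upper} bound tight; tightness of the lower bound requires the reflected configuration $\lambda_1 < \lambda_2 = \dots = \lambda_M$, and for $M > 2$ no single spectrum achieves equality in both simultaneously. This slip does not affect the validity of the argument.
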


For the kernel matrix, $m = 1$ and $s^2 = 1 + \frac{1}{M} \sum_{j \neq i}^M k(x_i,x_i)^2 - 1 = \frac{1}{M} \sum_{j \neq i}^M k(x_i,x_j)^2$.

We prove the following result:
\begin{restatable}[]{lemma}{}
\label{lemma:exp_var_s}
The expectation of $s^2$ is given by
    \begin{itemize}
        \item $\EE[s^2] = \displaystyle\frac{M(M-1)}{2Mp} = \displaystyle\frac{(M-1)}{2p}$
    \end{itemize}

Furthermore, there exists a constant $C$ such that 
\begin{itemize}
        \item $\VV[s^2] \leq \displaystyle\frac{C}{p}$
    \end{itemize}
\end{restatable}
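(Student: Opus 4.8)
The plan is to reduce both moments of $s^2$ to torus integrals of Fourier characters, which collapse by orthogonality, and then to control the resulting fourth moment of the kernel by a crude counting argument. First I would rewrite the kernel as $k(x,x') = \frac{1}{p}\sum_{\omega\in\Omega}\cos(\omega^\top(x-x'))$ and record the one identity that drives everything: for any $\nu\in\mathbb{Z}^d$ and $x$ uniform on $[0,2\pi]^d$ one has $\EE[e^{i\nu^\top x}] = \delta_{\nu,0}$, hence $\EE[\cos(\nu^\top x)] = \delta_{\nu,0}$ and $\EE[\sin(\nu^\top x)] = 0$; since every frequency is an integer, the integrands are $2\pi$-periodic and the same holds with $x$ replaced by the difference $u = x_i - x_j$ of two independent uniform points, giving $\EE[\cos(\nu^\top u)] = \delta_{\nu,0}$.

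For the expectation, writing $s^2 = \frac{1}{M}\sum_{i\neq j}k(x_i,x_j)^2$ (so that $m = \mathrm{tr}(K)/M = 1$) and using the iid symmetry of the data gives $\EE[s^2] = (M-1)\,\EE[k(x_1,x_2)^2]$. Expanding the square with the product-to-sum formula $\cos a\cos b = \tfrac12(\cos(a-b)+\cos(a+b))$ and applying the averaging identity, the standing assumption $-\omega\notin\Omega$ kills the $\delta_{\omega+\omega',0}$ contributions and leaves only the diagonal $\omega=\omega'$, so that $\EE[k(x_1,x_2)^2] = \frac{1}{2p}$ and $\EE[s^2] = \frac{M-1}{2p}$, as claimed.

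The variance is the substantive part. I would expand $\VV[s^2] = \frac{1}{M^2}\sum_{(i,j),(k,l)}\mathrm{Cov}\!\big(k(x_i,x_j)^2, k(x_k,x_l)^2\big)$ and classify terms by the overlap of $\{i,j\}$ and $\{k,l\}$. Disjoint pairs are independent and contribute nothing. The key observation --- which is what keeps the variance $O(1/p)$ rather than $O(M/p)$ --- is that pairs sharing \emph{exactly one} index also decorrelate: conditioning on the shared variable and using the averaging identity shows that $\EE_{x_2}[k(x_1,x_2)^2 \mid x_1] = \frac{1}{2p}$ is constant in $x_1$, so the two factors are uncorrelated. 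Hence only the $2M(M-1)$ ordered terms with $\{i,j\}=\{k,l\}$ survive, giving $\VV[s^2] = \frac{2(M-1)}{M}\,\VV[k(x_1,x_2)^2] \le 2\,\EE[k(x_1,x_2)^4]$.

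It remains to bound the fourth moment. Expanding the product of four cosines as $\frac{1}{16}\sum_{\epsilon\in\{\pm1\}^4}\cos\!\big((\textstyle\sum_a\epsilon_a\omega_a)^\top u\big)$ and averaging yields $\EE[k^4] = \frac{1}{16p^4}\,\#\{(\omega_1,\dots,\omega_4,\epsilon): \sum_a\epsilon_a\omega_a = 0\}$; for each of the $16$ sign vectors, fixing $\omega_1,\omega_2,\omega_3$ determines $\omega_4$, so there are at most $16p^3$ solutions and $\EE[k^4]\le 1/p$, whence $\VV[s^2]\le 2/p$ and $C=2$ works. The main obstacle is precisely this variance step: recognizing that the one-shared-index covariances vanish (via the constancy of the conditional second moment) and controlling the fourth moment through the crude bound that the additive energy of $\Omega$ is at most $p^3$. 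Everything else is bookkeeping with the single orthogonality identity.
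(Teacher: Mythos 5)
Your proof is correct, and its variance step takes a genuinely different---and tighter---route than the paper's. The expectation part is the same in substance: both reduce to $\EE[k(x_1,x_2)^2]=\tfrac{1}{2p}$ using orthogonality of integer-frequency Fourier modes on the torus together with the standing assumption $-\omega\notin\Omega$. For the variance, the paper computes $\EE[s^4]$ and subtracts $\EE[s^2]^2$, and its expansion of $\big(\sum_{j\neq i}k(x_i,x_j)^2\big)^2$ keeps only the $\{i,j\}=\{k,l\}$ terms and the four-distinct-index terms; the $\Theta(M^3)$ pairs sharing exactly one index (each with expectation $\tfrac{1}{4p^2}$) are silently dropped, which is harmless for the final $O(1/p)$ conclusion in the overparameterized regime $M\leq p$ but means the displayed expansion is not an identity. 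Your covariance decomposition repairs exactly this point: by conditioning on the shared variable and using that $\EE_{x_2}[k(x_1,x_2)^2\mid x_1]=\tfrac{1}{2p}$ is constant in $x_1$ (again a consequence of integer frequencies and $-\omega\notin\Omega$), you show those covariances vanish exactly, yielding the clean identity $\VV[s^2]=\tfrac{2(M-1)}{M}\VV[k(x_1,x_2)^2]$ and the explicit constant $C=2$. For the remaining ingredient $\EE[k(x_1,x_2)^4]=O(1/p)$, both arguments rest on the same counting fact---at most $16p^3$ solutions of $\sum_a\epsilon_a\omega_a=0$---but your sign-vector identity $\prod_{a}\cos\theta_a=\tfrac{1}{16}\sum_{\epsilon\in\{\pm1\}^4}\cos\big(\sum_a\epsilon_a\theta_a\big)$ packages in one line what the paper obtains through a multinomial case analysis over coincidence patterns of $(\omega_1,\dots,\omega_4)$ plus product-to-sum expansions, and it automatically covers the coincidence cases the paper bounds separately. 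In short: the paper's route is a brute-force moment computation requiring nothing beyond orthogonality, at the cost of heavier bookkeeping and one dropped class of terms; yours trades that for a single conditional-independence observation and gets an exact variance formula, a cleaner fourth-moment bound, and an explicit constant.
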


\begin{proof}
    With the use of \autoref{lemma:kernel_elements}, we have that
\begin{align}
    \EE[s^2] = \frac{M(M-1)}{M}\EE(k(x, x')^2) = \frac{M-1}{2p}
\end{align}
\begin{align}
\EE[s^4] &= \EE\Bigg[\frac{1}{M^2} \bigg(\sum_{j \neq i}^M k(x_i,x_j)^2\bigg)^2\Bigg] = \frac{1}{M^2}\EE\Bigg[ 2\sum_{j \neq i}^M k(x_i,x_j)^4 + \sum_{j \neq i\neq k\neq l=1}^M k(x_i,x_j)^2k(x_k,x_l)^2\Bigg]\\
&\leq \frac{C'}{p}\frac{M(M-1)}{M^2} + \frac{M(M-1)(M-2)(M-3)}{M^2}\frac{(M-1)^2}{4p^2}\\
\VV[s^2] &\leq \frac{C'}{p}\frac{M(M-1)}{M^2} + \frac{M(M-1)(M-2)(M-3)}{M^2}\frac{1}{4p^2} - \frac{(M-1)^2}{4p^2}\\
&\leq \frac{C'}{p}\frac{M(M-1)}{M^2} +\frac{1}{4p^2}((M-2)(M-3) - (M-1)^2)\\
&\leq \frac{C}{p}
\end{align}
\end{proof}

We just need to prove the following lemma
\begin{restatable}[]{lemma}{}
 \label{lemma:kernel_elements}
Let $x_i$, $x_j$ independent and uniformly distributed on $[0, 2\pi]^d$. There exists a constant $C'>0$ such that 
\begin{itemize}
        \item $\mathbb{E}[k(x_i,x_j)^2] = \displaystyle\frac{1}{2p}$
        \item $\mathbb{E}[k(x_i,x_j)^4] \leq \displaystyle\frac{C'}{p}$
    \end{itemize}
\end{restatable}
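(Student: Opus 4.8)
The plan is to first collapse the kernel into a single real Fourier sum and then turn every moment into a lattice-counting problem. Using the cosine addition formula, $k(x,x') = \frac{1}{p}\sum_{\omega\in\Omega}\cos\!\big(\omega^{\top}(x-x')\big)$. Because the frequencies are integers and $x,x'$ are independent and uniform on $[0,2\pi]^d$, I would write each cosine as $\frac12\big(e^{i\omega^{\top}(x-x')}+e^{-i\omega^{\top}(x-x')}\big)$ and exploit the orthogonality relation $\EE_{x,x'}\big[e^{iK^{\top}(x-x')}\big]=\EE_{x}\big[e^{iK^{\top}x}\big]\,\EE_{x'}\big[e^{-iK^{\top}x'}\big]=\mathbf{1}_{K=0}$, valid for any $K\in\mathbb{Z}^d$. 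This reduces each moment of $k$ to counting the sign patterns for which a signed sum of frequencies vanishes.

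For the second moment, expanding the product of two cosines gives $\EE[k^2]=\frac{1}{p^2}\sum_{\omega,\omega'\in\Omega}\tfrac12\big(\mathbf{1}_{\omega=\omega'}+\mathbf{1}_{\omega=-\omega'}\big)$. The first indicator contributes the $p$ diagonal terms, while the second contributes nothing under the standing assumptions that $0\notin\Omega$ and $-\omega\notin\Omega$ for every $\omega\in\Omega$ (no frequency is zero and no two are negatives of each other). This yields exactly $\EE[k^2]=\frac{1}{2p}$. For the fourth moment, I would expand the product of four cosines into its $2^4=16$ complex exponentials to obtain $\EE[k^4]=\frac{1}{16\,p^4}\sum_{s\in\{\pm1\}^4}\#\{(\omega_1,\dots,\omega_4)\in\Omega^4:\ \textstyle\sum_i s_i\omega_i=0\}$. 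The key observation is that for any fixed sign pattern $s$, fixing any three frequencies determines the fourth uniquely through $s_4\omega_4=-(s_1\omega_1+s_2\omega_2+s_3\omega_3)$, so each pattern admits at most $p^3$ solutions in $\Omega^4$. Summing over the $16$ patterns gives $\EE[k^4]\le \frac{16\,p^3}{16\,p^4}=\frac1p$, so the claim holds with $C'=1$.

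The only real subtlety lies in the fourth-moment count, and it is a conceptual rather than computational one: one must recognize that no fine structure of $\Omega$ is needed, and that the crude ``fix three, solve for the fourth'' bound already delivers the stated order. It is worth flagging that this order is optimal and cannot be sharpened to $1/p^2$: for arithmetic-progression-like spectra (exactly those produced by the ternary or Golomb encodings of \cref{subsec:Example_quantum_Fourier}) the two-plus-two sign patterns contribute the additive energy of $\Omega$, which is $\Theta(p^3)$, so in fact $\EE[k^4]=\Theta(1/p)$. This is precisely why the bound feeding into \cref{lemma:exp_var_s} is stated as $C'/p$ rather than $C'/p^2$.
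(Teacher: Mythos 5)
Your proof is correct, and it reaches the lemma by a genuinely cleaner route than the paper's. Both arguments rest on the same orthogonality fact — an expectation of a product of integer-frequency Fourier modes survives only when a signed sum of frequencies vanishes — but you exploit shift invariance first, collapsing the kernel to $\frac{1}{p}\sum_{\omega\in\Omega}\cos\big(\omega^{\top}(x-x')\big)$ and expanding every moment into complex exponentials in the single variable $x-x'$. This turns each moment into one uniform lattice count over \emph{all} tuples $(\omega_1,\dots,\omega_4)\in\Omega^4$, and the bound of $p^3$ solutions per sign pattern (``fix three, solve for the fourth'') finishes the job with the explicit constant $C'=1$. The paper instead keeps $x_i$ and $x_j$ separate, expands $k^4$ by the multinomial formula in the $\cos/\sin$ basis, and splits into cases according to the number of distinct frequencies: the terms with a repeated frequency are bounded crudely (at most $\mathcal{O}(p^3)$ terms, each bounded by a constant), and only the all-distinct term is handled via product-to-sum identities together with the same counting insight you use (given $\omega_1,\omega_2,\omega_3$, only constantly many $\omega_4$ can close a signed sum). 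Your version eliminates that case analysis entirely, makes the constant explicit, and your closing remark about additive energy is a correct and useful addition: for the interval-like spectra produced by the ternary and Golomb encodings of \cref{subsec:Example_quantum_Fourier}, the two-plus-two patterns contribute $\Theta(p^3)$ solutions, so $\mathbb{E}[k^4]=\Theta(1/p)$ and the lemma's bound cannot be sharpened to $\mathcal{O}(1/p^2)$ in general. One point both proofs share but only you make explicit: the exact identity $\mathbb{E}[k^2]=\frac{1}{2p}$ requires the standing assumptions $0\notin\Omega$ and $-\omega\notin\Omega$ for all $\omega\in\Omega$ (otherwise the indicator $\mathbf{1}_{\omega=-\omega'}$ contributes extra diagonal-type terms); the paper uses this implicitly when it drops the $\omega\neq\omega'$ cross terms.
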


\begin{align}
        \mathbb{E}[k(x_i,x_j)^2] &= \frac{1}{p^2} \mathbb{E}[ \Sum_{\omega}(\cos(\omega^{\top}x_i) \cos(\omega^{\top}x_j) + \sin(\omega^{\top}x_i) \sin(\omega^{\top}x_j))^2] \\
        &+ \frac{1}{p^2} \mathbb{E}[\Sum_{\omega \neq \omega'}(\cos(\omega^{\top}x_i) \cos(\omega^{\top}x_j)
        + \sin(\omega^{\top}x_i) \sin(\omega^{\top}x_j)) \times\\
        &(\cos(\omega'^{\top}x_i) \cos(\omega'^{\top}x_j)
        + \sin(\omega'^{\top}x_i) \sin(\omega'^{\top}x_j))]  
\end{align}

We have that $ \mathbb{E}[\cos^2(\omega^{\top}x_i)\cos^2(\omega^{\top}x_j)] = \frac{1}{4}$
 and $\mathbb{E}[\sin^2(\omega^{\top}x_i)\sin^2(\omega^{\top}x_j)] = \frac{1}{4}$.
 
 Furthermore $\mathbb{E}[\cos(\omega^{\top}x_i) \cos(\omega^{\top}x_j) \sin(\omega^{\top}x_i) \sin(\omega^{\top}x_j)] = 0$.

Therefore,
\begin{equation}
    \mathbb{E}[k(x_i,x_j)^2] = \frac{1}{2p}
\end{equation}

We introduce some simplifying notations that will be used for the rest of the calculation.

\begin{align}
    &\sum_{\omega \in \Omega} c_ic_j + s_is_j = \sum_{\omega \in \Omega} \cos(\omega^{\top}x_i) \cos(\omega^{\top}x_j)  + \sin(\omega^{\top}x_i) \sin(\omega^{\top}x_j)\\
    &\sum_{\omega_1, \omega_2 \in \Omega} (c_i^{(1)}c_j^{(1)} + s_i^{(1)}s_j^{(1)})(c_i^{(2)}c_j^{(2)} + s_i^{(2)}s_j^{(2)}) \\ &= \sum_{\omega_1, \omega_2 \in \Omega}\Big(\cos(\omega_1^{\top}x_i) \cos(\omega_1^{\top}x_j)  + \sin(\omega_1^{\top}x_i) \sin(\omega_1^{\top}x_j))\Big)
    \Big(\cos(\omega_2^{\top}x_i) \cos(\omega_2^{\top}x_j)  + \sin(\omega_2^{\top}x_i) \sin(\omega_2^{\top}x_j)\Big)
\end{align}

By applying the multinomial formulas, we have that
\begin{align}
    &k(x_i,x_j)^4 = \frac{1}{p^4}\bigg(\sum_{\omega \in \Omega} c_ic_j + s_is_j\bigg)^4\\
    &=\frac{1}{p^4}\sum_{\omega \in \Omega} (c_ic_j + s_is_j)^4\\
    &+ \frac{4}{p^4}\sum_{\omega_1, \omega_2 \in \Omega, \omega_1 \neq \omega_2} (c_i^{(1)}c_j^{(1)} + s_i^{(1)}s_j^{(1)})^3(c_i^{(2)}c_j^{(2)} + s_i^{(2)}s_j^{(2)})\\
    &+ \frac{6}{p^4}\sum_{\omega_1, \omega_2 \in \Omega, \omega_1 \neq \omega_2} (c_i^{(1)}c_j^{(1)} + s_i^{(1)}s_j^{(1)})^2(c_i^{(2)}c_j^{(2)} + s_i^{(2)}s_j^{(2)})^2\\
    &+ \frac{4}{p^4}\sum_{\omega_1, \omega_2, \omega_3 \in \Omega, \omega_1 \neq \omega_2 \neq \omega_3} (c_i^{(1)}c_j^{(1)} + s_i^{(1)}s_j^{(1)})^2(c_i^{(2)}c_j^{(2)} + s_i^{(2)}s_j^{(2)})(c_i^{(3)}c_j^{(3)} + s_i^{(3)}s_j^{(3)})\\
    &+ \frac{24}{p^4}\sum_{\omega_1, \omega_2, \omega_3, \omega_4 \in \Omega, \omega_1 \neq \omega_2 \neq \omega_3 \neq \omega_4} (c_i^{(1)}c_j^{(1)} + s_i^{(1)}s_j^{(1)})(c_i^{(2)}c_j^{(2)} + s_i^{(2)}s_j^{(2)})(c_i^{(3)}c_j^{(3)} + s_i^{(3)}s_j^{(3)})(c_i^{(4)}c_j^{(4)} + s_i^{(4)}s_j^{(4)})
\end{align}

Our goal is to show that there exists a constant $C$ such that $\EE[k(x_i, x_j)^4] \leq \frac{C}{p}$. All the sums except the last one have at most $p^3$ terms, and we can bound each one of them using the fact that $|c_ic_j + s_is_j| \leq 2$. We need to look at the last sum.

\begin{align}
    &\EE[(c_i^{(1)}c_j^{(1)} + s_i^{(1)}s_j^{(1)})(c_i^{(2)}c_j^{(2)} + s_i^{(2)}s_j^{(2)})(c_i^{(3)}c_j^{(3)} + s_i^{(3)}s_j^{(3)})(c_i^{(4)}c_j^{(4)} + s_i^{(4)}s_j^{(4)})]\\
    &= \sum_{f_1, f_2, f_3, f_4 \in \{\cos, \sin\}} \EE[f_1(\omega_1^\top x_i)f_2(\omega_2^\top x_i)f_3(\omega_3^\top x_i)f_4(\omega_4^\top x_i)f_1(\omega_1^\top x_j)f_2(\omega_2^\top x_j)f_3(\omega_3^\top x_j)f_4(\omega_4^\top x_j)]\\
    &= \sum_{f_1, f_2, f_3, f_4 \in \{\cos, \sin\}} \EE[f_1(\omega_1^\top x_i)f_2(\omega_2^\top x_i)f_3(\omega_3^\top x_i)f_4(\omega_4^\top x_i)]^2
\end{align}

\begin{align*}
    \cos(\omega_1^\top x_i)\cos(\omega_2^\top x_i)\cos(\omega_3^\top x_i)\cos(\omega_4^\top x_i)
    = \frac{1}{8}[&\cos((\omega_1 + \omega_2 + \omega_3 + \omega_4)^\top x_i)\\
    +&\cos((\omega_1 + \omega_2 - \omega_3 - \omega_4)^\top x_i)\\ 
    +&\cos((\omega_1 + \omega_2 + \omega_3 - \omega_4)^\top x_i)\\ 
    +&\cos((\omega_1 + \omega_2 - \omega_3 + \omega_4)^\top x_i)\\ 
    +&\cos((\omega_1 - \omega_2 + \omega_3 + \omega_4)^\top x_i)\\ 
    +&\cos((\omega_1 - \omega_2 - \omega_3 - \omega_4)^\top x_i)\\ 
    +&\cos((\omega_1 - \omega_2 + \omega_3 - \omega_4)^\top x_i)\\ 
    +&\cos((\omega_1 - \omega_2 - \omega_3 + \omega_4)^\top x_i)\\ 
    &]\\
    \cos(\omega_1^\top x_i)\cos(\omega_2^\top x_i)\cos(\omega_3^\top x_i)\sin(\omega_4^\top x_i)
    = \frac{1}{8}[&\sin((\omega_1 + \omega_2 + \omega_3 + \omega_4)^\top x_i)\\
    -&\sin((\omega_1 + \omega_2 - \omega_3 - \omega_4)^\top x_i)\\ 
    -&\sin((\omega_1 + \omega_2 + \omega_3 - \omega_4)^\top x_i)\\ 
    +&\sin((\omega_1 + \omega_2 - \omega_3 + \omega_4)^\top x_i)\\ 
    +&\sin((\omega_1 - \omega_2 + \omega_3 + \omega_4)^\top x_i)\\ 
    -&\sin((\omega_1 - \omega_2 - \omega_3 - \omega_4)^\top x_i)\\ 
    -&\sin((\omega_1 - \omega_2 + \omega_3 - \omega_4)^\top x_i)\\ 
    +&\sin((\omega_1 - \omega_2 - \omega_3 + \omega_4)^\top x_i)\\ 
    &]\\
\end{align*}
All terms in $\sin(\omega^\top x_i) \: $for all $ \omega$ and all terms in $\cos(\omega^\top x_i) $ for all $ \omega \neq 0$  have a zero expectation, therefore the only contribution comes from the terms such that there is an even number of $\sin$ among $\{f_1, f_2, f_3, f_4\}$.

Given $\omega_1, \omega_2, \omega_3$, there is only a constant number of $\omega_4$ such that there is a zero element in the set $\{\omega_1 + \epsilon\omega_2 + \epsilon'\omega_3 + \epsilon''\omega_4, \epsilon, \epsilon', \epsilon'' \in\{+1, -1\}\}$.

Therefore 

\begin{align}
    &\sum_{\omega_4 \in \Omega \backslash \{\omega_1, \omega_2, \omega_3\}}\sum_{f_1, f_2, f_3, f_4 \in \{\cos, \sin\}} \EE[f_1(\omega_1^\top x_i)f_2(\omega_2^\top x_i)f_3(\omega_3^\top x_i)f_4(\omega_4^\top x_i)]^2 \leq C\\
    &\frac{24}{|\Omega|^4}\sum_{\omega_1, \omega_2, \omega_3, \omega_4 \in \Omega, \omega_1 \neq \omega_2 \neq \omega_3 \neq \omega_4} (c_i^{(1)}c_j^{(1)} + s_i^{(1)}s_j^{(1)})(c_i^{(2)}c_j^{(2)} + s_i^{(2)}s_j^{(2)})(c_i^{(3)}c_j^{(3)} + s_i^{(3)}s_j^{(3)})(c_i^{(4)}c_j^{(4)} + s_i^{(4)}s_j^{(4)})   \\
    &\leq \frac{24C|\Omega|^3}{|\Omega|^4} \leq \frac{C'}{|\Omega|}
\end{align}
where $C$ is a constant

\section{Norm of weight vectors of quantum models}
\label{app:weingarten}

We consider a circuit with a diagonal encoding layer $S(x)$ applied to the $|+\rangle^n$ state followed by a trainable unitary $V$ and an observable $O$ such that $\text{Tr}(O) = 0$. We illustrate this example in \autoref{fig:Potential_Advantage_Framework}. 
The quantum model can then be written
\begin{equation}
    f_Q(x) = \text{Tr}(O\: VS(x)(|+\rangle\langle +|^n)S(x)^\dagger V^\dagger)
\end{equation}

The spectrum only depends on the encoding unitary $S(x)$, and each frequency $\omega$ has a multiplicity noted $R(\omega)$. We consider two types of encodings
among many possibilities that are detailed in \cite{peters2022generalization}:

\begin{itemize}
    \item \textbf{The ternary encoding}
    $S(x) = \bigotimes_{k=0}^{n-1} RZ_{k}(x\:3^k/2)$
    where $RZ_{k}$ denotes a $Z$ rotation applied to the qubit $k$. The spectrum produced by this encoding is the interval $\llbracket -(3^n-1)/2, \:\: (3^n-1)/2\rrbracket.$
    The multiplicity of a frequency $\omega$ depends on its ternary representation. Let us write $\omega$ in a ternary representation such that $\omega = \frac{1}{2}\sum_{k=0}^n t_k 3^k$ with $t_k \in \{-1, 0, 1\}$. The multiplicity depends on the number of zeros in the ternary representation of $\omega$ that we note $t(\omega)$. We have then $R(\omega) = 2^{n - t(\omega)}$.
    
    \item \textbf{The Golomb encoding}
    $S(x) = \exp(-i \displaystyle\frac{x}{2} R_G)$ where $R_G$ is a Golomb ruler \cite{piccard1939ensembles}. The resulting spectrum are all the integers in the set $\llbracket 0, N(N-1)/2\rrbracket.$ The frequency $0$ is of multiplicity $N$ and all other frequencies are of multiplicity $1$.
\end{itemize}

\begin{restatable}[]{thm}{WeightVectoretwodesign}\label{thm:EncodingsWeightVectorNorm}
    For the Golomb encoding, $\norm{\beta_Q}^2 \sim N$.
    For the ternary encoding, $\norm{\beta_Q}^2 \sim (3/2)^n$ 
\end{restatable}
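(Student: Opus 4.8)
The plan is to express the quantum model as a single–variable Fourier series whose coefficients are explicit quadratic functions of the trainable unitary $V$, to compute the expected squared weight–vector norm by a second–moment (order-$4$) Haar integral, and finally to evaluate the resulting sum for each of the two encodings. First I would write the diagonal encoding as $S(x)=\mathrm{diag}(e^{i\lambda_1 x},\dots,e^{i\lambda_N x})$ and set $A:=V^\dagger O V$. Since $\rho:=|+\rangle\!\langle+|^{\otimes n}$ has all entries equal to $1/N$, cyclicity of the trace gives
\[
  f_Q(x)=\text{Tr}\!\big(A\,S(x)\rho S(x)^\dagger\big)=\sum_{j,k}A_{jk}\rho_{kj}\,e^{i(\lambda_j-\lambda_k)x}=\sum_{\omega}c_\omega\,e^{i\omega x},
\]
with Fourier coefficient $c_\omega=\tfrac1N\text{Tr}(M_\omega^\dagger A)$, where $M_\omega:=\sum_{(j,k):\lambda_j-\lambda_k=\omega}|j\rangle\!\langle k|$. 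The two facts about this ``shift matrix'' that the argument needs are $\text{Tr}(M_\omega)=N\,\delta_{\omega,0}$ and $\text{Tr}(M_\omega^\dagger M_\omega)=R(\omega)$, the redundancy of $\omega$. Passing to the real feature map of \cref{eq:feat-map} and using $c_{-\omega}=\overline{c_\omega}$ (so that $a_\omega^2+b_\omega^2=4|c_\omega|^2$ for the cosine/sine coefficients), the weight vector norm is, after accounting for the $1/\sqrt p$ normalization, $\norm{\beta_Q}^2=p\big(c_0^2+2\sum_{\omega\neq0}|c_\omega|^2\big)$.

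Next I would compute $\EE_V[|c_\omega|^2]$. Writing $\overline{c_\omega}=\tfrac1N\text{Tr}(M_\omega V^\dagger O V)$ (using $O=O^\dagger$), one has $|c_\omega|^2=\tfrac1{N^2}\text{Tr}(M_\omega^\dagger V^\dagger O V)\,\text{Tr}(M_\omega V^\dagger O V)$, a degree-$(2,2)$ monomial in $V$, so a $2$-design suffices to replace $V$ by Haar. The twirl $\EE_V[(V^\dagger O V)^{\otimes2}]=\alpha\,I+\beta\,\mathrm{SWAP}$ is fixed by matching the scalars $\text{Tr}(O)^2$ and $\text{Tr}(O^2)$, giving $\alpha=\tfrac{N\text{Tr}(O)^2-\text{Tr}(O^2)}{N(N^2-1)}$ and $\beta=\tfrac{N\text{Tr}(O^2)-\text{Tr}(O)^2}{N(N^2-1)}$. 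Contracting with $M_\omega^\dagger\otimes M_\omega$ yields
\[
  \EE_V[|c_\omega|^2]=\frac1{N^2}\Big(\alpha\,|\text{Tr}(M_\omega)|^2+\beta\,R(\omega)\Big).
\]
With the hypotheses $\text{Tr}(O)=0$ and $\text{Tr}(O^2)=\norm{O}_2^2=N$ this simplifies to $\alpha=-\tfrac1{N^2-1}$, $\beta=\tfrac{N}{N^2-1}$, so $\EE[|c_\omega|^2]=\tfrac{R(\omega)}{N(N^2-1)}$ for $\omega\neq0$ and $\EE[|c_0|^2]=\tfrac{NR(0)-N^2}{N^2(N^2-1)}$.

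Summing over frequencies and using $\sum_\omega R(\omega)=N^2$ (the total number of ordered index pairs), together with the observation that for both encodings the encoding eigenvalues are nondegenerate, so $R(0)=N$ and hence $\EE[|c_0|^2]=0$, I obtain the compact closed form
\[
  \EE_V\big[\norm{\beta_Q}^2\big]=\frac{2p}{N(N^2-1)}\sum_{\omega\neq0}R(\omega)=\frac{2p\,(N^2-N)}{N(N^2-1)}=\frac{2p}{N+1}.
\]
It then remains only to substitute the feature-space dimension of each encoding. For the Golomb ruler the distinct pairwise differences give $p=N(N-1)+1\sim N^2$, whence $\EE[\norm{\beta_Q}^2]\sim 2N$. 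For the ternary encoding $N=2^n$ while the spectrum fills $\llbracket-(3^n-1)/2,\,(3^n-1)/2\rrbracket$, so $p=3^n$ (with $R(\omega)=2^{t(\omega)}$ consistent with $\sum_\omega R(\omega)=4^n$) and $\EE[\norm{\beta_Q}^2]\sim 2(3/2)^n$. Both recover the claimed scalings, the constant $2$ being absorbed into the symbol $\sim$.

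I expect the main obstacle to be bookkeeping rather than a single hard estimate: correctly tracking the factor of $p$ and the factor of $2$ that relate $\norm{\beta_Q}^2$ to $\sum_\omega|c_\omega|^2$ (including the separate treatment of $\omega=0$), and verifying the multiplicity and frequency-count data $(R(\omega),p)$ of each encoding against the sanity constraint $\sum_\omega R(\omega)=N^2$. If the statement is meant to hold with high probability rather than merely in expectation, an additional step is required: bounding $\VV_V[\norm{\beta_Q}^2]$ and invoking a concentration inequality. This is the only place where the genuine fourth-order ($S_4$) Weingarten computation, as opposed to the elementary $2$-design twirl used above, would actually be needed.
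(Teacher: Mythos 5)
Your expectation computation is correct and follows the same route as the paper: the paper's \cref{thm:WeightVectoretwodesign} establishes $\EE_V[\norm{\beta_Q}^2] = p/(N+1)$ for a 2-design $V$ (via element-wise Weingarten sums, its \cref{thm:2momentsHaar}), then plugs in the spectral data of each encoding, exactly as you do. Your packaging via the shift matrices $M_\omega$ and the operator twirl $\EE_V[(V^\dagger O V)^{\otimes 2}] = \alpha I + \beta\,\mathrm{SWAP}$ is a tidier way of doing the same order-2 moment, and your multiplicity bookkeeping for the ternary encoding is actually more reliable than the paper's: your $R(\omega)=2^{t(\omega)}$ ($t(\omega)$ the number of zeros in the balanced-ternary representation) passes the sanity check $\sum_\omega R(\omega)=N^2$, whereas the paper's stated $R(\omega)=2^{n-t(\omega)}$ does not. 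The factor-of-two discrepancies ($2p/(N+1)$ versus $p/(N+1)$, $p=N(N-1)+1$ versus $N(N-1)/2$) are pure convention about whether $\Omega$ counts signed frequencies and the zero frequency, and are absorbed into "$\sim$", as you note.

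The genuine gap is the concentration step, which you flag but do not carry out --- and it is not optional. The theorem is proved and used in the paper as a statement about the random variable $\norm{\beta_Q}^2$ itself, holding with high probability over the draw of $V$, not about its mean. An expectation of order $N$ alone does not give this: since $\norm{\beta_Q}^2 \geq 0$, Markov controls only the upper tail, and a nonnegative variable with mean $\Theta(N)$ can still vanish with probability $1-o(1)$ (e.g.\ take the value $N^2$ with probability $1/N$ and $0$ otherwise). The lower-deviation control is exactly the second half of the paper's \cref{thm:WeightVectoretwodesign}: assuming $V$ is a \emph{4-design}, an order-4 Weingarten computation (\cref{thm:4momentsHaar}) gives $\VV_V[\norm{\beta_Q}^2] = \Theta\big(\tfrac{p^2}{N^6}\sum_{\omega\neq 0}R(\omega)^2 + \tfrac{p^2}{N^4}\big)$, which is $O(1)$ for both encodings (for Golomb, $\sum_{\omega\neq0}R(\omega)^2 \sim N^2$ so both terms are $O(1)$; for ternary, with your corrected multiplicities $\sum_{\omega\neq0}R(\omega)^2 = 6^n-4^n$, the bound is $(27/32)^n + (9/16)^n = o(1)$), so Chebyshev yields concentration around the diverging mean. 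Note also that this step requires strengthening the hypothesis on $V$ from a 2-design to a 4-design; your proof as written never states the design order needed for the full claim, so to complete it you must both add this hypothesis and perform (or cite) the fourth-moment computation.
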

\begin{proof}
We apply \cref{thm:WeightVectoretwodesign} to the two encodings.
    \begin{itemize}
        \item For the Golomb encoding, $p = N(N-1)/2$ and $R(\omega) = 1$ for all non zero frequencies. Therefore $\EE_V[\norm{\beta_Q}^2] = \frac{N(N-1)}{2(N+1)} \sim N$ and $\VV_V[\norm{\beta_Q}^2] = O(\frac{N^4}{N^6}N^2 + \frac{N^4}{N^4}) = O(1)$ . So with high probability $\norm{\beta_Q}^2 \sim N$.
        \item For the ternary encoding, 
$\EE_V[\norm{\beta_Q}^2] = \frac{3^n -1}{2^n} \sim (3/2)^n$
        \begin{equation}
            \sum_{\omega \in \Omega\backslash\{0\}}R(\omega)^2 = \sum_{k=0}^{n-1} (2^{k})^2 \binom{n}{k} = (4 + 1)^n - 4^n = N^2((5/4)^n - 1)
        \end{equation}
        So $\VV_V[\norm{\beta_Q}^2] = O((\frac{3}{4})^{2n}(\frac{5}{4})^{n} + (\frac{3}{4})^{2n}) = O(1)$
    \end{itemize}
\end{proof}

\begin{restatable}[]{thm}{WeightVectoretwodesign}\label{thm:WeightVectoretwodesign}
Let $f_Q(x) = \text{Tr}(O\: VS(x)(|+\rangle\langle +|^n)S(x)^\dagger V^\dagger)$ with $\text{Tr}(O)=0$.
    Let $V$ be drawn from a 2 design. We have that
    \begin{equation}
        \EE_V[\norm{\beta_Q}^2] = \frac{p}{N+1}
    \end{equation}
    Furthermore, if $V$ is drawn from a 4 design, we have that
    \begin{equation}
        \VV_V[\norm{\beta_Q}^2] = \Theta(\frac{p^2}{N^6}\sum_{\omega \in \Omega\backslash\{0\}}R(\omega)^2 + \frac{p^2}{N^4})
    \end{equation}
\end{restatable}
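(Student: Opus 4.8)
The plan is to convert the statement about $\norm{\beta_Q}^2$ into a statement about the Fourier coefficients of $f_Q$, and then reduce both claims to Haar-moment computations that a $2$-design (resp.\ $4$-design) reproduces exactly. Writing $A := V^\dagger O V$ and using that $S(x)$ is diagonal with eigenphases $\lambda_1,\dots,\lambda_N$ and that $\ket{+}^{\otimes n}$ has uniform amplitudes $1/\sqrt N$, I would first expand
\[
f_Q(x) = \frac{1}{N}\sum_{j,k} e^{i(\lambda_j-\lambda_k)x}A_{jk} = \sum_{\omega\in\Omega} c_\omega e^{i\omega x},\qquad c_\omega := \frac{1}{N}\sum_{(j,k):\lambda_j-\lambda_k=\omega}A_{jk},
\]
so that the coefficient of frequency $\omega$ aggregates the $R(\omega)$ entries of $A$ lying on the corresponding diagonal. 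By Parseval together with the $1/\sqrt p$ normalization of the feature map (which makes $\norm{\phi(x)}_2=1$), the weight vector obeys $\norm{\beta_Q}^2 = p\sum_{\omega\in\Omega}|c_\omega|^2$; here I take $O$ normalized so that $\text{Tr}(O^2)=N$, as in the surrounding results, with $\text{Tr}(O)=0$ by hypothesis.

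For the expectation, note that $\norm{\beta_Q}^2$ is a balanced polynomial of degree $2$ in $V$ and degree $2$ in $\bar V$, so a $2$-design reproduces its Haar average. I would insert the twirl identity $\EE_V[A_{pq}A_{rs}] = \frac{\text{Tr}(O^2)}{N^2-1}\big(\delta_{ps}\delta_{qr} - \tfrac1N \delta_{pq}\delta_{rs}\big)$ (valid since $\text{Tr}(O)=0$) into $\EE_V[\sum_\omega|c_\omega|^2]$. The first delta-term contributes $\sum_\omega R(\omega)=N^2$, while the second survives only on the diagonal $j=k$ (forcing $\omega=0$), contributing $N^2\cdot\tfrac1N$. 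Collecting factors yields $\EE_V[\sum_\omega|c_\omega|^2]=\frac{\text{Tr}(O^2)}{N(N+1)}$, hence $\EE_V[\norm{\beta_Q}^2]=\frac{p}{N+1}$.

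For the variance, $\norm{\beta_Q}^4 = p^2\sum_{\omega,\omega'}|c_\omega|^2|c_{\omega'}|^2$ is balanced of degree $4$, so a $4$-design reproduces its Haar average. I would expand the fourth moment $\EE_V[A_{a_1b_1}A_{a_2b_2}A_{a_3b_3}A_{a_4b_4}]$ with the degree-$4$ Weingarten formula, in which the sum over the ``$O$-side'' permutation $\sigma\in S_4$ produces a trace factor $\prod_{\text{cycles }c\text{ of }\sigma}\text{Tr}(O^{|c|})$. Because $\text{Tr}(O)=0$, every $\sigma$ with a fixed point dies, leaving only cycle types $2+2$ (factor $\text{Tr}(O^2)^2=N^2$) and $4$ (factor $\text{Tr}(O^4)=N$). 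For each surviving pair $(\sigma,\tau)$ the index deltas $\delta_{a_m b_{\tau(m)}}$, combined with the four frequency constraints, fix the number of admissible index tuples as a function of $\omega,\omega'$ and of the spectrum multiplicities.

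The closing step is to sum these contributions over $\omega,\omega'$ and subtract $(\EE_V\norm{\beta_Q}^2)^2=p^2/(N+1)^2$. I expect the disconnected part of the Weingarten sum to reproduce $(\EE_V\norm{\beta_Q}^2)^2$ to leading order, so that the $\Theta(p^2/N^2)$ pieces cancel; the surviving connected contributions split into those whose index count forces $\omega'=\omega$ and leaves two independent diagonal sums (yielding $\sum_{\omega\in\Omega\backslash\{0\}}R(\omega)^2$ with weight $p^2/N^6$) and those that remain summable over all pairs with weight $p^2/N^4$, giving $\Theta\big(\frac{p^2}{N^6}\sum_{\omega\in\Omega\backslash\{0\}}R(\omega)^2+\frac{p^2}{N^4}\big)$. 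The decisive difficulty is exactly this bookkeeping of the degree-$4$ expansion: identifying, among the $9$ surviving $\sigma$ and all $\tau\in S_4$, which contraction patterns are compatible with the frequency constraints, counting the admissible index tuples where the multiplicities $R(\omega)$ enter, and verifying that the disconnected $\Theta(p^2/N^2)$ piece cancels so that only the two subleading terms remain.
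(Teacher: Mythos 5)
Your approach is the same as the paper's: decompose $f_Q$ into Fourier coefficients $c_\omega$ that aggregate the $R(\omega)$ entries of $A = V^\dagger O V$ along the corresponding eigenvalue-difference diagonal, convert $\norm{\beta_Q}^2$ into $p\sum_\omega \abs{c_\omega}^2$ by Parseval, and evaluate moments by Weingarten calculus (order $2$ for the mean, order $4$ for the variance). Your expectation computation is complete and correct: the twirl identity you quote is exactly the one the paper derives, the hypothesis $\text{Tr}(O)=0$ kills the $\delta_{pq}\delta_{rs}$-type contractions except on the diagonal (equivalently, in the paper's formulation, it makes the diagonal of $A$ contribute nothing to $c_0$, so one may restrict to off-diagonal index pairs), and both routes give $\frac{p}{N+1}$.

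The gap is in the variance half, where what you give is a correct plan rather than a proof. You identify the right structural facts: only the nine derangements of $S_4$ survive on the $O$-side because a fixed point produces a factor $\text{Tr}(O)=0$; cycle type $2+2$ gives $\text{Tr}(O^2)^2=N^2$ while a $4$-cycle gives $\text{Tr}(O^4)=N$; the disconnected piece should cancel $\big(\EE_V[\norm{\beta_Q}^2]\big)^2$ up to $\Theta(p^2/N^4)$; and the $\omega'=\omega$ diagonal produces $\sum_{\omega}R(\omega)^2$ with weight $p^2/N^6$. But the decisive step --- which you yourself flag --- is never executed: for each surviving index-side permutation one must count, under the constraints that all four index pairs are off-diagonal and carry prescribed frequencies, how many index tuples are admissible, and pair those counts against the Weingarten weights ($W_4(N)=\Theta(N^{-4})$ at the identity, $\mathcal{O}(N^{-5})$ otherwise). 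That enumeration is the bulk of the paper's proof: it fills three tables (the admissible $\tau$'s with their $A(\tau)$ values, the admissible $\sigma$'s with their tuple counts $q^2$ versus $\mathcal{O}(q)$ versus $0$, and the $\sigma\times\tau$ cross table, which the paper completes partly by computer), together with the check that only three of the seven index-side permutations can combine with some $\tau$ to hit the identity Weingarten weight. Without this counting the claimed $\Theta(\cdot)$ is not established --- in particular the lower bound, which requires verifying that the exact leading coefficients (e.g. $N^2(N-1)^2 W_4(N)$ against $1/(N+1)^2$) do not cancel beyond the stated order. So the strategy is identical to the paper's and sound, but the variance statement remains unproven as written.
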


\begin{proof}

We first recall that: 

\begin{equation}
    \beta(0) = \frac{\sqrt{p}}{N}
    \sum^{R(0)}_{j=1}(V^{\dagger}OV)_{m_j,n_j} 
\end{equation}

Therefore the norm of $\beta$ can be computed as

\begin{align*}
    &\norm{\beta}^2 = |\beta(0)|^2 + \sum_{\omega \in \Omega} |\beta_{\cos}(\omega)|^2 + |\beta_{\sin}(\omega)|^2 \\
    &= \frac{p}{N^2} \bigg|\sum^{R(0)}_{j=1}(V^{\dagger}OV)_{m_j,n_j}\bigg|^2+ \frac{p}{N^2}\sum_{\omega\in \Omega}
    \bigg|\sum_{i=1}^{R(\omega)} [(V^{\dagger}OV)_{m^{(\omega)}_i,n^{(\omega)}_i} +i(V^{\dagger}OV)_{m^{(\omega)}_i,n^{(\omega)}_i}^*]\bigg|^2+\\
    &\bigg|\sum_{i=1}^{R(\omega)} [(V^{\dagger}OV)_{m^{(\omega)}_i,n^{(\omega)}_i} -i(V^{\dagger}OV)_{m^{(\omega)}_i,n^{(\omega)}_i}^*]\bigg|^2\\
    &= \frac{p}{N^2} \bigg|\sum^{R(0)}_{j=1}(V^{\dagger}OV)_{m_j,n_j}\bigg|^2+ 2\frac{p+1}{N^2}\sum_{\omega\in \Omega}
    \bigg|\sum_{i=1}^{R(\omega)} (V^{\dagger}OV)_{m^{(\omega)}_i,n^{(\omega)}_i}\bigg|^2\\
\end{align*}

We evaluate the moments of $\norm{\beta}^2$ with  the Weingarten calculus \cite{collins2006integration, cerezo2021cost}. \autoref{thm:2momentsHaar} and \autoref{thm:4momentsHaar} summarize intermediary results that will be used later in the proof. For better clarity of the reader, we first prove \autoref{thm:2momentsHaar} and \autoref{thm:4momentsHaar} for diagonal observables in \cref{subsec:diagonalObservable}, and we prove it for a general Pauli String in \cref{subsec:generalObservable}.

From \autoref{thm:2momentsHaar}
\begin{align}
    \EE[\norm{\beta}^2] &= \frac{p}{N^2}\EE[|v_0|^2] + 2\frac{p}{N^2}\sum_{\omega \in \Omega^*} \EE[|v_\omega|^2] = \frac{p}{N^2}\frac{N}{N^2-1} [R(0) + 2\sum_{\omega \in \Omega^*}R(\omega)]\\
    &=\frac{p}{N}\frac{N(N-1)}{N^2-1} = \frac{p}{N+1}
\end{align}

From \autoref{thm:4momentsHaar}

\begin{align}
    \EE[\norm{\beta}^4] &= \Big(\frac{p}{N^2} (|v_0|^2 + 2\sum_{\omega \in \Omega^*} |v_\omega|^2)\Big)^2\\
    &= \frac{p^2}{N^4}\Big(\EE[|v_0|^4| + 2\sum_{\omega \in \Omega^*}\EE[|v_\omega|^2|v_0|^2] +  4\sum_{\omega \in \Omega^*}\EE[|v_\omega|^4] + 4\sum_{\omega \neq \omega' \in \Omega^*}\EE[|v_\omega|^2|v_{\omega'}|^2]\Big)\\
    &= \frac{p^2}{N^4}\Big(\big[3R(0)^2 + 4\sum_{\omega \in \Omega^*}2R(\omega)^2\big]N^2W_4(N) + \mathcal{O}\big[R(0) + 2\sum_{\omega \in \Omega^*}R(\omega)\big]N^2W_2(N) \\
    &+ \big[2R(0) \sum_{\omega \in \Omega^*}R(\omega) + 4\sum_{\omega \neq \omega' \in \Omega^*}R(\omega)R(\omega')\big]N^2W_4(N) \\
    &+ \mathcal{O}\big[2R(0) \sum_{\omega \in \Omega^*}R(\omega) + 4\sum_{\omega \neq \omega' \in \Omega^*}R(\omega)R(\omega')\big]N^2W_2(N)\big)\\
    &= \frac{p^2}{N^2}\Big(\big[3R(0)^2 + 8\sum_{\omega \in \Omega^*}R(\omega)^2 + 2R(0) \sum_{\omega \in \Omega^*}R(\omega) + 4\sum_{\omega \neq \omega' \in \Omega^*}R(\omega)R(\omega')\big]W_4(N)\\
    &+ \mathcal{O}\big[R(0) + 2\sum_{\omega \in \Omega^*}R(\omega)+2R(0) \sum_{\omega \in \Omega^*}R(\omega) + 4\sum_{\omega \neq \omega' \in \Omega^*}R(\omega)R(\omega')\big]W_2(N)\big)\\
    &= \frac{p^2}{N^2}\Big(\big[(R(0) + 2\sum_{\omega \in \Omega^*}R(\omega))^2 + 2R(0)^2 + 4\sum_{\omega \in \Omega^*}R(\omega)^2 - 2R(0) \sum_{\omega \in \Omega^*}R(\omega)]W_4(N)\\
    &+ \mathcal{O}\big[R(0) + 2\sum_{\omega \in \Omega^*}R(\omega)(1+R(0)+ 2\sum_{\omega  \in \Omega^*}R(\omega))\big]W_2(N)\big)\\
    &= \frac{p^2}{N^2}\Big(\big[N^2(N-1)^2 + 2R(0)^2 + 4\sum_{\omega \in \Omega^*}R(\omega)^2 - 2R(0) \sum_{\omega \in \Omega^*}R(\omega)]W_4(N)\\
    &+ \mathcal{O}\big[R(0) + N^2\big]W_2(N)\big)
\end{align}

\begin{align}
    \EE[\norm{\beta}^4] - \EE[\norm{\beta}^2]^2=&\frac{p^2}{N^2}\Big(\big[N^2(N-1)^2 + 2R(0)^2 + 4\sum_{\omega \in \Omega^*}R(\omega)^2 - 2R(0) \sum_{\omega \in \Omega^*}R(\omega)]W_4(N)\\
    &+ \mathcal{O}\big[R(0) + N^2\big]W_2(N)\big) - \frac{p^2}{(N+1)^2}
\end{align}
\begin{align}
    \frac{p^2}{N^2}N^2(N-1)^2W_4(N) - \frac{p^2}{(N+1)^2} &= \frac{p^2(N-1)^2(N^4 - 8N^2+6)}{N^8 - 14N^6 +49N^4 - 36N^2} - \frac{p^2}{(N+1)^2} \\
    &= \frac{p^2(N-1)^2(N^4 - 8N^2+6)}{N^2(N^2-1)(N^4-13N^2+36))} - \frac{p^2}{(N+1)^2}\\
    &= \frac{p^2(N-1)(N^4 - 8N^2+6)}{N^2(N+1)(N^4-13N^2+36)} - \frac{p^2}{(N+1)^2}\\
    &= \frac{p^2(N-1)(N+1)(N^4 - 8N^2+6) - p^2N^2(N^4-13N^2+36))}{N^2(N+1)^2(N^4-13N^2+36))}\\
    &= \frac{4p^2N^4 + \mathcal{O}(p^2N^2)}{(N+1)^2N^2(N^4-13N^2+36))}\\
    &= \Theta(\frac{p^2}{N^4})\\
    W_4(N)&= \Theta(\frac{1}{N^4})\\
    N^2W_2(N)&= \mathcal{O}(\frac{1}{N^3})
\end{align}

\begin{align}
    \EE[\norm{\beta}^4] - \EE[\norm{\beta}^2]^2=\Theta(\frac{p^2}{N^6}\sum_{\omega \in \Omega}R(\omega)^2 ) + \Theta(\frac{p^2}{N^4})
\end{align}

\end{proof}

\subsection{Weingarten Integration for Diagonal obsrevable}
\label{subsec:diagonalObservable}

We prove the following result:
\begin{restatable}[]{thm}{}\label{thm:2momentsHaar}
    Let $U$ be a Haar random unitary and $O$ be a Pauli string composed only of $I$ and $Z$, different from identity. 
    Let
    \begin{equation}
    u = \sum_{i=1}^q (UOU^\dagger)_{m_in_i}, \quad \forall i, m_i \neq n_i \quad \text{and} \quad \forall i\neq j, \: (m_i, n_i) \neq (m_j, n_j)
    \end{equation} Then we have
    \begin{equation}
        \EE_{U\sim \text{Haar}}(|u|^2) = \frac{qN}{N^2 - 1}
    \end{equation}
\end{restatable}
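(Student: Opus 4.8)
The plan is to reduce the computation of $\EE_U[|u|^2]$ to a single application of the second-moment (i.e. $t=2$) Weingarten integration formula over the unitary group. First I would use that $O$, being a nontrivial Pauli string built from $I$ and $Z$, is diagonal with entries $O_{kk}=\pm1$ satisfying $\text{Tr}(O)=0$ and $O^2=I$ (hence $\text{Tr}(O^2)=N$). Writing each matrix element explicitly as $(UOU^\dagger)_{mn}=\sum_k U_{mk}O_{kk}\bar U_{nk}$, the quantity $|u|^2=u\bar u$ expands into the quadruple sum
\begin{equation}
|u|^2=\sum_{i,j=1}^q\sum_{k,l}O_{kk}O_{ll}\,U_{m_ik}\bar U_{n_ik}\bar U_{m_jl}U_{n_jl}\,,
\end{equation}
so that taking the Haar expectation amounts to evaluating $\EE_U[U_{m_ik}U_{n_jl}\bar U_{n_ik}\bar U_{m_jl}]$ for each index choice.

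Next I would invoke the Weingarten formula for two copies, which expresses this fourth-order moment as a sum over pairs of permutations $\sigma,\tau\in S_2$ of products of Kronecker deltas (matching the row indices of the $U$'s to those of the $\bar U$'s via $\sigma$, and the column indices via $\tau$) weighted by $\text{Wg}(\tau\sigma^{-1},N)$, with $\text{Wg}(\mathrm{id},N)=1/(N^2-1)$ and $\text{Wg}((12),N)=-1/(N(N^2-1))$. The key observation is that the row-matching deltas for $\sigma=\mathrm{id}$ produce the factor $\delta_{m_in_i}\delta_{n_jm_j}$, which vanishes by the hypothesis $m_i\neq n_i$; hence only $\sigma=(12)$ survives, contributing $\delta_{m_im_j}\delta_{n_in_j}$. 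Summing over $\tau\in\{\mathrm{id},(12)\}$ then yields, for each $(i,j,k,l)$, the bracket $\delta_{m_im_j}\delta_{n_in_j}\big[-\tfrac{1}{N(N^2-1)}+\tfrac{1}{N^2-1}\delta_{kl}\big]$.

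Finally I would exploit the two hypotheses on the index pairs to collapse the sums. Because the pairs $(m_i,n_i)$ are distinct, $\delta_{m_im_j}\delta_{n_in_j}$ is nonzero only on the diagonal $i=j$, so $\sum_{i,j}\delta_{m_im_j}\delta_{n_in_j}=q$. The remaining double sum over $k,l$ then factorizes: the constant term gives $\big(\sum_k O_{kk}\big)^2=\text{Tr}(O)^2=0$, killing the $-1/(N(N^2-1))$ contribution, while the $\delta_{kl}$ term gives $\sum_k O_{kk}^2=\text{Tr}(O^2)=N$. Combining these yields $\EE_U[|u|^2]=q\,N/(N^2-1)$, as claimed. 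I expect the main obstacle to be purely bookkeeping: correctly assigning which of the four factors are the $U$'s and which the $\bar U$'s, and tracking the argument $\tau\sigma^{-1}$ of the Weingarten function so that the two surviving $\tau$-terms receive the right signs. The hypotheses $m_i\neq n_i$ and distinctness of the pairs are precisely what eliminate every term except the desired one.
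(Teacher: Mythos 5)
Your proposal is correct and follows essentially the same route as the paper's proof: expand $|u|^2$ into matrix elements of $U$, apply the order-two Weingarten formula, use the off-diagonality hypothesis $m_i\neq n_i$ together with $\mathrm{Tr}(O)=0$ to eliminate all but the single pairing $\delta_{m_im_j}\delta_{n_in_j}\delta_{k\ell}$, and then use $\mathrm{Tr}(O^2)=N$ and the distinctness of the index pairs to collapse the sums to $qN/(N^2-1)$. The only cosmetic difference is organizational — you eliminate the $\sigma=\mathrm{id}$ pairing first and then sum over $\tau$, whereas the paper writes out all four $(\sigma,\tau)$ terms at once before discarding them — but the computation is identical.
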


\begin{proof}
    We have 
\begin{align}
    |u|^2 &= \Sum_{i, j}^q (UOU^\dagger)_{m_in_i} (UOU^\dagger)_{m_jn_j}^*
    =\Sum_{i, j=1}^q \big(\Sum_{k=1}^N u_{m_ik} O_k u_{n_ik}^*\big)\:\big(\Sum_{k=1}^N u_{m_jk} O_k u_{n_jk}^*\big)^* \\
    &= \Sum_{i, j=1}^q \Sum_{k, \ell =1}^N u_{m_ik}u_{n_j\ell}u_{n_ik}^*u_{m_j\ell}^* \; O_k O_\ell\\
\end{align}
With the Weingarten calculus we have \cite{collins2006integration, cerezo2021cost}: 
\begin{align}
\EE(u_{m_ik}u_{n_j\ell}u_{n_ik}^*u_{m_j\ell}^*) &= \frac{1}{N^2 - 1} [\delta_{m_in_i}\delta_{m_jn_j}\delta_{kk}\delta_{\ell\ell} + \delta_{m_im_j}\delta_{n_in_j}\delta_{k\ell}^2] \\
    &- \frac{1}{N(N^2-1)} [\delta_{m_in_i}\delta_{m_jn_j}\delta_{kl}^2 + \delta_{m_im_j}\delta_{n_in_j}\delta_{kk}\delta_{\ell\ell}]
\end{align}

 $\Sum_{k, \ell =1}^N O_kO_l = 0$ therefore we can eliminate all the terms in $\delta_{kk}\delta_{\ell\ell}$. Furthermore $\forall i \;\delta_{m_in_i} = 0$ and we have
 
\begin{align}
    \EE(|u|^2) &= \frac{1}{N^2 - 1}\Sum_{i, j=1}^q \Sum_{k =1}^N \delta_{m_im_j}\delta_{n_in_j}O_k^2 = \frac{N}{N^2 - 1}\Sum_{i, j=1}^q\delta_{m_im_j}\delta_{n_in_j}
\end{align}
We also made the hypothesis that $\forall i\neq j, \: (m_i, n_i) \neq (m_j, n_j)$ therefore $\delta_{m_im_j}\delta_{n_in_j} = 1$ iif $m_i = m_j$ and $n_i = n_j$ iif $i=j$.
Finally we have 
\begin{align}
    \EE(|u|^2) & = \frac{N}{N^2 - 1}\Sum_{i=1}^q\delta_{m_im_i}\delta_{n_in_i} = \frac{qN}{N^2 - 1} 
\end{align}
\end{proof}

We will prove the following result
\begin{restatable}[]{thm}{}
\label{thm:4momentsHaar}
    Let $u = \sum_{i=1}^q (UOU^\dagger)_{m_in_i}, u' = \sum_{i=1}^{q'} (UOU^\dagger)_{m'_in'_i}$. If $U$ is sampled from a $4$-design then
    \begin{equation}
        \EE_{U\sim \text{Haar}}[|u|^4] = 2q^2W_4(N)N^2 + \mathcal{O}(\displaystyle\frac{q}{N^3})
    \end{equation}
    If $\forall i_1$, there exists $i_2$ such that $(n_{i2}, m_{i2}) = (m_{i1}, n_{i1})$, then
    \begin{equation}
        \EE_{U\sim \text{Haar}}[|u|^4] = 3q^2W_4(N)N^2 + \mathcal{O}(\displaystyle\frac{q}{N^3})
    \end{equation}
    \begin{equation}
        \EE_{U\sim \text{Haar}}[|u|^2|u'|^2] = qq'W_4(N)N^2 + \mathcal{O}(\displaystyle\frac{qq'}{N^3})
    \end{equation}
    where we note
    \begin{equation}
    W_4(N) = \frac{N^4 - 8N^2+6}{N^8 - 14N^6 +49N^4 - 36N^2}, \quad W_2(N) = -\frac{1}{N^5 - 14N^3 + 9N}
\end{equation}
\end{restatable}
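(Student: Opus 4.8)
The plan is to evaluate these moments by a direct application of the fourth-order Weingarten formula, mirroring the second-moment computation of \autoref{thm:2momentsHaar}. Since $|u|^4$ (respectively $|u|^2|u'|^2$) is a polynomial of degree four in the entries of $U$ and of degree four in those of $U^\dagger$, a unitary $4$-design reproduces the corresponding Haar average exactly, so it suffices to compute the latter. Writing the diagonal $\pm1$ observable in the computational basis as $O_{aa}=O_a\in\{\pm1\}$, each factor becomes $(UOU^\dagger)_{mn}=\sum_a U_{ma}O_a U^*_{na}$; expanding $|u|^4=|u|^2|u|^2$ then gives a sum over four outer labels $i_1,\dots,i_4$ (indexing the pairs $m_i,n_i$) and four inner indices $a_1,\dots,a_4$ of a monomial in four entries of $U$ and four of $U^\dagger$, weighted by $O_{a_1}O_{a_2}O_{a_3}O_{a_4}$. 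The mixed moment is identical except that two of the outer labels range over the primed set.

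First I would apply
\begin{equation}
\EE\Big[\prod_{s=1}^4 U_{p_s q_s}\,U^*_{p'_s q'_s}\Big]=\sum_{\sigma,\tau\in S_4}\mathrm{Wg}(\sigma\tau^{-1},N)\prod_{s=1}^4\delta_{p_s p'_{\sigma(s)}}\,\delta_{q_s q'_{\tau(s)}}\,,
\end{equation}
with the row indices $p_s$ drawn from the $m_i,n_i$ and the column indices $q_s$ from the $a$'s. The $\tau$-contractions tie the inner indices together along the cycles of $\tau$, so the inner sum factorizes over cycles, a cycle of length $\ell$ contributing $\sum_a O_a^\ell$. Because $O$ is traceless and $O_a=\pm1$, this factor equals $N$ when $\ell$ is even and $0$ when $\ell$ is odd; hence only $\tau$ with all-even cycle type survive, namely the three permutations of type $(2,2)$ (each giving $N^2$) and the six $4$-cycles (each giving $N$).

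Next I would impose the row constraints carried by $\sigma$ together with the hypotheses $m_i\neq n_i$ and $(m_i,n_i)\neq(m_j,n_j)$ for $i\neq j$. These force the outer labels to coincide in pairs, and the dominant configurations are those where the four labels collapse to two coincident pairs, of which there are $\Theta(q^2)$ (respectively $\Theta(qq')$). For each surviving $\tau$ the leading Weingarten weight arises when $\sigma\tau^{-1}=\mathrm{id}$, contributing $W_4(N)\sim N^{-4}$; when $\sigma\tau^{-1}$ is a transposition the weight is $W_2(N)\sim N^{-5}$, and after multiplication by the inner $N^2$ and the number of index configurations these are absorbed into the stated $\mathcal{O}(q/N^3)$ (respectively $\mathcal{O}(qq'/N^3)$) remainder. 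Collecting the leading pieces yields $2q^2 W_4(N)N^2$ generically. This matches the intuition that, to leading order, $u$ behaves like a centered complex Gaussian with $\EE[|u|^4]\approx 2(\EE[|u|^2])^2$; the extra hypothesis that every pair $(m_{i_1},n_{i_1})$ is accompanied by its reverse $(n_{i_2},m_{i_2})$ makes the reflection $m\leftrightarrow n$ an admissible pairing, which saturates the pseudo-variance of $u$ (it becomes effectively real) and upgrades the coefficient from $2$ to $3$, while the mixed moment admits only the single decoupled pairing and so keeps coefficient $1$.

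The main obstacle is the bookkeeping of the $S_4\times S_4$ sum: although tracelessness immediately discards all but nine values of $\tau$, one must still enumerate, for each, the admissible $\sigma$ and determine which row-index coincidences are forced or forbidden by the off-diagonal and distinctness constraints, then attach the correct Weingarten weight so as to separate the $N^{-4}$ leading term from the $N^{-5}$ corrections. Pinning down the combinatorial multiplicities exactly --- in particular distinguishing the generic coefficient $2$ from the reflection-enhanced coefficient $3$ --- is the delicate step, best organized by grouping surviving terms according to the cycle type of $\sigma\tau^{-1}$ and tracking the net power of $N$ each contributes.
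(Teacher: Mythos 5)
Your proposal follows essentially the same route as the paper's proof: expand $|u|^4$ into eighth-order monomials, apply the fourth-order Weingarten formula, use tracelessness of the diagonal $\pm1$ observable to discard every $\tau$ containing an odd cycle (leaving the three $(2,2)$-type permutations contributing $N^2$ and the six $4$-cycles contributing $N$), constrain $\sigma$ via the off-diagonality and distinctness hypotheses, and extract the leading $q^2 N^2 W_4(N)$ contributions from configurations with $\tau\sigma^{-1}=\mathrm{id}$ --- exactly the structure the paper encodes in its tables of admissible $\tau$ and $\sigma$ permutations. The bookkeeping you defer as the ``delicate step'' (enumerating the admissible $\sigma$'s, their term counts, and the dominant Weingarten weights, which is what separates the coefficients $2$, $3$, and $1$) is precisely what the paper resolves in those tables, partly with computer assistance, so your plan reproduces the paper's argument in outline, with your cycle-type factorization of the inner sum being a slightly cleaner derivation of the paper's Table of $\tau$ values.
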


\begin{proof}

We have
\begin{align}
    |u|^4 &=  \sum_{i1, i2, i3, i4} (UOU^{\dagger})_{m_{i1},n_{i1}} (UOU^{\dagger})^{\ast}_{m_{i2},n_{i2}}
        (UOU^{\dagger})_{m_{i3},n_{i3}} (UOU^{\dagger})^{\ast}_{m_{i4},n_{i4}}\\
        & = \sum_{i1, i2, i3, i4} \sum_{k_1, k_2, k_3, k_4} u_{m_{i1},k_1} O_{k_1} u^{\ast}_{n_{i1},k_1} 
        u_{m_{i2},k_2}^{\ast} O_{k_2} u_{n_{i2},k_2}
        u_{m_{i3},k_3} O_{k_3} u^{\ast}_{n_{i3},k_3}
        u_{m_{i4},s}^{\ast} O_{k_4} u_{n_{i4},k_4}\\
        & = \sum_{i1, i2, i3, i4} \sum_{k_1, k_2, k_3, k_4} u_{m_{i1},k_1}u_{n_{i2},k_2}u_{m_{i3},k_3}u_{n_{i4},s}
u^{\ast}_{n_{i1},k_1}u^{\ast}_{m_{i2},k_2}u^{\ast}_{n_{i3},k_3}u^{\ast}_{m_{i4},k_4}
O_{k_1} O_{k_2}O_{k_3}O_{k_4}
\end{align}

By using the Weingarten calculus \cite{collins2006integration}, we have that 
\begin{align}
    &\EE(u_{m_{i1},k_1}u_{n_{i2},k_2}u_{m_{i3},k_3}u_{n_{i4},s}
u^{\ast}_{n_{i1},k_1}u^{\ast}_{m_{i2},k_2}u^{\ast}_{n_{i3},k_3}u^{\ast}_{m_{i4},k_4}) \\
&= \sum_{\sigma, \tau \in S_4} \delta_{m_{i1}, \sigma(m_{i1})} \cdot \delta_{m_{i3}, \sigma(m_{i3})} 
         \delta_{n_{i2}, \sigma(n_{i2})} \delta_{n_{i4}, \sigma(n_{i4})} \delta_{k_1, k_{\tau(1)}} \delta_{k_2, k_{\tau(2)}} \delta_{k_3, k_{\tau(3)}} \delta_{k_4, k_{\tau(4)}} W_g(\tau \sigma^{-1},N)
\end{align}
where $W_g$ is the Weingarten function and $S_4$ is the set of permutations on $\{1, 2, 3, 4\}$.

\begin{align}
    \EE[|u|^4] &= \sum_{i1, i2, i3, i4} \sum_{k_1, k_2, k_3, k_4} \sum_{\sigma, \tau \in S_4} O_{k_1} O_{k_2}O_{k_3}O_{k_4}\\
    &\delta_{m_{i1}, \sigma(m_{i1})}  \delta_{m_{i3}, \sigma(m_{i3})} 
         \delta_{n_{i2}, \sigma(n_{i2})} \delta_{n_{i4}, \sigma(n_{i4})} \delta_{k_1, k_{\tau(1)}} \delta_{k_2, k_{\tau(2)}} \delta_{k_3, k_{\tau(3)}} \delta_{k_4, k_{\tau(4)}} W_g(\tau \sigma^{-1},N)\\
    &=  \sum_{i1, i2, i3, i4} \sum_{\sigma \in S_4} \delta_{m_{i1}, \sigma(m_{i1})}  \delta_{m_{i3}, \sigma(m_{i3})} 
         \delta_{n_{i2}, \sigma(n_{i2})} \delta_{n_{i4}, \sigma(n_{i4})}\\
    &\sum_{\tau \in S_4} W_g(\tau \sigma^{-1},N)
    \sum_{k_1, k_2, k_3, k_4} \delta_{k_1, k_{\tau(1)}} \delta_{k_2, k_{\tau(2)}} \delta_{k_3, k_{\tau(3)}} \delta_{k_4, k_{\tau(4)}} O_{k_1} O_{k_2}O_{k_3}O_{k_4}\\
    &=\sum_{i1, i2, i3, i4} \sum_{\sigma, \tau \in S_4} \delta_{m_{i1}, \sigma(m_{i1})}  \delta_{m_{i3}, \sigma(m_{i3})} 
         \delta_{n_{i2}, \sigma(n_{i2})} \delta_{n_{i4}, \sigma(n_{i4})} A(\tau) W_g(\tau\sigma^{-1}, N)\\
\end{align}
where  we note $A(\tau) =  \Sum_{k_1, k_2, k_3, k_4} \delta_{k_1, k_{\tau(1)}} \delta_{k_2, k_{\tau(2)}} \delta_{k_3, k_{\tau(3)}} \delta_{k_4, k_{\tau(4)}} O_{k_1} O_{k_2}O_{k_3}O_{k_4}$.

Let us look at the term $A(\tau)$ for a given permutation $\tau$. We can verify that only for the permutations included in \autoref{tab:tau_perm} the sum will be non zero. For instance, 
 $A(\text{id}) = \sum_{k_1, k_2, k_3, k_4}  O_{k_1}O_{k_2}O_{k_3}O_{k_4} = \text{Tr}(O)^4 = 0$, and $A(\tau_1) = \sum_{k_1, k_2, k_3, k_4} \delta_{k_1, k_2} \delta_{k_2, k_1} \delta_{k_3, k_4} \delta_{k_4, k_3}  O_{k_1}O_{k_2}O_{k_3}O_{k_4} = \sum_{k_1, k_3} O_{k_1}^2 O_{k_3}^2 = N^2$. We also have $A(\tau_1) = A(\tau_2) = A(\tau_3) = N^2$

 So we have 
 \begin{align}
    \EE[|u|^4] &= \sum_{\sigma \in S_4}\sum_{i1, i2, i3, i4}  \delta_{m_{i1}, \sigma(m_{i1})}  \delta_{m_{i3}, \sigma(m_{i3})} 
         \delta_{n_{i2}, \sigma(n_{i2})} \delta_{n_{i4}, \sigma(n_{i4})} \sum_{\tau \in \{\tau_1, \tau_2, \tau_3\}}W_g(\tau \sigma^{-1},N) N^2
\end{align}

Let us look at the term $\sum_{i1, i2, i3, i4}  \delta_{m_{i1}, \sigma(m_{i1})}  \delta_{m_{i3}, \sigma(m_{i3})} 
         \delta_{n_{i2}, \sigma(n_{i2})} \delta_{n_{i4}, \sigma(n_{i4})}$ for a given permutation $\sigma$.

Since we assumed all elements $(UOU^\dagger)_{m_in_i}$ are off-diagonal, we have that $\delta_{m_i, n_i} = 0$, and it restricts the permutations potentially giving non-zero terms. These permutations are enumerated in \autoref{tab:sigma_perm}.

For each permutation $\sigma$ we compute the dominant term in $\sum_\tau A(\tau)W_g(\tau\sigma^{-1}, N)$. These numbers are enumerated in \autoref{tab:sigma_perm_cross}. We compute the number of non zero terms in $\sum_{i1, i2, i3, i4}  \delta_{m_{i1}, \sigma(m_{i1})}  \delta_{m_{i3}, \sigma(m_{i3})} 
         \delta_{n_{i2}, \sigma(n_{i2})} \delta_{n_{i4}, \sigma(n_{i4})}$. These numbers are enumerated in \autoref{tab:sigma_perm}

From \cite{weingarten4}, if $\sigma \neq \text{id}$, $W_g(\sigma, N) = \mathcal{O}(\frac{1}{N^5})$ and $W_g(\text{id}, N) = \displaystyle\frac{N^4-8N^2+6}{N^8-14N^6+49N^4-36N^2} =\mathcal{O}(\frac{1}{N^4})$. We can verify that only for $\sigma_1, \sigma_4, \sigma_7$ there exist $\tau \in {\tau_1, \tau_2, \tau_3}$ such that $\tau\sigma^{-1} = \text{id}$, and $\tau$ is unique.

By combining the numbers in \autoref{tab:sigma_perm} and \autoref{tab:sigma_tau} we have the final result.

\begin{table}[h!]
    \centering
\begin{tabular}{|c|cccc|cccc|c|}
 \hline
 & $k_1$ & $\sigma_O(k_2)$ & $k_3$ & $\sigma_O(k_4)$ &1&2&3&4& $A(\tau)$\\
 \hline
 $\tau_1$& $k_2$ & $\sigma_O(k_1)$ & $k_4$ & $\sigma_O(k_3)$ &2&1&4&3& $N^2$\\
 $\tau_2$& $k_4$ & $\sigma_O(k_3)$ & $k_2$ & $\sigma_O(k_1)$ &4&3&2&1& $N^2$\\
 $\tau_3$& $\sigma_O(k_3)$ &  $k_4$ & $\sigma_O(k_1)$ & $k_2$ & 3&4&1&2& $N^2$\\
 $\tau_4$& $k_2$ & $k_4$ & $\sigma_O(k_1)$ & $\sigma_O(k_3)$ &2&4&1&3& $N$\\
 $\tau_5$& $k_2$ & $\sigma_O(k_3)$ & $k_4$ & $\sigma_O(k_1)$ &2&3&4&1& $N$\\
 $\tau_6$& $k_4$ & $\sigma_O(k_1)$ & $k_2$ & $\sigma_O(k_3)$ &4&1&2&3& $N$\\
 $\tau_7$& $k_4$ & $\sigma_O(k_3)$ & $\sigma_O(k_1)$ & $k_2$ &4&3&1&2& $N$\\
 $\tau_8$& $\sigma_O(k_3)$ & $k_4$ & $k_2$ & $\sigma_O(k_1)$ & 3&4&2&1& $N$\\
 $\tau_9$& $\sigma_O(k_3)$ & $k_1$ & $k_4$ & $k_2$ & 3&1&4&2& $N$\\
 \hline
\end{tabular}
    \caption{$\tau$ permutations that give non zero values for non diagonal O. For diagonal $O$, consider $\sigma_O = id$.}
    \label{tab:tau_perm}
\end{table}

\begin{table}[h!]
    \centering
\begin{tabular}{|c|cccc|c|}
 \hline
 &$m_{i_1}$ & $n_{i_2}$ & $m_{i_3}$ & $n_{i_4}$& Number of non-zero terms\\
 \hline
 $\sigma_1$ & $m_{i_2}$ & $n_{i_1}$ & $m_{i_4}$ & $n_{i_3}$& $q^2$\\
 $\sigma_2$ & $m_{i_2}$ & $n_{i_3}$ & $m_{i_4}$ & $n_{i_1}$&$q+\mathcal{O}(q)$\\
 $\sigma_3$ & $m_{i_2}$ & $m_{i_4}$ & $n_{i_1}$ & $n_{i_3}$& $\mathcal{O}(q)$\\
 $\sigma_4$ & $m_{i_4}$ & $n_{i_3}$ & $m_{i_2}$ & $n_{i_1}$& $q^2$\\
 $\sigma_5$ & $m_{i_4}$ & $n_{i_1}$ & $m_{i_2}$ & $n_{i_3}$&$q+\mathcal{O}(q)$\\
 $\sigma_6$ & $m_{i_4}$ & $n_{i_3}$ & $n_{i_1}$ & $m_{i_2}$&$\mathcal{O}(q)$\\
 $\sigma_7$ & $n_{i_3}$ & $m_{i_4}$ & $n_{i_1}$ & $m_{i_2}$&$0$ or $q^2$\\
 \hline
\end{tabular}
    \caption{$\sigma$ permutations to be considered}
    \label{tab:sigma_perm}
\end{table}

Let us now look at
\begin{align}
    u_1 &= \sum_{i=1}^{p_1} (UOU^\dagger)_{m_i^{(1)}n_i^{(1)}}, \quad \forall i, m_i^{(1)} \neq n_i^{(1)} \quad \text{and} \quad \forall i\neq j, \: (m_i^{(1)}, n_i^{(1)}) \neq (m_j^{(1)}, n_j^{(1)})\\
    u_2 &= \sum_{i=1}^{p_2} (UOU^\dagger)_{m_i^{(2)}n_i^{(2)}}, \quad \forall i, m_i^{(2)} \neq n_i^{(2)} \quad \text{and} \quad \forall i\neq j, \: (m_i^{(2)}, n_i^{(2)}) \neq (m_j^{(2)}, n_j^{(1)})
\end{align}

\begin{align}
    |u_1|^2|u_2|^2 &= \bigg(\Sum_{i, j=1}^{p_1} \Sum_{k, \ell =1}^N u_{m_i^{(1)}k}u_{n_j^{(1)}\ell}u_{n_i^{(1)}k}^*u_{m_j^{(1)}\ell}^* \; O_k O_\ell\bigg)\bigg(\Sum_{i, j=1}^{p_2} \Sum_{k, \ell =1}^N u_{m_i^{(2)}k}u_{n_j^{(2)}\ell}u_{n_i^{(2)}k}^*u_{m_j^{(2)}\ell}^* \; O_k O_\ell\bigg)\\
    & = \sum_{i1, i2, i3, i4} \sum_{k_1, k_2, k_3, k_4} u_{m_{i1}^{(1)},k_1}u_{n_{i2}^{(1)},k_2}u_{m_{i3}^{(2)},k_3}u_{n_{i4}^{(2)},s}
u^{\ast}_{n_{i1}^{(1)},k_1}u^{\ast}_{m_{i2}^{(1)},k_2}u^{\ast}_{n_{i3}^{(2)},k_3}u^{\ast}_{m_{i4}^{(2)},k_4}
O_{k_1} O_{k_2}O_{k_3}O_{k_4}\\
\end{align}

$\EE[|u|^4] = 2q^2(W_4(N)N^2 + 4NW_2(N)) + \Theta(q)(2W_2(N)N^2 + NW_4(N)) = 2q^2W_4(N)N^2 + \mathcal{O}(\displaystyle\frac{q}{N^3})$

We do the same reasoning as above, but this time the number of non zero term for the permutations $\sigma$ are given \autoref{tab:sigma_perm_cross}.

\end{proof}

\begin{table}[h!]
    \centering
\begin{tabular}{|c|cccc|c|}
 \hline
 &$m_{i_1}^{(1)}$ & $n_{i_2}^{(1)}$ & $m_{i_3}^{(2)}$ & $n_{i_4}^{(2)}$& Number of non-zero terms \\
 \hline
 $\sigma_1$ & $m_{i_2}^{(1)}$ & $n_{i_1}^{(1)}$ & $m_{i_4}^{(2)}$ & $n_{i_3}^{(2)}$& $q_1q_2$ \\
 $\sigma_2$ & $m_{i_2}^{(1)}$ & $n_{i_3}^{(2)}$ & $m_{i_4}^{(2)}$ & $n_{i_1}^{(1)}$& $\mathcal{O}(q_1q_2)$\\
 $\sigma_3$ & $m_{i_2}^{(1)}$ & $m_{i_4}^{(2)}$ & $n_{i_1}^{(1)}$ & $n_{i_3}^{(2)}$& $\mathcal{O}(q_1q_2)$\\
 $\sigma_4$ & $m_{i_4}^{(2)}$ & $n_{i_3}^{(2)}$ & $m_{i_2}^{(1)}$ & $n_{i_1}^{(1)}$& 0\\
 $\sigma_5$ & $m_{i_4}^{(2)}$ & $n_{i_1}^{(1)}$ & $m_{i_2}^{(1)}$ & $n_{i_3}^{(2)}$& $\mathcal{O}(q_1q_2)$ \\
 $\sigma_6$ & $m_{i_4}^{(2)}$ & $n_{i_3}^{(2)}$ & $n_{i_1}^{(1)}$ & $m_{i_2}^{(1)}$& $\mathcal{O}(q_1q_2)$ \\
 $\sigma_7$ & $n_{i_3}^{(2)}$ & $m_{i_4}^{(2)}$ & $n_{i_1}^{(1)}$ & $m_{i_2}^{(1)}$&0\\
 \hline
\end{tabular}
    \caption{$\sigma$ permutations to be considered in the cross terms}
    \label{tab:sigma_perm_cross}
\end{table}

\begin{table}[h!]
    \centering
\begin{tabular}{|c|c|c|c|c|c|}
 \hline
 $\sigma$ & $n(\sigma, W_4, N^2)$ & $n(\sigma, W_2, N^2)$ & $n(\sigma, W_4, N)$ & $n(\sigma, W_2, N)$&$\sum_\tau A(\tau)W_g(\tau\sigma^{-1}, N)$ \\
 \hline
 $\sigma_1$ & 1 & 0&0&4&$N^2W_4(N) + 4NW_2(N)$\\
 $\sigma_2$  & 0&2&1&0&$NW_4(N) + 2N^2W_2(N)$\\
 $\sigma_3$ &0&2&1&0&$NW_4(N) + 2N^2W_2(N)$\\
 $\sigma_4$ & 1&0&0&4&$N^2W_4(N) + 4NW_2(N)$\\
 $\sigma_5$ &0&2&1&0&$NW_4(N) + 2N^2W_2(N)$\\
 $\sigma_6$ &0&2&1&0&$NW_4(N) + 2N^2W_2(N)$\\
 $\sigma_7$ &1&0&0&4&$N^2W_4(N) + 4NW_2(N)$\\
 \hline
\end{tabular}
    \caption{$\sigma$ permutations to be considered in the cross terms. For each permutation $\sigma$, we compute the dominant factors in the term $\sum_\tau A(\tau)W_g(\tau\sigma^{-1}, N)$. To do so, for each $\sigma$, we compute the number of permutations $\tau$ such that $A(\tau)W_g(\tau\sigma^{-1}, N)$ is equal to $N^2W_4(N), NW_4(N), N^2W_2(N), NW_2(N)$. We use a computer to find the numbers in the first four columns.}
    \label{tab:sigma_tau}
\end{table}

\subsection{Weingarten Integration for general Pauli observable}
\label{subsec:generalObservable}

We do the same computation as above, but we no longer assume that $O \in \{I, Z\}^{\otimes n}$. We indeed assume that $O$ is a general Pauli observable. Then there exists a permutation $\sigma_O \neq id$ such that for each row $k$ of $O$, $O_{k, \sigma_O(k)} \neq 0$. Because $O$ is hermitian we have $\sigma_O^2 = id$ And we have 

\begin{equation}
    (VOV^\dagger)_{ij} = \sum_{k=1}^n v_{ik}O_{k, \sigma_O(k)}v^*_{j, \sigma_O(k)}
\end{equation}

and 
\begin{align}
    |(VOV^\dagger)_{ij}|^2 &= \bigg(\Sum_{k=1}^N v_{ik} O_{k, \sigma_O(k)} v_{j, \sigma_O(k)}^*\bigg)\:\bigg(\Sum_{k=1}^N v_{ik} O_{k, \sigma_O(k)} v_{j, \sigma_O(k)}^*\bigg)^*\\
    &= \Sum_{k, \ell =1}^N v_{ik} O_{k, \sigma_O(k)} v_{j, \sigma_O(k)}^* \quad v_{i\ell}^* O_{\ell, \sigma_O(\ell)}^* v_{j, \sigma_O(\ell)}\\
    &= \Sum_{k, \ell =1}^N v_{ik}v_{j, \sigma_O(\ell)}   v_{j, \sigma_O(k)}^*v_{i\ell}^* O_{k, \sigma_O(k)}O_{\ell, \sigma_O(\ell)}^*
\end{align}

More precisely, we want to evaluate $\EE[\abs{v}^2]$ where 
\begin{equation}
    v = \sum_{i=1}^q (VOV^\dagger)_{m_in_i}, \quad \forall i, m_i \neq n_i \quad \text{and} \quad \forall i\neq j, \: (m_i, n_i) \neq (m_j, n_j)
\end{equation}
We have
\begin{align}
    \abs{v}^2 &= \Sum_{i, j}^q (VOV^\dagger)_{m_in_i} (VOV^\dagger)_{m_jn_j}^*\\
    &=\Sum_{i, j=1}^q \big(\Sum_{k=1}^N v_{m_ik} O_{k, \sigma_O(k)} v_{n_i, \sigma_O(k)}^*\big)\:\big(\Sum_{k=1}^N v_{m_jk} O_{k, \sigma_O(k)} v_{n_j \sigma_O(k)}^*\big)^* \\
    &= \Sum_{i, j=1}^q \Sum_{k, \ell =1}^N v_{m_ik}v_{n_j\sigma_O(\ell)}v_{n_i\sigma_O(k)}^*v_{m_j\ell}^* \; O_{k, \sigma_O(k)} O_{\ell, \sigma_O(\ell)}^*\\
    \EE(v_{m_ik}v_{n_j\sigma_O(\ell)}v_{n_i\sigma_O(k)}^*v_{m_j\ell}^*) &= \frac{1}{N^2 - 1} [\delta_{m_in_i}\delta_{m_jn_j}\delta_{k,\sigma_O(k)}\delta_{\ell,\sigma_O(l)} + \delta_{m_im_j}\delta_{n_in_j}\delta_{kl}\delta_{\sigma_O(l)\sigma_O(k)}] \\
    &- \frac{1}{N(N^2-1)} [\delta_{m_in_i}\delta_{m_jn_j}\delta_{kl}\delta_{\sigma_O(l)\sigma_O(k)} + \delta_{m_im_j}\delta_{n_in_j}\delta_{k,\sigma_O(k)}\delta_{\ell,\sigma_O(l)}]
\end{align}

We have that
\begin{equation}
   \forall k,l, \:\delta_{k,\sigma_O(k)}\delta_{\ell,\sigma_O(l)} = 0, \quad \Sum_{k, \ell =1}^N \delta_{kl}\delta_{\sigma_O(l)\sigma_O(k)} O_{k, \sigma_O(k)} O_{\ell, \sigma_O(\ell)} = \Sum_{k =1}^N O_{k, \sigma_O(k)}O_{k, \sigma_O(k)}^* = N
\end{equation}

\begin{align}
    \EE[\abs{v}^2] &= \frac{1}{N^2 - 1} \Sum_{i, j=1}^q \delta_{m_im_j}\delta_{n_in_j} N - \frac{1}{N(N^2-1)} \Sum_{i, j=1}^q \delta_{m_in_i}\delta_{m_jn_j} N\\
    &= \frac{qN}{N^2 - 1}
\end{align}

We note 
\begin{align}
    v_1 = \sum_{i=1}^q (VOV^\dagger)_{m_i^{(1)}n_i^{(1)}}\\
    v_2 = \sum_{i=1}^q (VOV^\dagger)_{m_i^{(2)}n_i^{(2)}}\\
\end{align}

We want to compute $\EE[v_1v_2^*]$ and $\EE[v_1v_2]$.

We have
\begin{align}
    v_1v_2^* &= \Sum_{i=1}^{q_1} \Sum_{j=1}^{q_2}(VOV^\dagger)_{m_i^{(1)}n_i^{(1)}} (VOV^\dagger)_{m_j^{(2)}n_j^{(2)}}^*\\
    &=\Sum_{i=1}^{q_1} \Sum_{j=1}^{q_2}\big(\Sum_{k=1}^N v_{m_i^{(1)}k} O_{k, \sigma_O(k)} v_{n_i^{(1)}, \sigma_O(k)}^*\big)\:\big(\Sum_{k=1}^N v_{m_j^{(2)}k} O_{k, \sigma_O(k)} v_{n_j^{(2)} \sigma_O(k)}^*\big)^* \\
    &= \Sum_{i=1}^{q_1} \Sum_{j=1}^{q_2} \Sum_{k, \ell =1}^N v_{m_i^{(1)}k}v_{n_j^{(2)}\sigma_O(\ell)}v_{n_i^{(1)}\sigma_O(k)}^*v_{m_j^{(2)}\ell}^* \; O_{k, \sigma_O(k)} O_{\ell, \sigma_O(\ell)}^*\\
    \EE(v_{m_i^{(1)}k}v_{n_j^{(2)}\sigma_O(\ell)}v_{n_i^{(1)}\sigma_O(k)}^*v_{m_j^{(2)}\ell}^*) &= \frac{1}{N^2 - 1} [\delta_{m_i^{(1)}n_i^{(1)}}\delta_{m_j^{(2)}n_j^{(2)}}\delta_{k,\sigma_O(k)}\delta_{\ell,\sigma_O(l)} + \delta_{m_i^{(1)}m_j^{(2)}}\delta_{n_i^{(1)}n_j^{(2)}}\delta_{kl}\delta_{\sigma_O(l)\sigma_O(k)}] \\
    &- \frac{1}{N(N^2-1)} [\delta_{m_i^{(1)}n_i^{(1)}}\delta_{m_j^{(2)}n_j^{(2)}}\delta_{kl}\delta_{\sigma_O(l)\sigma_O(k)} + \delta_{m_i^{(1)}m_j^{(2)}}\delta_{n_i^{(1)}n_j^{(2)}}\delta_{k,\sigma_O(k)}\delta_{\ell,\sigma_O(l)}]
\end{align}

\begin{align}
    v_1v_2 
    &=\Sum_{i=1}^{q_1} \Sum_{j=1}^{q_2}\big(\Sum_{k=1}^N v_{m_i^{(1)}k} O_{k, \sigma_O(k)} v_{n_i^{(1)}, \sigma_O(k)}^*\big)\:\big(\Sum_{k=1}^N v_{m_j^{(2)}k} O_{k, \sigma_O(k)} v_{n_j^{(2)} \sigma_O(k)}^*\big) \\
    &= \Sum_{i=1}^{q_1} \Sum_{j=1}^{q_2} \Sum_{k, \ell =1}^N v_{m_i^{(1)}k}v_{m_j^{(2)}\sigma_O(\ell)}v_{n_i^{(1)}\sigma_O(k)}^*v_{n_j^{(2)}\ell}^* \; O_{k, \sigma_O(k)} O_{\ell, \sigma_O(\ell)}^*\\
    \EE(v_{m_i^{(1)}k}v_{n_j^{(2)}\sigma_O(\ell)}v_{n_i^{(1)}\sigma_O(k)}^*v_{m_j^{(2)}\ell}^*) &= \frac{1}{N^2 - 1} [\delta_{m_i^{(1)}n_i^{(1)}}\delta_{m_j^{(2)}n_j^{(2)}}\delta_{k,\sigma_O(k)}\delta_{\ell,\sigma_O(l)} + \delta_{m_i^{(1)}n_j^{(2)}}\delta_{n_i^{(1)}m_j^{(2)}}\delta_{kl}\delta_{\sigma_O(l)\sigma_O(k)}] \\
    &- \frac{1}{N(N^2-1)} [\delta_{m_i^{(1)}n_i^{(1)}}\delta_{m_j^{(2)}n_j^{(2)}}\delta_{kl}\delta_{\sigma_O(l)\sigma_O(k)} + \delta_{m_i^{(1)}n_j^{(2)}}\delta_{n_i^{(1)}m_j^{(2)}}\delta_{k,\sigma_O(k)}\delta_{\ell,\sigma_O(l)}]
\end{align}

\begin{align}
    \EE[v_1v_2^*] &= \frac{1}{N^2 - 1} \Sum_{i=1}^{q_1} \Sum_{j=1}^{q_2} \delta_{m_i^{(1)}m_j^{(2)}}\delta_{n_i^{(1)}n_j^{(2)}} N - \frac{1}{N(N^2-1)} \Sum_{i=1}^{q_1} \Sum_{j=1}^{q_2}\delta_{m_i^{(1)}n_i^{(1)}}\delta_{m_j^{(2)}n_j^{(2)}} N = 0\\
    \EE[v_1v_2] &= \frac{1}{N^2 - 1} \Sum_{i=1}^{q_1} \Sum_{j=1}^{q_2} \delta_{m_i^{(1)}n_j^{(2)}}\delta_{n_i^{(1)}m_j^{(2)}} N - \frac{1}{N(N^2-1)} \Sum_{i=1}^{q_1} \Sum_{j=1}^{q_2}\delta_{m_i^{(1)}n_i^{(1)}}\delta_{m_j^{(2)}n_j^{(2)}} N\\
    &= 0
\end{align}

\begin{align}
    \abs{v}^4 &=  \sum_{i1, i2, i3, i4} (VOV^{\dagger})_{m_{i1},n_{i1}} (VOV^{\dagger})^{\ast}_{m_{i2},n_{i2}}
        (VOV^{\dagger})_{m_{i3},n_{i3}} (VOV^{\dagger})^{\ast}_{m_{i4},n_{i4}}\\
        & = \sum_{i1, i2, i3, i4} \sum_{k_1, k_2, k_3, k_4} \\
        &v_{m_{i1},k_1} O_{k_1, \sigma_O(k_1)} v^{\ast}_{n_{i1},\sigma_O(k_1)}
        v_{m_{i2},k_2}^{\ast} O_{k_2, \sigma_O(k_2)}^* v_{n_{i2},\sigma_O(k_2)}
        v_{m_{i3},k_3} O_{k_3, \sigma_O(k_3)} v^{\ast}_{n_{i3},\sigma_O(k_3)}
        v_{m_{i4},k_4}^{\ast} O_{k_4, \sigma_O(k_4)} v_{n_{i4},\sigma_O(k_4)}^*\\
        & = \sum_{i1, i2, i3, i4} \sum_{k_1, k_2, k_3, k_4}\\
        &v_{m_{i1},k_1}v_{n_{i2},\sigma_O(k_2)}v_{m_{i3},k_3}v_{n_{i4},\sigma_O(k_4)}v^{\ast}_{n_{i1},\sigma_O(k_1)} v_{m_{i2},k_2}^{\ast}v^{\ast}_{n_{i3},\sigma_O(k_3)}v_{m_{i4},k_4}^{\ast}O_{k_1, \sigma_O(k_1)} O_{k_2, \sigma_O(k_2)}^* O_{k_3, \sigma_O(k_3)} O_{k_4, \sigma_O(k_4)}^* 
\end{align}

\begin{align}
    \EE[\abs{v}^4] &= \sum_{i1, i2, i3, i4}  \sum_{\sigma, \tau \in S_4}\delta_{m_{i1}, \sigma(m_{i1})}  \delta_{m_{i3}, \sigma(m_{i3})} 
         \delta_{n_{i2}, \sigma(n_{i2})} \delta_{n_{i4}, \sigma(n_{i4})}  A(\tau) W_g(\tau\sigma^{-1}, N)\\ 
\end{align}

where $A(\tau) = \Sum_{k_1, k_2, k_3, k_4} \delta_{k_1, \tau(k_1)} \delta_{\sigma_O(k_2), \tau(\sigma_O(k_2))} \delta_{k_3, \tau(k_3)} \delta_{\sigma_O(k_4), \tau(\sigma_O(k_4))}O_{k_1, \sigma_O(k_1)}O_{k_2, \sigma_O(k_2)}^* O_{k_3, \sigma_O(k_3)} O_{k_4, \sigma_O(k_4)}^*$

We then do the same reasoning as with $O$ diagonal.

\section{Re-uploading Fourier models}
\label{app:Reuploading_Proof}

In this Section, we offer more formal versions of the Theorems presented in \autoref{subsec:ReUploadingModel} with their corresponding proofs. We start by \autoref{thm:Exp_BetaQ_Reuploading_2design}:

\begin{restatable}[Formal version of \autoref{thm:Exp_BetaQ_Reuploading_2design}]{thm}{ExpBetaQReuploadingtwodesignFormal}\label{thm:Exp_BetaQ_Reuploading_2design_Formal}
    Consider a single layered quantum re-uploading model with Fourier coefficients $c_{\omega}(\theta)$, with spectrum $\Omega$. We assume that each of the two parameterized unitaries are drawn form a 2-design. The variance of $\norm{\beta_Q}_2$ is given by:
    \begin{equation}
        \begin{split}
            \mathbb{E}_{\text{Haar}}[\norm{\beta_Q}^2_2] =  &\left(\frac{N\norm{O}_2^2-\text{Tr}(O)^2}{N(N^2-1)}\right) \frac{N^2p}{N(N+1)}\\
            &+ \frac{\text{Tr}(O)^2}{N^2} \, \textrm{,}
        \end{split}
    \end{equation}
    with $O$ the measurement observable.
\end{restatable}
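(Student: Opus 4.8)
The plan is to reduce $\norm{\beta_Q}^2$ to a sum of squared Fourier coefficients and then average each one over the two \emph{independent} $2$-designs with second-moment (Weingarten) formulas. First I would write $U(x,\theta)=V^2 S(x) V^1$, diagonalize the encoding as $S(x)=\sum_a e^{-i\lambda_a\cdot x}\ket{a}\bra{a}$, and expand
\begin{equation}
f_Q(x)=\bra{0}V^{1\dagger}S(x)^\dagger\tilde O\,S(x)V^1\ket{0}=\sum_\omega c_\omega\,e^{i\omega\cdot x},\qquad c_\omega=\sum_{(a,b):\,\lambda_a-\lambda_b=\omega}\tilde O_{ab}\,\rho_{ba},
\end{equation}
where $\tilde O:=V^{2\dagger}OV^2$, $\rho:=V^1\ket{0}\bra{0}V^{1\dagger}$, and $R(\omega)$ denotes the number of index pairs $(a,b)$ contributing to $c_\omega$. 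With the normalization of \cref{eq:feat-map}, $\norm{\beta_Q}^2$ is a weighted sum of the $|c_\omega|^2$, so it suffices to evaluate $\EE[|c_\omega|^2]$ frequency by frequency.

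Next, using the independence of $V^1$ and $V^2$ I would factor the average as
\begin{equation}
\EE[|c_\omega|^2]=\sum_{(a,b),(a',b')}\EE_{V^2}\!\big[\tilde O_{ab}\tilde O^*_{a'b'}\big]\;\EE_{V^1}\!\big[\rho_{ba}\rho^*_{b'a'}\big],
\end{equation}
with both pairs constrained to frequency $\omega$. The $2$-design second moment of $\rho$ yields $\EE_{V^1}[\rho_{ba}\rho^*_{b'a'}]=\frac{1}{N(N+1)}\big(\delta_{ab}\delta_{a'b'}+\delta_{aa'}\delta_{bb'}\big)$, while the analogous computation for the conjugated observable gives $\EE_{V^2}[\tilde O_{ab}\tilde O^*_{a'b'}]=A\,\delta_{ab}\delta_{a'b'}+B\,\delta_{aa'}\delta_{bb'}$, with $A=\frac{N\,\text{Tr}(O)^2-\norm{O}_2^2}{N(N^2-1)}$ and $B=\frac{N\,\norm{O}_2^2-\text{Tr}(O)^2}{N(N^2-1)}$; the two Weingarten contractions produce the $\text{Tr}(O)^2$ and $\norm{O}_2^2$ factors through the traces $\sum_{pq}O_{pp}O_{qq}$ and $\sum_{pq}|O_{pq}|^2$. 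These are exactly the second-moment identities already established in \cref{thm:2momentsHaar}, specialized to the two trainable blocks and using $\mathrm{Wg}(\mathrm{id})=\tfrac{1}{N^2-1}$, $\mathrm{Wg}((12))=-\tfrac{1}{N(N^2-1)}$.

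Multiplying the two moments and carrying out the constrained index sums is then the heart of the computation. For every nonzero $\omega$ all contributing pairs are off-diagonal ($a\neq b$), so the $\delta_{ab}$ terms vanish and only $\delta_{aa'}\delta_{bb'}$ survives, giving $\EE[|c_\omega|^2]=\frac{B}{N(N+1)}R(\omega)$. Summing over the spectrum with the counting rule $\sum_\omega R(\omega)=N^2$ (the total number of ordered index pairs) and inserting the factor $p$ from the feature-map normalization reproduces the first term $B\,\frac{N^2p}{N(N+1)}$. The $\omega=0$ block is special because it contains the diagonal pairs $(a,a)$: there the $\delta_{ab}$ contraction no longer dies, the $A$-dependent pieces contribute, and their leading part is the squared mean $\big(\EE[c_0]\big)^2=(\text{Tr}(O)/N)^2=\text{Tr}(O)^2/N^2$, which is the additive second term of the claim.

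I expect the main obstacle to be precisely this $\omega=0$ bookkeeping: cleanly disentangling the diagonal ($a=b$) from the off-diagonal contributions so that the full spectral sum splits into the $p$-scaling ``AC'' term governed by $B$ and the $p$-independent ``DC'' term $\text{Tr}(O)^2/N^2$, while tracking the coincidence ($a=b=a'=b'$) cross terms that mix the $\rho$- and $\tilde O$-moments and carry the $A+B$ combination. Checking that these coincidence terms are subleading, and that the rule $\sum_\omega R(\omega)=N^2$ is applied consistently with the chosen normalization of the constant feature, is where the real care lies; once that separation is settled, the remainder is routine substitution of the Weingarten values above.
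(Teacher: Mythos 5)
Your proposal is correct in substance and follows the same skeleton as the paper's proof --- reduce $\norm{\beta_Q}_2^2$ to $p\sum_{\omega\in\Omega}\EE[|c_\omega|^2]$, evaluate the second moment of each Fourier coefficient under the two independent 2-designs, and sum over the spectrum using $\sum_\omega |R(\omega)|=N^2$ --- but it differs in one genuine way: where the paper simply \emph{cites} the ready-made expectation and variance formulas for $c_\omega(\theta)$ from the reference \cite{mhiri2024constrained} (restated as \cref{thm:single_layer_gloabl_2design} in \cref{app:Reuploading_Proof}), you re-derive those moments from scratch by factorizing $\EE[|c_\omega|^2]$ over the independent blocks $V^1,V^2$ and applying the order-2 Weingarten contractions to the rank-one projector $\rho$ and the conjugated observable $\tilde O$. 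Your formulas are correct: $\EE_{V^1}[\rho_{ba}\rho^*_{b'a'}]=\tfrac{1}{N(N+1)}(\delta_{ab}\delta_{a'b'}+\delta_{aa'}\delta_{bb'})$ and the coefficients $A=\tfrac{N\,\text{Tr}(O)^2-\norm{O}_2^2}{N(N^2-1)}$, $B=\tfrac{N\norm{O}_2^2-\text{Tr}(O)^2}{N(N^2-1)}$ are exactly what the Weingarten values $\mathrm{Wg}(\mathrm{id})=\tfrac{1}{N^2-1}$, $\mathrm{Wg}((12))=-\tfrac{1}{N(N^2-1)}$ give, and your off-diagonal argument for $\omega\neq 0$ correctly yields $\EE[|c_\omega|^2]=\tfrac{B\,|R(\omega)|}{N(N+1)}$. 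What your route buys is self-containedness and, more importantly, transparency at $\omega=0$, which you rightly flag as the delicate spot: if you carry your exact bookkeeping through (non-degenerate encoding, $R(0)=N$, so $\EE[|c_0|^2]=A+\tfrac{2B}{N+1}$), the total comes out to $p(A+B)=p\,\tfrac{\norm{O}_2^2+\text{Tr}(O)^2}{N(N+1)}$, which agrees with the theorem's displayed formula only to leading order in $1/N$. This is not a flaw in your argument but a looseness in the paper's own derivation, which drops the $\delta_\omega^0$ correction term of the cited variance formula and is inconsistent about the factor $p$ multiplying $\text{Tr}(O)^2/N^2$; your exact computation is, if anything, the cleaner statement, and it matches the paper's independent single-layer calculation $\EE_V[\norm{\beta_Q}^2]=\tfrac{p}{N+1}$ in \cref{app:weingarten} for traceless $O$ with $\norm{O}_2^2=N$.
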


We observe that, assuming $\text{Tr}(O)=0$ and $\norm{O}_2^2=N$, the expected values $\mathbb{E}_{\text{Haar}}[\norm{\beta_Q}^2_2]$ is scaling as $\frac{p}{N}$ such as in previous examples from \autoref{subsec:Limitations}. Using Jensen's inequality, we have that $\mathbb{E}_{\text{Haar}}[\norm{\beta_Q}_2] \leq \sqrt{\mathbb{E}_{\text{Haar}}[\norm{\beta_Q}^2_2]}$. Therefore, we have that the norm of $\beta_Q$ can be very low for low value of $p$, while the case $p \sim N$ may offer a potential advantage. Note that one could obtain an expression for $\VV[\norm{\beta_Q}^2_2]$ similarly as in \autoref{app:non-concnetration}, but under the hypothesis that the trainable layer unitaries form an $8$-design; this would require using Weingarten calculus of order $8$.

\begin{proof}
The expression of the quantum model weight vector l2-norm is $\norm{\beta_Q}_2 = \sqrt{\sum_{i=1}^{|\Omega|} |c_{\omega}|^2}$, thus we have:
    \begin{equation}
        \mathbb{E}[\norm{\beta_Q}_2^2] = \sum_{\omega \in \Omega} \mathbb{E}[|c_{\omega}|^2] \, \textrm{.}
    \end{equation}

\begin{restatable}[from \cite{mhiri2024constrained}]{thm}{ThmMhiri1}
\label{thm:single_layer_gloabl_2design}
     Consider a single layered Quantum Fourier model with Fourier coefficients $c_{\omega}(\theta)$, with spectrum $\Omega$, and redundancies $|R(\omega)|$. We assume that  each of the two parameterized layers form independently a 2-design (under the uniform distribution over their parameters). The expectation and variance of each Fourier coefficient in the spectrum $\Omega$ is given by:
\begin{equation}\label{eq:coeff_variance_2design_single}
    \begin{aligned}
        \mathbb{E}_{\theta}[c_{\omega}(\theta)]\quad&= \quad\frac{Tr(O)}{N}\delta_{\omega}^0 \, \textrm{,}\\
        \text{Var}_{\theta}[c_{\omega}(\theta)] &=	 \left(\frac{N\norm{O}_2^2-Tr(O)^2}{N(N^2-1)}\right) \frac{|R(\omega)|}{N(N+1)}    +   \frac{Tr(O)^2-N\norm{O}^2}{N^2(N^2-1)}\delta_{\omega}^0 \, \textrm{.}
    \end{aligned}
\end{equation}
\end{restatable}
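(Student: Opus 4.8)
The plan is to reduce everything to second-moment (2-design) averages taken independently over the two trainable layers. First I would work in the common eigenbasis of the encoding generators, so that $S(x)$ is diagonal with entries $e^{-i\lambda_j\cdot x}$, where the $\lambda_j$ are the (multiset of) eigenvalues; the diagonalizing change of basis may be absorbed into $V^1,V^2$ since the 2-design property is invariant under multiplication by a fixed unitary. Writing $\psi:=V^1\ket{0^n}$ for the random input state and $\tilde O:=V^{2\dagger}OV^2$ for the twirled observable, the model reads $f_Q(x)=\langle\psi|S(x)^\dagger\tilde O\,S(x)|\psi\rangle$, and collecting the terms carrying the phase $e^{i\omega\cdot x}$ gives the closed form
\begin{equation}
c_\omega(\theta)=\sum_{(j,k)\in R(\omega)}\psi_j^{*}\,\tilde O_{jk}\,\psi_k\,,
\end{equation}
where $R(\omega)=\{(j,k):\lambda_j-\lambda_k=\omega\}$ is precisely the redundancy set, so $|R(\omega)|$ is its cardinality.

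For the mean, since $V^1$ and $V^2$ are independent I would factor $\EE[c_\omega]=\sum_{(j,k)\in R(\omega)}\EE[\psi_j^{*}\psi_k]\,\EE[\tilde O_{jk}]$ and invoke the first-moment identities $\EE[\psi_j^{*}\psi_k]=\delta_{jk}/N$ and $\EE[\tilde O_{jk}]=\delta_{jk}\,\text{Tr}(O)/N$ (both guaranteed by a $1$-design, hence by a $2$-design). Only diagonal pairs survive, and since these lie in $R(\omega)$ only when $\omega=0$, this yields $\EE[c_\omega]=\frac{\text{Tr}(O)}{N}\delta_\omega^0$, which is the first line of the claim.

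The variance is where the real work lies. I would compute $\EE[|c_\omega|^2]=\sum_{(j,k),(j',k')\in R(\omega)}\EE[\psi_j^{*}\psi_k\psi_{j'}\psi_{k'}^{*}]\,\EE[\tilde O_{jk}\tilde O_{j'k'}^{*}]$, again factoring over the two layers. The state factor is the standard $2$-design fourth moment $\EE[\psi_j^{*}\psi_k\psi_{j'}\psi_{k'}^{*}]=\frac{1}{N(N+1)}\big(\delta_{jk}\delta_{j'k'}+\delta_{jj'}\delta_{kk'}\big)$, while the observable factor is an $S_2$ Weingarten contraction of the twirl --- the same computation underlying \cref{thm:2momentsHaar} --- giving $\EE[\tilde O_{jk}\tilde O_{j'k'}^{*}]=A\,\delta_{jk}\delta_{j'k'}+B\,\delta_{jj'}\delta_{kk'}$ with $A=\frac{N\,\text{Tr}(O)^2-\norm{O}_2^2}{N(N^2-1)}$ and $B=\frac{N\norm{O}_2^2-\text{Tr}(O)^2}{N(N^2-1)}$. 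Multiplying the two factors and summing over $R(\omega)\times R(\omega)$ collapses the Kronecker deltas: the term $B\,\delta_{jj'}\delta_{kk'}$ contributes $B\,|R(\omega)|$ (the second pair is forced equal to the first), while the remaining contributions involve only diagonal indices and hence vanish unless $\omega=0$. Dividing by $N(N+1)$ and subtracting $|\EE[c_\omega]|^2=\frac{\text{Tr}(O)^2}{N^2}\delta_\omega^0$ then produces the advertised main term $B\,|R(\omega)|/(N(N+1))$ together with the $\delta_\omega^0$ correction $\frac{\text{Tr}(O)^2-N\norm{O}_2^2}{N^2(N^2-1)}$.

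I expect the main obstacle to be the $\omega=0$ bookkeeping. For $\omega\neq0$ every pair $(j,k)\in R(\omega)$ is off-diagonal ($\lambda_j\neq\lambda_k$), so all the $\delta_{jk}$ factors drop out and one reads off the clean result $\text{Var}[c_\omega]=B\,|R(\omega)|/(N(N+1))$ immediately. For $\omega=0$, however, the diagonal pairs reactivate the $A\,\delta_{jk}\delta_{j'k'}$ piece as well as the all-indices-equal contribution, and these must be combined correctly with the subtracted mean square and with the precise count of diagonal entries in $R(0)$ to recover the stated correction. Keeping careful track of which index coincidences are forced by membership in $R(0)$ versus by the Weingarten deltas is the delicate step, and it is exactly here that the convention for the redundancy at $\omega=0$ enters; I would pin down the signs and normalization by checking the final expression against a small instance (e.g. a single qubit with $O=Z$) before asserting the general formula.
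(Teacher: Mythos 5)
The paper never proves this statement: it is imported verbatim from \cite{mhiri2024constrained} and used as a black box in the proof of \cref{thm:Exp_BetaQ_Reuploading_2design_Formal}, so your derivation must stand on its own. Your architecture is the standard and correct one: diagonalizing the encoding and absorbing the basis change into the 2-design layers, writing $c_\omega=\sum_{(j,k)\in R(\omega)}\psi_j^{*}\tilde O_{jk}\psi_k$, factoring the moments over the two independent layers, and using the state fourth moment $\frac{\delta_{jk}\delta_{j'k'}+\delta_{jj'}\delta_{kk'}}{N(N+1)}$ together with the twirl $\EE[\tilde O_{jk}\tilde O_{j'k'}^{*}]=A\,\delta_{jk}\delta_{j'k'}+B\,\delta_{jj'}\delta_{kk'}$. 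Your values of $A$ and $B$ are correct, the mean computation is correct, and for $\omega\neq 0$ (where every pair in $R(\omega)$ is off-diagonal) your conclusion $\text{Var}[c_\omega]=B\,\lvert R(\omega)\rvert/(N(N+1))$ is right.

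The genuine gap is exactly the step you flagged but did not execute: the $\omega=0$ bookkeeping, and carrying it through does \emph{not} produce the advertised correction. Collapsing the Kronecker deltas over $R(0)\times R(0)$ gives a diagonal-diagonal count $N^2$ for the $A\,\delta_{jk}\delta_{j'k'}$ piece, a count $N$ for each of the two all-indices-equal cross pieces, and $\lvert R(0)\rvert$ for the $B\,\delta_{jj'}\delta_{kk'}$ piece, so that
\begin{equation}
\EE\big[\lvert c_0\rvert^2\big] \;=\; \frac{A\,N^2 \;+\; (A+B)\,N \;+\; B\,\lvert R(0)\rvert}{N(N+1)}\,,
\end{equation}
and since one checks the algebraic identity $A+\frac{B}{N+1}-\frac{\text{Tr}(O)^2}{N^2}=-\frac{B}{N(N+1)}$, subtracting the mean square yields
\begin{equation}
\text{Var}[c_0] \;=\; \frac{B\,\big(\lvert R(0)\rvert-1\big)}{N(N+1)}\,,
\end{equation}
i.e.\ a $\delta_\omega^0$ correction equal to $\frac{\text{Tr}(O)^2-N\norm{O}_2^2}{N^2(N^2-1)(N+1)}$, which differs from the quoted $\frac{\text{Tr}(O)^2-N\norm{O}^2}{N^2(N^2-1)}$ by a factor of $N+1$. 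The single-qubit test you yourself proposed settles the matter: for $N=2$, $O=Z$, a nondegenerate encoding ($\lvert R(0)\rvert=2$, $B=2/3$), one has $c_0=(\abs{\psi_1}^2-\abs{\psi_2}^2)\,\tilde O_{11}$, a product of two independent variables uniform on $[-1,1]$, hence $\text{Var}[c_0]=1/9$, matching $\frac{B(\lvert R(0)\rvert-1)}{N(N+1)}$, whereas the quoted formula gives $\frac{2B}{6}-\frac{B}{2}=-\frac{1}{9}<0$, an impossible (negative) variance under the redundancy convention you and the paper both use. So your assertion that the delta-collapse ``produces the advertised $\delta_\omega^0$ correction'' is unsubstantiated and, as written, false: either the transcribed constant (or the $\omega=0$ redundancy convention of the source) is off by $(N+1)$, and an honest completion of your argument must end at the corrected constant rather than the quoted one. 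Note this discrepancy is subleading and harmless for the way the theorem is used downstream in \cref{thm:Exp_BetaQ_Reuploading_2design_Formal}, but you should run your own sanity check and state the corrected $\omega=0$ term explicitly.
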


According to the fact that $\sum_{\omega \in \Omega} |R(\omega)| = N^2 = 2^{2n}$, we now that each Fourier coefficient variance is vanishing when the trainable layers describe a 2-design. In addition, we observe that:
\begin{equation}
    \mathbb{E}_{\text{Haar}}[\norm{\beta_Q}^2_2] = \mathbb{E}_{\text{Haar}}[p\sum_{\omega \in \Omega} |c_{\omega}(\theta)|^2] = \sum_{\omega \in \Omega}p \mathbb{E}_{\text{Haar}}[|c_{\omega}(\theta)|^2] =  \sum_{\omega \in \Omega}p (\text{Var}_{\text{Haar}}[c_{\omega}(\theta)] + \mathbb{E}_{\text{Haar}}[c_{\omega}(\theta)]^2)
\end{equation}
And thus:
\begin{equation}
    \mathbb{E}_{\text{Haar}}[\norm{\beta_Q}^2_2] = \left(\frac{N\norm{O}_2^2-Tr(O)^2}{N(N^2-1)}\right) \frac{pN^2}{N(N+1)} + \frac{Tr(O)^2}{N^2}
\end{equation}
\end{proof}

In \cite{mhiri2024constrained}, the authors offer a bound on the variance of Fourier coefficients according to the monomial distance $\varepsilon$ of each trainable layer unitary matrix to a 2 design. Similarly, we offer a bound on the variance of the weight vector:

\begin{restatable}[]{thm}{bornbetaqhamiltonianencoding}
\label{thm:born_betaq_hamiltonian_encoding}
     Consider a single layered quantum re-uploading model with Fourier coefficients $c_{\omega}(\theta)$, with spectrum $\Omega$. We assume that each of the two parameterized unitaries form an $\varepsilon$-approximate 2-design according to the monomial definition. The expectation and variance of $\norm{\beta_Q}_2$ are given by:
\begin{equation}
    \begin{split}
        \mathbb{E}_{\theta}[\norm{\beta_Q}^2_2] \leq \mathbb{E}_{\text{Haar}}[\norm{\beta_Q}^2_2]&+\left( \frac{C_1 \varepsilon}{N^2}+\frac{C_2\varepsilon}{N(N+1)}\right)N^2 \\
        &+C_2\frac{\varepsilon^2}{N^2}N^4
    \end{split}
\end{equation}
where $C_1 = \Frac{N\norm{O}_2^2-Tr(O)^2}{N(N^2-1)}$,$C_2=\Sum_{l,k}\Frac{|O^{\bigotimes{2}}_{l,k}|}{N^2}$, and $\mathbb{E}_{\text{Haar}}[\norm{\beta_Q}^2_2]$ as defined in \autoref{thm:Exp_BetaQ_Reuploading_2design}.
\end{restatable}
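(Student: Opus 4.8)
The plan is to follow the proof of \autoref{thm:Exp_BetaQ_Reuploading_2design_Formal} verbatim down to the point where the exact $2$-design moment formulas are used, and then to replace those formulas by the weaker $\varepsilon$-approximate bounds, carrying the error terms explicitly rather than discarding them. First I would start from the frequency decomposition already exploited there,
\begin{equation}
    \mathbb{E}_{\theta}[\norm{\beta_Q}^2_2] = \Sum_{\omega \in \Omega} p\,\mathbb{E}_{\theta}[|c_\omega(\theta)|^2] = \Sum_{\omega \in \Omega} p\big(\VV_\theta[c_\omega(\theta)] + |\mathbb{E}_\theta[c_\omega(\theta)]|^2\big)\,,
\end{equation}
so that the task reduces to controlling, frequency by frequency, how far the approximate-design mean and variance of $c_\omega(\theta)$ can drift from their Haar values given in \autoref{thm:single_layer_gloabl_2design}.

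Second, I would expand each Fourier coefficient $c_\omega(\theta)$ as a degree-$(1,1)$ monomial expression in the entries of the two trainable unitaries $V^1,V^2$, weighted by the entries of $O$; consequently $\mathbb{E}_\theta[|c_\omega(\theta)|^2]$ is a sum of degree-$(2,2)$ monomials in each layer. Invoking the \emph{monomial} definition of an $\varepsilon$-approximate $2$-design from \cite{mhiri2024constrained}, every such monomial expectation differs from its Haar value by at most $\varepsilon$ times the associated weight, and summing these deviations with the triangle inequality over the entries of $O^{\otimes 2}$ is precisely what produces the constant $C_2 = \Sum_{l,k}|O^{\otimes 2}_{l,k}|/N^2$.

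Third, and this is the step that distinguishes the result from the single-layer bound of \cite{mhiri2024constrained}, I would run a hybrid (telescoping) argument over the two independent approximate layers: write the double approximate average as the exact double-Haar average, plus the two terms in which exactly one of $V^1,V^2$ is replaced by its $\varepsilon$-deviation, plus the single cross term in which both deviate simultaneously. The double-Haar part reproduces $\mathbb{E}_{\text{Haar}}[\norm{\beta_Q}^2_2]$ via the contractions of \autoref{thm:single_layer_gloabl_2design}; the two single-layer error terms contribute the order-$\varepsilon$ corrections $C_1\varepsilon/N^2$ and $C_2\varepsilon/(N(N+1))$ per coefficient; and the cross term contributes the order-$\varepsilon^2$ correction per coefficient. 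Summing these bounds over $\omega$, using $\Sum_{\omega\in\Omega}|R(\omega)| = N^2$ for the order-$\varepsilon$ contributions and a crude bound on the squared redundancies for the order-$\varepsilon^2$ contribution, yields the global factors $N^2$ and $N^4$ appearing in the statement.

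The hard part will be the bookkeeping of this hybrid step: one must identify precisely which monomials survive at orders $\varepsilon^0$, $\varepsilon^1$ and $\varepsilon^2$, verify that the single-layer deviation bounds compose without generating uncontrolled higher-order cross terms, and keep the normalization of the monomial distance consistent with \cite{mhiri2024constrained} so that the prefactors $C_1$ and $C_2$ emerge exactly rather than as big-$\mathcal{O}$ scalings. A secondary technical point is that only an upper bound on $\mathbb{E}_\theta[\norm{\beta_Q}^2_2]$ is required, so I would take absolute values of the error remainders early and discard favorable signs, which simplifies the estimate at the cost of tightness.
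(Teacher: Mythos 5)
Your top-level skeleton is exactly the paper's: decompose $\norm{\beta_Q}_2^2$ into per-frequency second moments $\mathbb{E}_\theta[|c_\omega|^2]$, bound each one's deviation from its Haar value, and then sum over $\Omega$ using $\sum_{\omega\in\Omega}|R(\omega)| = N^2$ and the crude bound $\sum_{\omega\in\Omega}|R(\omega)|^2 \leq \left(\sum_{\omega\in\Omega}|R(\omega)|\right)^2 = N^4$. Where you diverge is in how the per-coefficient bound is obtained: the paper does not prove it at all, it imports it wholesale as \autoref{thm:bound_approx_2design}, a quoted result of \cite{mhiri2024constrained}, so its proof reduces to ``apply the cited variance bound, sum, and use Jensen.'' Your steps two and three (the monomial expansion of $c_\omega$ and the hybrid/telescoping argument over the two independent layers, with the double-Haar term giving $\mathbb{E}_{\text{Haar}}[\norm{\beta_Q}^2_2]$, the two single-deviation terms giving the order-$\varepsilon$ corrections, and the cross term giving the order-$\varepsilon^2$ piece) amount to re-deriving that cited theorem from scratch. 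That route is sound in structure — the shape of the bound, with two linear-in-$\varepsilon$ terms proportional to $|R(\omega)|$ and one quadratic term proportional to $|R(\omega)|^2$, is indeed what such a telescoping produces — and it buys self-containedness, but all of the ``hard bookkeeping'' you flag is precisely the content the paper outsources to \cite{mhiri2024constrained}, including the normalization conventions of the monomial distance that fix $C_1$ and $C_2$ exactly. One point in your favor: your final summation step is actually more careful than what the paper writes, since the paper's displayed inequality in the proof leaves $|R(\omega)|$ and $|R(\omega)|^2$ unsummed (a typo-level sloppiness), whereas you make explicit the passage from the per-coefficient bound to the global $N^2$ and $N^4$ factors in the statement.
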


Once again, one can use Jensen's inequality to derive a similar bound on $\mathbb{E}_{\theta}[\norm{\beta_Q}_2]$.  According to the trainable layers monomial distance $\varepsilon$ to a 2-design, the choice of the observable, and the dimension of the feature map $p$, the l2-norm of the quantum weight vector can be sufficiently low and close to $\norm{\bmnls}_2$ or very large (when $p \sim N^2$).

\begin{proof}
In a more general setting where trainable layers are $\varepsilon$-approximate 2-design according to the monomial distance, the authors from \cite{mhiri2024constrained} provide a bound on the Fourier coefficients variance:

\begin{restatable}[from \cite{mhiri2024constrained}]{thm}{ThmMhiri2}
\label{thm:bound_approx_2design}
        Consider a single layered Quantum Fourier model with spectrum $\Omega$, Fourier coefficients $c_{\omega}(\theta)$   and redundancies $|R(\omega)|$. We assume that each of the two parameterized layers forms an $\varepsilon$-approximate 2-design according to the monomial definition. The variance of the model's Fourier coefficients obeys the following bound:
    \begin{equation}
     \text{Var}_{\theta}[c_{\omega}(\theta)] \leq \text{Var}_{\text{Haar}}[c_{\omega}(\theta)]+\left( \frac{C_1 \varepsilon}{d^2}+\frac{C_2\varepsilon}{d(d+1)}\right)|R(\omega)|+C_2\frac{\varepsilon^2}{d^2}|R(\omega)|^2 \, \textrm{,}
   \end{equation}
where $C_1 = \frac{d\norm{O}^2-Tr(O)^2}{d(d^2-1)},C_2=\sum_{l,k} \frac{|O^{\bigotimes{2}}_{l,k}|}{d^2} $ and $\text{Var}_{\text{Haar}}[c_{\omega}]$ is the variance of a Fourier coefficient under the 2-design assumption given in \autoref{thm:single_layer_gloabl_2design}.
\end{restatable}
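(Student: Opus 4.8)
The plan is to reduce the statement about $\mathbb{E}_{\theta}[\norm{\beta_Q}^2_2]$ to a sum over the spectrum of the per-coefficient second moments, and then to feed each second moment into the $\varepsilon$-approximate 2-design variance bound of \autoref{thm:bound_approx_2design}. First I would recall the norm identity already used in the proof of \autoref{thm:Exp_BetaQ_Reuploading_2design_Formal}, namely $\norm{\beta_Q}^2_2 = p\sum_{\omega \in \Omega}|c_{\omega}(\theta)|^2$, which by linearity of expectation gives $\mathbb{E}_{\theta}[\norm{\beta_Q}^2_2] = p\sum_{\omega \in \Omega}\mathbb{E}_{\theta}[|c_{\omega}(\theta)|^2]$. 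I would then split each summand through the bias--variance decomposition $\mathbb{E}_{\theta}[|c_{\omega}(\theta)|^2] = \text{Var}_{\theta}[c_{\omega}(\theta)] + |\mathbb{E}_{\theta}[c_{\omega}(\theta)]|^2$, so that the whole quantity separates into a variance part and a first-moment part.

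For the variance part I would invoke \autoref{thm:bound_approx_2design} termwise (with $d=N$), writing $\text{Var}_{\theta}[c_{\omega}] \leq \text{Var}_{\text{Haar}}[c_{\omega}] + (\frac{C_1\varepsilon}{N^2}+\frac{C_2\varepsilon}{N(N+1)})|R(\omega)| + C_2\frac{\varepsilon^2}{N^2}|R(\omega)|^2$. For the first-moment part I would argue that $\mathbb{E}_{\theta}[c_{\omega}]$ reproduces its Haar value $\frac{\text{Tr}(O)}{N}\delta_{\omega}^0$ from \autoref{thm:single_layer_gloabl_2design}, since the mean only probes degree-one monomials, which are fixed (in the exact-design limit) or controlled by the monomial 2-design condition; its contribution to $\mathbb{E}_{\theta}[\norm{\beta_Q}^2_2]$ is therefore the same $\frac{\text{Tr}(O)^2}{N^2}$ that already appears inside $\mathbb{E}_{\text{Haar}}[\norm{\beta_Q}^2_2]$. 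Regrouping the Haar variance and Haar mean then reconstitutes $\mathbb{E}_{\text{Haar}}[\norm{\beta_Q}^2_2]$ as the leading term.

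The remaining step is to carry out the two spectral sums appearing in the error terms, for which I would use the redundancy sum rules: $\sum_{\omega \in \Omega}|R(\omega)| = N^2$, which converts the linear-in-$\varepsilon$ correction into $(\frac{C_1\varepsilon}{N^2}+\frac{C_2\varepsilon}{N(N+1)})N^2$, together with the crude bound $\sum_{\omega \in \Omega}|R(\omega)|^2 \leq (\sum_{\omega}|R(\omega)|)^2 = N^4$, which bounds the quadratic-in-$\varepsilon$ correction by $C_2\frac{\varepsilon^2}{N^2}N^4$. Assembling these with the leading Haar term yields exactly the inequality claimed in \autoref{thm:born_betaq_hamiltonian_encoding}, after which the corresponding control of $\mathbb{E}_{\theta}[\norm{\beta_Q}_2]$ follows by Jensen's inequality.

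I expect the main obstacle to be the first-moment term: unlike the variance, its deviation from the Haar value is not supplied by \autoref{thm:bound_approx_2design}, so I must justify that the monomial $\varepsilon$-approximate 2-design hypothesis also pins the degree-one averages $\mathbb{E}_{\theta}[c_{\omega}]$ close enough to $\frac{\text{Tr}(O)}{N}\delta_{\omega}^0$ that any residual at $\omega\neq0$ is subsumed by the $O(\varepsilon)$ error terms already present. A secondary point worth checking is the placement of the factor $p$ relative to the $\sum_{\omega}|R(\omega)| = N^2$ rule, to make sure the powers of $N$ in the final correction terms match the statement as worded.
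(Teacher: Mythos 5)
There is a fundamental mismatch here: the statement you were asked to prove is \cref{thm:bound_approx_2design} itself, i.e.\ the per-coefficient variance bound $\mathrm{Var}_{\theta}[c_{\omega}(\theta)] \leq \mathrm{Var}_{\mathrm{Haar}}[c_{\omega}(\theta)] + \big( \frac{C_1 \varepsilon}{d^2}+\frac{C_2\varepsilon}{d(d+1)}\big)|R(\omega)|+C_2\frac{\varepsilon^2}{d^2}|R(\omega)|^2$ under the monomial $\varepsilon$-approximate 2-design hypothesis, yet your argument invokes exactly this bound ``termwise'' as a given ingredient. What you actually derive is the downstream norm bound of \cref{thm:born_betaq_hamiltonian_encoding}, via $\norm{\beta_Q}_2^2 = p\sum_{\omega\in\Omega}|c_\omega(\theta)|^2$, the bias--variance split, and the sum rules $\sum_{\omega}|R(\omega)| = N^2$ and $\sum_{\omega}|R(\omega)|^2 \leq \big(\sum_{\omega}|R(\omega)|\big)^2 = N^4$. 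As a proof of \cref{thm:bound_approx_2design} this is circular. A genuine proof would have to open up the monomial definition of an $\varepsilon$-approximate 2-design: write $c_\omega(\theta)$ as a sum over the $|R(\omega)|$ redundant index pairs of matrix-entry products of the two layer unitaries (sandwiching the initial state and the observable $O$), expand $|c_\omega|^2$ into degree-$(2,2)$ monomials in the entries of each layer, replace each Haar monomial expectation (Weingarten values, as in \cref{app:weingarten}) by its $\varepsilon$-perturbed counterpart, and track how the per-monomial deviations assemble into the term linear in $|R(\omega)|$ (producing the constants $C_1$, built from $\norm{O}^2$ and $\mathrm{Tr}(O)^2$, and $C_2 = \sum_{l,k}|O^{\otimes 2}_{l,k}|/d^2$) while the products of deviations across the two independent layers generate the $\varepsilon^2 |R(\omega)|^2$ term. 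None of this is attempted in your proposal.

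For completeness: the paper itself does not prove this statement either --- \cref{thm:bound_approx_2design} is imported verbatim from \cite{mhiri2024constrained} (with $d$ there denoting the Hilbert-space dimension, $N$ in this paper's notation), and its only role here is as an ingredient in the proof of \cref{thm:born_betaq_hamiltonian_encoding}. Judged as a proof of \emph{that} theorem, your write-up essentially mirrors the paper's argument and is in one respect more careful: the paper plugs the cited bound in ``directly,'' leaving stray unsummed $|R(\omega)|$ factors in an intermediate display, whereas your explicit spectral sums legitimize the $N^2$ and $N^4$ factors in the final inequality. Your flagged concern about the first-moment term is also well placed there, since the hypothesis only controls variances and an approximate design need not reproduce $\mathbb{E}_{\theta}[c_\omega] = \frac{\mathrm{Tr}(O)}{N}\delta_\omega^0$ exactly, so an additional $O(\varepsilon)$ control of the means is strictly required. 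But none of this addresses the statement you were asked to prove, which remains unproven in your proposal.
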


Considering a single layered Quantum Fourier model, and assuming that each of the two parametrized layers form independently a 2-design (under the uniform distribution over their parameters), we can use the results from \autoref{thm:single_layer_gloabl_2design} and \autoref{thm:bound_approx_2design}. By applying \autoref{thm:bound_approx_2design}, it comes directly:
    \begin{equation}
        \mathbb{E}_{\theta}[\norm{\beta_Q}^2_2] \leq \mathbb{E}_{\text{Haar}}[\norm{\beta_Q}^2_2]+\left( \frac{C_1 \varepsilon}{d^2}+\frac{C_2\varepsilon}{d(d+1)}\right)|R(\omega)|+C_2\frac{\varepsilon^2}{d^2}|R(\omega)|^2 \, \textrm{.}
    \end{equation}
    By using Jensen inequality, and by considering the concavity of the square root function:
    \begin{equation}
        \mathbb{E}_{\theta}[\norm{\beta_Q}_2] \leq \sqrt{\mathbb{E}_{\text{Haar}}[\norm{\beta_Q}^2_2]+\left( \frac{C_1 \varepsilon}{d^2}+\frac{C_2\varepsilon}{d(d+1)}\right)|R(\omega)|+C_2\frac{\varepsilon^2}{d^2}|R(\omega)|^2}
    \end{equation}
\end{proof}

\section{Discrete Logarithm}
\label{app:discreteLog}

In this section, we detail examples of quantum models based on the discrete logarithm primitive.

Let the discrete logarithm unitary be defined as
\begin{equation}
    U_{\text{DLP}}: |i\rangle \longmapsto |\log_g i + 1 \rangle
\end{equation}
where $g$ is a prime number in $\llbracket 0, N-1\rrbracket$.

Let $|\psi (x)\rangle = \bigotimes_{i=1}^n RY(x_i) |0^n\rangle$ and 
\begin{equation}
    f_{\text{DLP}}(x) = \text{Tr}(U_{\text{DLP}}^{\dagger}Z_0 U_{\text{DLP}}|\psi(x)\rangle\langle\psi(x)|)
\end{equation}

$U_{\text{DLP}}^{\dagger}Z_0 U_{\text{DLP}}$ is a hermitian diagonal matrix and the coefficients can be written as $(U_{\text{DLP}}^{\dagger}Z_0 U_{\text{DLP}})_{ii} = (-1)^{b_0(\log i + 1)}$. $b_0(j)$ is the 0th bit of the binary description of $j$.

One can then rewrite $f_{\text{DLP}}$ as 
\begin{align}
    f_{\text{DLP}}(x) &= \sum_y [\prod_{i=0}^{n-1} \cos (x_i/2)^{1-y_i}\sin (x_i/2)^{y_i}]^2 b_y = \sum_y \prod_{i=0}^{n-1} (\cos (x_i/2)^2)^{1-y_i}(1 - \cos (x_i/2)^2)^{y_i} b_y\\
    &= \sum_y \frac{1}{2^n} \prod_{i=0}^{n-1} (1 + \cos (x_i))^{1-y_i}(1 - \cos (x_i))^{y_i} b_y = \sum_y \frac{1}{2^n}b_y \prod_{i=0}^{n-1} (1 + (-1)^{y_i}\cos (x_i)) \\
    &= \sum_y b_y \: \phi_y(x)
\end{align}

The basis function in the feature space can be defined as 
\begin{equation}
    \bigg\{\phi_y (x) = \frac{1}{2^n}\prod_{i=0}^{n-1} (1 + (-1)^{y_i}\cos (x_i)) \: , \:y\:\in \{0, 1\}^n \bigg\}
\end{equation}

\begin{align}
    \EE_{x \sim \{0, \pi\}^n}[(\phi_y(x))^2] &= \frac{1}{2^n}
\end{align}

and therefore
\begin{align}
    |b_y \: 2^n\norm{|\phi_y(x)}_{\mu} = \frac{2^n}{\sqrt{2}^n} = \sqrt{2}^n
\end{align}

\section{Proof of \autoref{thm:Example_Perfect_Fct}}
\label{app:non-concnetration}
We recall \autoref{thm:Example_Perfect_Fct}:
\ThmFunctionFarMNLSNoConcentration*

\begin{proof}
    We first remark that for all $\omega$, $\EE[\beta_{\omega, \cos}^2] = \EE[\beta_{\omega, \sin}^2] = \sigma^2/3$.

    The first item is proven by applying the Hoeffding inequality to the random variable $\norm{\beta}^2 = \Sum_\omega \beta_{\omega, \cos}^2 + \beta_{\omega, \sin}^2$ since every element of the sum is iid, and $\EE[\norm{\beta}^2] = \frac{2}{3} p\sigma^2$.

    To prove the second item, we remind that for all $\omega \neq 0$,  $\EE_x[\cos(\omega^\top x)] = \EE_x[\sin(\omega^\top x)] = 0$ and $\EE_x[\cos(\omega^\top x)^2] = \EE_x[\sin(\omega^\top x)^2] = 1/2, \:\: \EE_x[\cos(\omega^\top x)\sin(\omega^\top x)] = 0$ and for all $\omega \neq \omega'$, $\EE_x[\cos(\omega^\top x)\sin(\omega'^\top x)] = 
    \EE_x[\cos(\omega^\top x)\cos(\omega'^\top x)] = 
    \EE_x[\sin(\omega^\top x)\sin(\omega'^\top x)] = 0$.
    
    Then we have $\EE_x[f(x)] = 0$ and $\EE_x[f(x)^2] = \frac{1}{p}\Sum_\omega  \beta_{\omega, \cos}^2 + \beta_{\omega, \cos}^2 = \frac{\norm{\beta}^2}{p}$ and we apply the result of the first item.

    We will now show that $|f(x)| \leq 1$ on $\mathbb{R}^d$.

First, we show that $\forall x, \PP(|f(x)| \geq \epsilon) \leq 2\exp(-\displaystyle\frac{\epsilon^2}{2\sigma^2})$.
Let $i_0 \in [1, p]$ and $\beta' = (\beta_{\omega_1, \cos}, \beta_{\omega_1, \sin} \dots \beta_{\omega_{i_0}, \cos}', \beta_{\omega_{i_0}, \sin},  \dots \beta_{\omega_p})$.
\begin{align}
    f_\beta(x) - f_{\beta'}(x) &= \frac{1}{\sqrt{p}}\Sum_{\omega \neq \omega_{i_0}}  (\beta_{\omega, \cos}\cos(\omega^\top x) + \beta_{\omega, \sin}\sin(\omega^\top x)) + \frac{\beta_{\omega_{i_0}, \cos}}{\sqrt{p}} \cos(\omega^\top x) + \frac{\beta_{\omega_{i_0}, \sin}}{\sqrt{p}}\sin(\omega^\top x)\\
    &- \frac{1}{\sqrt{p}}\Sum_{\omega \neq \omega_{i_0}}  (\beta_{\omega, \cos}\cos(\omega^\top x) + \beta_{\omega, \sin}\sin(\omega^\top x)) - \frac{\beta_{\omega_{i_0}, \cos}'}{\sqrt{p}} \cos(\omega^\top x) - \frac{\beta_{\omega_{i_0}, \sin}}{\sqrt{p}}\sin(\omega^\top x)\\
    |f_\beta(x) - f_{\beta'}(x)|&= \frac{1}{\sqrt{p}} \:|\cos(\omega^\top x)|\:|\beta_{\omega_{i_0}, \cos} - \beta'_{\omega_{i_0}, \cos}|\\
    &\leq \frac{2\sigma}{\sqrt{p}}
\end{align}
The last inequality comes from $|\beta_{\omega_{i_0}} - \beta'_{\omega_{i_0}}| \leq 2\sigma$ and $|\cos(\omega^\top x)| \leq 1$. The same result can be obtain with $\beta' = (\beta_{\omega_1, \cos}, \beta_{\omega_1, \sin} \dots \beta_{\omega_{i_0}, \cos}, \beta_{\omega_{i_0}, \sin}',  \dots \beta_{\omega_p})$. Furthermore, for all $x$, $\EE_\beta[f_\beta(x)] = 0$

From McDiarmid's inequality
\begin{equation}\label{eqn:mcdiarmid}
    \PP_{\beta}(|f_\beta(x)| \geq \epsilon) \leq 2\exp(-\frac{2\epsilon^2}{\sum_{i=1}^p \frac{4\sigma^2}{p}}) \leq 2\exp(-\displaystyle\frac{\epsilon^2}{2\sigma^2})
\end{equation}

Now we prove that $|f|\leq 1$ on $\RR^d$. $f$ is a periodic function, so we only need to prove that $f$ is bounded over the domain $\mathcal{X} = [0, 2\pi]^d$.
We consider a covering net of radius $\epsilon$ of $\mathcal{X}$. Let $T=\{r_1, \dots r_T\}$ be the set of centers of the balls composing the covering net.

Let $r = (r_1, \dots r_d)\in T$ and $u = (u_1, \dots u_d)\in \mathcal{X}$ such that $\norm{u} \leq \epsilon$.

Then we have, by applying Taylor's formulas
\begin{align}
    f(r+u) - f(r) &= \Sum_{i=1}^d\partial_i f(r)u_i + \Sum_{i, j} R_{ij}(r+u) u_iu_j\\
    &= u^\top \nabla f (r) + u^\top R(r, u) u\\
    \text{with} \quad [R(r,u)]_{ij} = R_{ij}(r,u) &\:= \:S_{ij} \int_0^1 (1-t)\partial_{ij}f(r+tu)dt
\end{align}
where $S_{ij} = 1$ if $i=j$ and $S_{ij} = 2$ if $i\neq j$.
\begin{align*}
    &\partial_i f(r) = \sum_{\omega \in \Omega} \frac{\omega_i}{\sqrt{p}} (-\beta_{\omega, \cos}\sin(\omega^\top r) + \beta_{\omega, \sin}\cos(\omega^\top r))\\
    &\partial_{ij}f(r+tu) = -\sum_{\omega \in \Omega} \frac{\omega_i\omega_j}{\sqrt{p}} (\beta_{\omega, \cos}\cos(\omega^\top r) + \beta_{\omega, \sin}\sin(\omega^\top r))\\
    &\forall a, b \in \RR \int_0^1 (1-t)\cos(at+b)dt =-\frac{1}{a}\sin(b) -\frac{1}{a^2}\cos(a+b)+\frac{1}{a^2}\cos(b)\\
    &\text{and} \int_0^1 (1-t)\sin(at+b)dt =\frac{1}{a}\cos(b) -\frac{1}{a^2}\sin(a+b)+\frac{1}{a^2}\sin(b)\\
    &R_{ij}(r, u) = -S_{ij}\sum_{\omega \in \Omega} \frac{\omega_i\omega_j}{\sqrt{p}}\:\bigg[ \beta_{\omega, \cos}\int_0^1 (1-t)\cos(\omega^\top(r+tu))dt \:+\:\beta_{\omega, \sin}\int_0^1 (1-t)\sin(\omega^\top(r+tu))dt \bigg]\\
    &=-S_{ij}\sum_{\omega \in \Omega} \frac{\omega_i\omega_j}{\sqrt{p}}\:\bigg[ \beta_{\omega, \cos}\big(-\frac{1}{\omega^\top u}\sin(\omega^\top r) -\frac{1}{(\omega^\top u)^2}\cos(\omega^\top u+\omega^\top r)+\frac{1}{(\omega^\top u)^2}\cos(\omega^\top r) \big)\:+\\
    &\:\beta_{\omega, \sin}\big(\frac{1}{\omega^\top u}\cos(\omega^\top r) -\frac{1}{(\omega^\top u)^2}\sin(\omega^\top u+\omega^\top r)+\frac{1}{(\omega^\top u)^2}\sin(\omega^\top r) \big)\bigg]\\
    &= -S_{ij}\frac{1}{\sqrt{p}}\sum_{\omega \in \Omega} \omega_i\omega_j[\beta_{\omega, \cos}(\cos(\omega^\top r) + c_u) + \beta_{\omega, \sin}(\sin(\omega^\top r) + s_u)]\\
    &= A(r)_{ij} + B(r)_{ij}c_u'\\
    & A(r)_{ij} = -S_{ij}\sum_{\omega \in \Omega} \frac{\omega_i\omega_j}{\sqrt{p}}\beta_{\omega, \cos}\cos(\omega^\top r) + \beta_{\omega, \sin}\sin(\omega^\top r)\\
    & B(r)_{ij} = -S_{ij}\sum_{\omega \in \Omega} \frac{\omega_i\omega_j}{\sqrt{p}}\beta_{\omega, \cos} + \beta_{\omega, \sin}
\end{align*}
with $c_u, s_u, c_u' = \mathcal{O}(|\omega^\top u|)$, i.e., there exist a constant $C$ such that $\forall u, c_u, s_u, c_u' \leq C|\omega^\top u|$. $c_u, s_u, c_u'$ are obtained by applying Taylor formulas to $\sin$ and $\cos$.

Then
\begin{align}
    |f(r+u) - f(r)| &\leq |u^\top \nabla f (r)| + |u^\top R(r+u) u|\\
    &\leq \norm{u}\norm{\nabla f (r)} + \norm{R(r+u)}_F\norm{u}^2\\
    &\leq \epsilon\norm{\nabla f (r)} + \epsilon^2\norm{R(r+u)}_F\\
    |\omega^\top u|\leq dL\epsilon
\end{align}
since $\norm{u} \leq \epsilon$.

If we have for all $r\in T$, and for all $i, j \in \llbracket 1, d\rrbracket$,
\begin{enumerate}
    \item $|f(r)| \leq \Frac{1}{3}$
    \item $|\partial_i f (r)| \leq \Frac{1}{3d\epsilon}$
    \item $|A(r)_{ij}|\leq \Frac{1}{6d\epsilon^2}$ 
    \item $|B(r)_{ij}| \leq \Frac{1}{6d^2L\epsilon^3}$
\end{enumerate}
then we will have $\forall x \in \mathcal{X}, |f(x)| \leq 1$.

Now let us lower bound the probability over drawing $\beta$s of each condition . Each time we adapt the \autoref{eqn:mcdiarmid} derived from McDiarmid's inequality 
\begin{enumerate}
    \item $\PP(|f(r)| \leq 1/3) \geq 1 - \exp(-\Frac{1}{18\sigma^2})$
    \item $\PP(|\partial_i f (r)| \leq \Frac{1}{3d\epsilon}) \geq 1 - \exp(-\Frac{1}{9d^2\epsilon^2 2\sigma^2 L^2}) = 1 - \exp(-\Frac{1}{18d^2\epsilon^2\sigma^2 L^2})$
    \item $\PP(|A(r)_{ij}| \leq \Frac{1}{6d\epsilon^2}) \geq 1 - \exp(-\Frac{1}{72d^2\epsilon^4\sigma^2 L^4})$
    \item $\PP(|B(r)_{ij}| \leq \Frac{1}{6d^2L\epsilon^3}) \geq 1 - \exp(-\Frac{1}{72d^4\epsilon^6 2\sigma^2 L^6})$
\end{enumerate}

By performing a union bound on all conditions, we have that
\begin{align}
    \PP(|f|\leq 1) &\geq \Bigg(1 - \exp(-\Frac{1}{18\sigma^2})\Bigg)^{|T|}
    \Bigg(1 - \exp(-\Frac{1}{18d^2\epsilon^2\sigma^2 L^2})\Bigg)^{d|T|}
    \Bigg(1 - \exp(-\Frac{1}{72d^2\epsilon^4\sigma^2 L^4})\Bigg)^{d^2|T|}\\
    &\Bigg(1 - \exp(-\Frac{1}{72d^4\epsilon^6 \sigma^2 L^6})\Bigg)^{d^2|T|}\\
    &\geq 1 - |T|\exp(-\Frac{1}{18\sigma^2}) - d|T|\exp(-\Frac{1}{18d^2\epsilon^2\sigma^2 L^2}) - d^2|T|\exp(-\Frac{1}{72d^2\epsilon^4\sigma^2 L^4}) - d^2|T|\exp(-\Frac{1}{72d^4\epsilon^6 \sigma^2 L^6})
\end{align}

If $\epsilon \leq \Frac{1}{2^{1/2}dL}$, then $\exp(-\Frac{1}{18d^2\epsilon^2\sigma^2 L^2}),  \exp(-\Frac{1}{72d^2\epsilon^4\sigma^2 L^4}), \text{ and }\exp(-\Frac{1}{72d^4\epsilon^6 \sigma^2 L^6}) \leq \exp(-\Frac{1}{18\sigma^2})$

Therefore
\begin{align}
    \PP(|f|\leq 1) &\geq 1 - (1+d+2d^2)\Big(Ld2^{3/2}\pi\Big)^d \exp(-\Frac{1}{18\sigma^2})
\end{align}

To satisfy these inequalities, it is enough to have $\sigma^{-1}$ on the order of $\Theta(d\:(\log d + \log L))$

\end{proof}

%\bibliographystyle{IEEEtran}
%\bibliography{refs}

\end{document}